\let\endnote\@undefined\makeatother
\crefname{section}{Sect.}{Sects.}\Crefname{section}{Section}{Sections}
\crefname{figure}{Fig.}{Figs.}\Crefname{figure}{Figure}{Figures}
\crefname{table}{Tab.}{Tabs.}\Crefname{table}{Table}{Tables}
\crefname{appendix}{App.}{Apps.}\Crefname{appendix}{Appendix}{Appendices}
\crefname{enumi}{}{}\Crefname{enumi}{}{}
\crefname{definition}{Def.}{Defs.}\Crefname{definition}{Definition}{Definitions}
\crefname{lemma}{Lem.}{Lems.}\Crefname{lemma}{Lemma}{Lemmata}
\crefname{proposition}{Prop.}{Props.}\Crefname{proposition}{Proposition}{Propositions}
\crefname{theorem}{Thm.}{Thms.}\Crefname{theorem}{Theorem}{Theorems}
\crefname{example}{Ex.}{Exs.}\Crefname{example}{Example}{Examples}
\setlist[itemize]{leftmargin=*, label={--}}
\setlist[enumerate]{leftmargin=*}
\lstdefinelanguage{CASL}{%
  keywords={spec, then, end,
            from, get,
            sort, sorts,
            op, ops,
            pred, preds,
            free, type,
            axiom
  }
}
\lstdefinelanguage{UMLState}{%
  keywords = {%
    logic, spec, end, var, event, states, init, trans,
  }
}
\newcommand*{\axdot}{\boldsymbol{\cdot}}
\crefname{lstlisting}{Lst.}{Lsts.}\Crefname{lstlisting}{Listing}{Listings}
\algnewcommand\algorithmicrequirx{\phantom{\textbf{Require:}}}
\algnewcommand\Requirx{\item[\algorithmicrequirx]}
\algnewcommand\algorithmicbreak{\textbf{break}}
\algnewcommand\algorithmicchoose{\textbf{choose}}
\algnewcommand\Break{\algorithmicbreak{}}
\algnewcommand\Choose{\algorithmicchoose{}}
\algrenewcommand\alglinenumber[1]{\tiny#1}
\def\Statex{\item[]\vspace*{-2.4ex}}
\crefname{algorithm}{Alg.}{Algs.}\Crefname{algorithm}{Algorithm}{Algorithms}
\global\mdfdefinestyle{mystyle}{
  backgroundcolor=gray!10,
  linewidth=1pt,
  rightline=false,
  leftline=false,
  topline=false,
  bottomline=false,
  leftmargin=0pt,
  rightmargin=0pt,
  skipabove=0pt,
  skipbelow=0pt,
  innertopmargin=0pt,
  innerbottommargin=0pt,
  innerleftmargin=0pt,
  innerrightmargin=0pt,
}
\newif\ifshowednotes\showednotestrue
\newcommand*{\ednoteauthor}{EdNote}
\newcommand*{\ednotecomment}{No comment.}
\newcommand*{\myenotezwritemark}[1]{\leavevmode\marginpar{\pdftooltip{\footnotesize\ednoteauthor(#1)}{\ednotecomment}}}
\newcommand{\ednote}[2][Ednote]{%
  \ifshowednotes%
    \renewcommand*{\ednoteauthor}{#1}%
    \renewcommand*{\ednotecomment}{#2}%
    \bgroup%
      \fontsize{6pt}{6pt}\selectfont%
      \endnote{\ifthenelse{\equal{#1}{Ednote}}{#2}{#1: #2}}%
    \egroup%
  \fi%
}
\newcommand*{\NZ}{\mathbb{N}}
\newcommand*{\compfun}{\mathbin{\circ}}
\newcommand*{\isorel}{\mathrel{\cong}}
\newcommand*{\injto}{\mathrel{\rightarrowtail}}
\newcommand*{\truefrm}{\mathrm{true}}
\newcommand*{\falsefrm}{\mathrm{false}}
\newcommand*{\reductop}{\mathnormal{|}}
\newcommand*{\reduct}[2]{#1\reductop#2}
\newcommand*{\idop}{1}
\newcommand*{\id}[1]{\idop_{#1}}
\newcommand*{\opop}{\mathrm{op}}
\newcommand*{\op}[1]{{#1}^{\opop}}
\newcommand*{\natto}{\mathrel{\dot{\mathnormal{\to}}}}
\newcommand{\just}[2][\Leftrightarrow]{\\ & \llap{$#1$\ }{} \quad\{\;\text{\footnotesize #2}\;\}\\ &}
\newcommand{\stacked}[1]{\begin{array}[t]{@{}l@{}}#1\end{array}}
\mathchardef\cln\mathcode`\:
\newcommand{\splitatcommas}[1]{%
  \begingroup
  \ifnum\mathcode`,="8000
  \else
    \begingroup\lccode`~=`, \lowercase{\endgroup
      \edef~{\mathchar\the\mathcode`, \penalty0 \noexpand\hspace{0pt plus 1em}}%
    }\mathcode`,="8000
  \fi
  #1%
  \endgroup
}
\newcommand*{\CASL}{\ensuremath{\text{\normalfont\textsc{Casl}}}\xspace}
\newcommand*{\SPASS}{\ensuremath{\text{\normalfont\textsc{Spass}}}\xspace}
\newcommand*{\HeTS}{\ensuremath{\text{\normalfont\textsc{HeTS}}}\xspace}
\newcommand*{\UMLState}{\ensuremath{\text{\normalfont\textsc{UMLState}}}\xspace}
\newcommand*{\nfatslash}{%
  \mathrel{\mathnormal{\mathchoice%
    {\mkern-8mu\fatslash\mkern0mu}%
    {\mkern-8mu\fatslash\mkern0mu}%
    {\mkern-5mu\fatslash\mkern2mu}%
    {\mkern-4mu\fatslash\mkern1mu}%
  }}%
}
\newcommand*{\lab}[2]{#1\nfatslash#2}
\newcommand*{\labpre}[3]{#1\cln#2\nfatslash#3}
\newcommand*{\hybindop}{\mathnormal{\downarrow}}
\newcommand{\hybind}[2]{\hybindop #1 \,.\, #2}
\newcommand*{\hyatop}[1][]{\mathnormal{@}\ifthenelse{\isempty{#1}}{}{^{#1}}}
\newcommand{\hyat}[3][]{(\hyatop[#1]#2)#3}
\newcommand*{\dldiaop}[1]{\langle#1\rangle}
\newcommand{\dldia}[3]{\dldiaop{\lab{#1}{#2}}{#3}}
\newcommand*{\dlboxop}[1]{[#1]}
\newcommand{\dlbox}[3]{\dlboxop{\lab{#1}{#2}}{#3}}
\newcommand*{\dlwpopen}{\mathopen{{\langle}\mkern-3.6mu{|}}}
\newcommand*{\dlwpclose}{\mathclose{{|}\mkern-3.6mu{\rangle}}}
\newcommand*{\dlwpop}[1]{\dlwpopen#1\dlwpclose}
\newcommand{\dlwp}[4]{\dlwpop{\labpre{#1}{#2}{#3}}#4}
\newcommand*{\mlboxop}[1][]{\Box\ifthenelse{\isempty{#1}}{}{^{#1}}}
\newcommand{\mlbox}[2][]{\mlboxop[#1]{#2}}
\newcommand*{\mldiaop}[1][]{\Diamond\ifthenelse{\isempty{#1}}{}{^{#1}}}
\newcommand{\mldia}[2][]{\mldiaop[#1]{#2}}
\newcommand{\category}[1]{\ensuremath{\mathrm{#1}}}
\newcommand*{\Sig}[1][]{\mathbb{S}\ifthenelse{\isempty{#1}}{}{^{#1}}}
\newcommand*{\Sen}[1][]{\mathrm{Sen}\ifthenelse{\isempty{#1}}{}{^{#1}}}
\newcommand*{\Str}[1][]{\mathit{Str}\ifthenelse{\isempty{#1}}{}{^{#1}}}
\let\texmodels\models
\renewcommand*{\models}[1][]{\texmodels\ifthenelse{\isempty{#1}}{}{^{#1}}}
\newcommand*{\Fm}{\mathscr{F}}
\newcommand*{\Frm}[3][]{\Fm^{#1}_{#2, #3}}
\newcommand*{\Mod}[1][]{\mathrm{Mod}\ifthenelse{\isempty{#1}}{}{^{#1}}}
\newcommand*{\Pres}[1][]{\mathrm{Pres}\ifthenelse{\isempty{#1}}{}{^{#1}}}
\newcommand*{\Sign}{\mathit{Sig}}
\newcommand*{\Data}{\mathit{Dt}}
\newcommand*{\data}{\mathit{dt}}
\newcommand*{\DATA}{\mathcal{D}}
\newcommand*{\Attr}{A}
\newcommand*{\DataSt}{\Omega}
\newcommand*{\DataTr}{\DataSt^2}
\newcommand*{\HDL}{\mathcal{D}^{\downarrow}}
\newcommand*{\EDHDL}{\mathcal{E}^{\downarrow}}
\newcommand*{\EDHML}{\mathcal{M}^{\downarrow}_{\DATA}}
\newcommand*{\EDSign}{\Sigma}
\newcommand*{\Evt}{E}
\newcommand*{\Conf}{\Gamma}
\newcommand*{\CtrlSt}{C}
\newcommand*{\Trans}{T}
\newcommand*{\Rel}{R}
\newcommand*{\ctrlSt}{c}
\newcommand*{\dataSt}{\omega}
\newcommand*{\datapred}{\varphi}
\newcommand*{\evt}[1]{\mathsf{#1}}
\newcommand*{\attr}[1]{\mathsf{#1}}
\newcommand*{\state}[1]{\mathit{#1}}
\title{%
  Institution-based Encoding and Verification of\\
  Simple UML State Machines in CASL/SPASS
}
\author{%
  Tobias Rosenberger\inst{1,2}
\and
  Saddek Bensalem\inst{2}
\and\\
  Alexander Knapp\inst{3}
\and
  Markus Roggenbach\inst{1}
}
\institute{
  Swansea University, U.K.\\
  \email{$\{$t.rosenberger.971978$,$ m.roggenbach$\}$@swansea.ac.uk}\\[.5ex]
\and
  Université Grenoble Alpes, France\\
  \email{Saddek.Bensalem@imag.fr}\\[.5ex]
\and 
  Universität Augsburg, Germany\\
  \email{knapp@informatik.uni-augsburg.de}
}
\begin{document}

\maketitle

\begin{abstract}
We present a new approach on how to provide institution-based semantics for UML
state machines. Rather than capturing UML state machines directly as an
institution, we build up a new logical framework $\EDHML$ into which UML state
machines can be embedded. A theoroidal comorphism maps $\EDHML$ into the \CASL
institution.  This allows for symbolic reasoning on UML state machines. By
utilising the heterogeneous toolset \HeTS that supports \CASL, a broad range of
verification tools, including the automatic theorem prover \SPASS, can be
combined in the analysis of a single state machine.
\end{abstract}

\section{Introduction}

As part of a longstanding line of research
\cite{knapp-mossakowski-roggenbach:wirsing-festschrift:2015,knapp-et-al:fase:2015,rosenberger:master:2017,knapp-mossakowski:calco:2017},
we set out on a general programme to bring together multi-view system
specification with UML diagrams and heterogeneous specification and verification
based on institution theory, giving the different system views both a joint
semantics and richer tool support.

Institutions, a formal notion of a logic, are a principled way of
creating such joint semantics. They make moderate assumptions about
the data constituting a logic, give uniform notions of well-behaved
translations between logics and, given a graph of such translations,
automatically give rise to a joint institution.

In this paper, we will focus on UML state machines, which are an
object-based variant of Harel statecharts. Within the UML, state
machines are a central means to specify system behaviour. Here, we
capture simple UML state machines in what we claim to be a true
semantical sense. Focus of this paper are state machines running in
isolation --- interacting state machines and with it the notion of the
event pool are left to future work.

Compared to our previous attempts to institutionalise state machines
\cite{knapp-mossakowski-roggenbach:wirsing-festschrift:2015,knapp-et-al:fase:2015,rosenberger:master:2017,knapp-mossakowski:calco:2017},
this paper takes a different approach. Rather than capturing UML state
machines directly as an institution, we build up a new logical
framework $\EDHML$ in which UML state machines can be embedded. Core
of this framework is a new hybrid modal logic which allows us to
logically encode the \emph{presence} as well as the \emph{absence} of
transitions in the state machines. Data types, guards, and effects of
events are specified in the algebraic specification language \CASL. An
algorithm translates UML state machines into $\EDHML$.

A theoroidal comorphism maps our logical framework $\EDHML$ into the \CASL
institution. This allows to us to utilise the heterogeneous toolset
\HeTS~\cite{mossakowski-maeder-luettich:tacas:2007} and its connected provers
for analysing UML state machines. In this paper we demonstrate how to analyse a
state machine with the automatic first-order prover
\SPASS~\cite{weidenbach-et-al:cade:2009}, which is the default automated prover
of \HeTS.  Such symbolic reasoning can be of advantage as, in principle, it
allows to verify properties of UML state machines with large or infinite state
spaces. Such machines appear routinely in system modelling: though state
machines usually have only finitely many control states, they have a large
number of configurations, or even infinitely many, due to the data variables
involved.

Compared to other symbolic approaches to directly encode UML state machines into
a specific interactive theorem prover
\cite{kyas-et-al:sfedl:2004,groenniger:phd:2010,balser-et-al:icfem:2004}, our logical
framework $\EDHML$ provides first an institutional semantics that is tool
independent. Only in a second step, we translate $\EDHML$ into \CASL. Via \HeTS,
this opens access to a broad range of analysis tools, including SAT solvers,
automatic first-order theorem provers, automated and interactive higher-order
theorem provers, which all can be combined in the analysis of state machines.

This paper is organised as follows: First we provide some background
on institutions, including the \CASL institution in
\cref{sec:background}. Then we discuss simple UML state machines, how
to capture their events, attributes, and transitions, and what their
models are. In \cref{sec:edhml} we define a new hybrid, modal logic
for specifying UML state machine transitions. \Cref{sec:comorphism}
provides the translation into the \CASL institution. In
\cref{sec:proving}, we finally demonstrate the symbolic analysis of a
simple UML state machine as enabled by the previous constructions.  We
conclude in \cref{sec:conclusions} with an outlook to future work. 

\section{Background on Institutions}\label{sec:background}

We briefly recall the basic definitions of institutions and theoroidal
institution comorphisms as well as the algebraic specification language \CASL.
Subsequently we will develop an institutional frame for capturing simple UML
state machines and present a theoroidal institution comorphism from this frame
into \CASL.

\subsection{Institutions and Theoroidal Institution Comorphisms}

Institutions are an abstract formalisation of the notion of logical systems
combining signatures, structures, sentences, and satisfaction under the slogan
``truth is invariant under change of notation''.  Institutions can be related in
different ways by institution (forward) (co-)morphisms, where a so-called
theoroidal institution comorphism covers a particular case of encoding a
``poorer'' logic into a ``richer'' one.

Formally~\cite{goguen-burstall:acm:1992}, an institution $\mathcal{I} =
(\Sig[\mathcal{I}], \Str[\mathcal{I}], \Sen[\mathcal{I}],
{\models[\mathcal{I}]})$ consists of (i)~a category of \emph{signatures}
$\Sig[\mathcal{I}]$; (ii)~a contravariant \emph{structures functor}
$\Str[\mathcal{I}] : \op{(\Sig[\mathcal{I}])} \to \category{Cat}$, where
$\category{Cat}$ is the category of (small) categories; (iii)~a \emph{sentence
  functor} $\Sen[\mathcal{I}] : \Sig[\mathcal{I}] \to \category{Set}$, where
$\category{Set}$ is the category of sets; and (iv)~a family of
\emph{satisfaction relations} ${\models[\mathcal{I}]_{\Sigma}} \subseteq
|\Str[\mathcal{I}](\Sigma)| \times \Sen[\mathcal{I}](\Sigma)$ indexed over
$\Sigma \in |\Sig[\mathcal{I}]|$, such that the following \emph{satisfaction
  condition} holds for all $\sigma : \Sigma \to \Sigma'$ in $\Sig[\mathcal{I}]$,
$\varphi \in \Sen[\mathcal{I}](\Sigma)$, and $M' \in
|\Str[\mathcal{I}](\Sigma')|$:
\begin{equation*}
  \Str[\mathcal{I}](\sigma)(M') \models[\mathcal{I}]_{\Sigma} \varphi
\ \iff\ 
  M' \models[\mathcal{I}]_{\Sigma'} \Sen[\mathcal{I}](\sigma)(\varphi)
\ \text{.}
\end{equation*}
$\Str[\mathcal{I}](\sigma)$ is called the \emph{reduct} functor,
$\Sen[\mathcal{I}](\sigma)$ the \emph{translation} function.

A \emph{theory presentation} $T = (\Sigma, \Phi)$ in the institution
$\mathcal{I}$ consists of a signature $\Sigma \in |\Sig[\mathcal{I}]|$, also
denoted by $\Sign(T)$, and a set of sentences $\Phi \subseteq
\Sen[\mathcal{I}](\Sigma)$.  Its \emph{model class} $\Mod[\mathcal{I}](T)$ is
the class $\{ M \in \Str[\mathcal{I}](\Sigma) \mid M \models[\mathcal{I}]_{\Sigma} \varphi \text{ f.\,a.\ } \varphi \in \Phi \}$ of the $\Sigma$-structures
satisfying the sentences in $\Phi$.  A \emph{theory presentation morphism}
$\sigma : (\Sigma, \Phi) \to (\Sigma', \Phi')$ is given by a signature morphism
$\sigma : \Sigma \to \Sigma'$ such that $M' \models[\mathcal{I}]_{\Sigma'}
\Sen[\mathcal{I}](\sigma)(\varphi)$ for all $\varphi \in \Phi$ and $M' \in
\Mod[\mathcal{I}](\Sigma', \Phi')$.  Theory presentations in $\mathcal{I}$ and
their morphisms form the category $\Pres[\mathcal{I}]$.

A \emph{theoroidal institution comorphism} $\nu = (\nu^{\Sig},\allowbreak
\mu^{\Mod},\allowbreak \nu^{\Sen}) : \mathcal{I} \to \mathcal{I}'$ consists of a
functor $\nu^{\Sig} : \Sig[\mathcal{I}] \to \Pres[\mathcal{I}']$ inducing the
functor $\nu^{\Sign} = \nu^{\Sig}; \Sign : \Sig[\mathcal{I}] \to
\Sig[\mathcal{I}']$ on signatures, a natural transformation $\nu^{\Mod} :
\op{(\nu^{\Sig})}; \Mod[\mathcal{I}'] \natto \Str[\mathcal{I}]$ on structures,
and a natural transformation $\nu^{\Sen} : \Sen[\mathcal{I}] \natto \nu^{\Sign};
\Sen[\mathcal{I}']$ on sentences, such that for all $\Sigma \in
|\Sig[\mathcal{I}]|$, $M' \in |\Mod[\mathcal{I}'](\nu^{\Sig}(\Sigma))|$, and
$\varphi \in \Sen[\mathcal{I}](\Sigma)$ the following \emph{satisfaction
  condition} holds:
\begin{equation*}
  \nu^{\Mod}_{\Sigma}(M') \models[\mathcal{I}]_{\Sigma} \varphi
\iff
  M' \models[\mathcal{I'}]_{\nu^{\Sign}(\Sigma)} \nu^{\Sen}(\Sigma)(\varphi)
\ \text{.}
\end{equation*}

\subsection{\CASL and the Institution CFOL$^=$}

The algebraic specification language \CASL~\cite{mosses:2004} offers
several specification levels: \emph{Basic specifications} essentially
list signature declarations and axioms, thus determining a category of
first-order structures.  \emph{Structured specifications} serve to
combine such basic specifications into larger specifications in a
hierarchical and modular fashion.  Of the many logics available in
\CASL, we will work with the institution CFOL$^=$, of which we briefly
recall the main notions; a detailed account can be found e.g.\ in
\cite{mossakowski:tcs:2002}.

\begin{figure}[!t]
\begin{hetcasl}
\SPEC \=\SIdIndex{Nat} \Ax{=}\\
\> \KW{free} \KW{type} \Id{Nat} \Ax{\cln\cln=} \Ax{0} \AltBar{} \Id{suc}(\Id{Nat})\\
\> \OPS \Ax{\_\_}\Ax{+}\Ax{\_\_} \Ax{\cln} \Id{Nat} \Ax{\times} \Id{Nat} \Ax{\rightarrow} \Id{Nat} \\
\> \PRED \Ax{\_\_}\Ax{<}\Ax{\_\_} \Ax{\cln} \Id{Nat} \Ax{\times} \Id{Nat}\\
\> \Ax{\forall} \Id{n}, \Id{m} \Ax{\cln} \Id{Nat}
\= \Ax{\axdot} \Ax{0} \Ax{+} \Id{n} \Ax{=} \Id{n} \quad
\= \Ax{\axdot} \Id{suc}(\Id{n}) \Ax{+} \Id{m} \Ax{=} \Id{suc}(\Id{n} \Ax{+} \Id{m}) \quad\=\\
\>\> \Ax{\axdot} \Ax{\neg} \Id{n} \Ax{<} \Ax{0}
\> \Ax{\axdot} \Ax{0} \Ax{<} \Id{suc}(\Id{n})
\> \Ax{\axdot} \Id{suc}(\Id{m}) \Ax{<} \Id{suc}(\Id{n}) \Ax{\Leftrightarrow} \Id{m} \Ax{<} \Id{n}\\
\KW{end}
\end{hetcasl}
\vskip-18pt
\caption{A \CASL specification of the natural numbers}\label{lst:nat}
\end{figure}
At the level of basic specifications, cf.\ \cref{lst:nat}, one can
declare \textit{sort}s, \textit{op}erations, and \textit{pred}icates
with given argument and result sorts.  Formally, this defines a
\emph{many-sorted signature} $\Sigma = (S, F, P)$ with a set $S$ of
sorts, a $S^*\times S$-sorted family $F = (F_{w, s})_{w\,s \in S^+}$
of \emph{total function symbols}, and a $S^*$-sorted family $P =
(P_w)_{w \in S^*}$ of \emph{predicate symbols}.  Using these symbols,
one may then write axioms in first-order logic. Moreover, one can
specify data \textit{type}s, given in terms of alternatives consisting
of data constructors and, optionally, selectors, which may be declared
to be \textit{generated} or \textit{free}. Generatedness amounts to an
implicit higher-order induction axiom and intuitively states that all
elements of the data types are reachable by constructor terms (``no
junk''); freeness additionally requires that all these constructor
terms are distinct (``no confusion''). Basic \CASL specifications
denote the class of all algebras which fulfil the declared axioms,
i.e., \CASL has loose semantics.  In structured \CASL specifications,
a \emph{structured free} construct can be used to ensure freeness
(i.e., initial semantics) of a specification. For functions and
predicates, the effect of the structured free construct corresponds to
the effect of free types on sorts. A \emph{many-sorted
  $\Sigma$-structure $M$} consists of a non-empty carrier set $s^M$
for each $s \in S$, a total function $f^M : M_w \to M_s$ for each
function symbol $f \in F_{w, s}$ and a predicate $p^M$ for each
predicate symbol $p \in P_w$.  A \emph{many-sorted $\Sigma$-sentence}
is a closed many-sorted first-order formula over $\Sigma$ or a sort
generation constraint.

\section{Simple UML State Machines}\label{sec:running_example}

UML state machines~\cite{uml-2.5.1} provide means to specify the reactive
behaviour of objects or component instances.  These entities hold an internal
data state, typically given by a set of attributes or properties, and shall
react to event occurrences by firing different transitions in different control
states.  Such transitions may have a guard depending on event arguments and the
internal state and may change, as an effect, the internal control and data state
of the entity as well as raise events on their own.

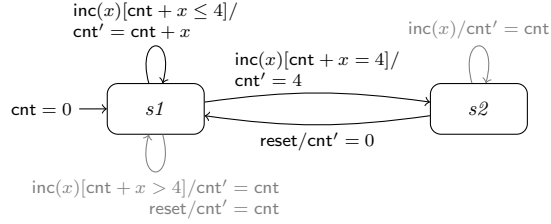
\begin{figure}[!t]\centering
\begin{tikzpicture}[scale=.85, transform shape]
\tikzset{
  every node/.append style={align=left},
  every state/.append style={rectangle, rounded corners, minimum width=1.5cm},
}
\node[state, initial text={\fontsize{8pt}{8pt}\selectfont$\attr{cnt} = 0$}, initial left] (s1) {$\state{s1}$};
\node[state, right of=s1, node distance=5.0cm] (s2) {$\state{s2}$};
\path[->,font={\fontsize{8pt}{8pt}\selectfont}] 
  (s1) edge[bend left=8] node[anchor=base, above] {$\evt{inc}(x) [\attr{cnt} + x = 4] /$\\ $\attr{cnt}' = 4$} (s2)
  (s1) edge[loop above, min distance=8mm] node[anchor=base, above] {$\evt{inc}(x) [\attr{cnt} + x \leq 4] /$\\ $\attr{cnt}' = \attr{cnt} + x$} (s1)
  (s1) edge[color=gray, loop below, min distance=8mm] node[anchor=north, below, align=right] {$\evt{inc}(x) [\attr{cnt} + x > 4] / \attr{cnt}' = \attr{cnt}$\\ $\evt{reset} / \attr{cnt}' = \attr{cnt}$} (s1)
  (s2) edge[bend left=8] node[anchor=top, below] {$\evt{reset} / \attr{cnt}' = 0$} (s1)
  (s2) edge[color=gray, loop above, min distance=8mm] node[anchor=base, above] {$\evt{inc}(x) / \attr{cnt}' = \attr{cnt}$} (s2)
;
\end{tikzpicture}
\vskip-8pt
\caption{Simple UML state machine $\mathit{Counter}$}%
\label{fig:uml-counter}%
\end{figure}
\Cref{fig:uml-counter} shows the example of a bounded, resettable
counter working on an attribute $\attr{cnt}$ (assumed to take values
in the natural numbers) that is initialised with $0$.  The counter can
be $\evt{reset}$ to $0$ or $\evt{inc}$reased by a natural number $x$,
subject to the current control state ($\state{s1}$ or $\state{s2}$)
and the guards (shown in square brackets) and effects (after the
slash) of the outgoing transitions.  An effect describes how the data
state before firing a transition (referred to by unprimed attribute
names) relates to the data state after (primed names) in a single
predicate; this generalises the more usual sequences of assignments
such that $\attr{cnt}' = \attr{cnt} + x$ corresponds to $\attr{cnt}
\gets \attr{cnt} + x$ and $\attr{cnt}' = \attr{cnt}$ to a skip.  The
machine is specified non-deterministically: If event $\evt{inc}(x)$
occurs in state $\state{s1}$ such that the guard $\attr{cnt} + x = 4$
holds, the machine can either stay in $\state{s1}$ or it can proceed
to $\state{s2}$.  Seemingly, the machine does not react to
$\evt{reset}$ in $\state{s1}$ and to $\evt{inc}$ in $\state{s2}$.
However, UML state machines are meant to be input-enabled such that
all event occurrences to which the machine shows no explicit reacting
transition are silently discarded, as indicated by the ``grey''
transitions.  Overall, the machine $\mathit{Counter}$ shall ensure
that $\attr{cnt}$ never exceeds $4$.

It is for such simple UML state machines as the counter in
\cref{fig:uml-counter} that we want to provide proof support in \SPASS
via an institutional encoding in \CASL.  The sub-language covers the
following fundamental state machine features: data, states, and
(non-deterministic) guarded transitions for reacting to events.
However, for the time being, we leave out not only all advanced
modelling constructs, like hierarchical states or compound
transitions, but also defer, most importantly, event-based
communication between state machines to future work.  In the following
we make first precise the syntax of the machines by means of
event/data signatures, data states and transitions, guards and
effects.  Then we introduce semantic structures for the machines and
define their model class. Syntax and semantics of simple UML
state machines form the basis for their institutionalisation.  We thus
also introduce event/data signature morphisms and the corresponding
formulæ translation and structure reducts in order to be able to
change the interface of simple UML state machines.

\subsection{Event/Data Signatures, Data States and Transitions}

We capture the events for a machine in an \emph{event signature} $E$ that
consists of a finite set of events $|E|$ and a map $\upsilon(E)$ assigning to
each $e \in |E|$ a finite set of variables, where we write $e(X)$ for $e \in
|E|$ and $\upsilon(E)(e) = X$, and also $e(X) \in E$ in this case.  For the data
state, we use a \emph{data signature} $A$ consisting of a finite set of
attributes.  An \emph{event/data signature} $\Sigma$ consists of an event
signature $\Evt(\Sigma)$ and a data signature $\Attr(\Sigma)$.

\begin{example}\label{ex:edhml-signatures}
The event/data signature $\Sigma$ of the simple UML state machine in
\cref{fig:uml-counter} is given by the set of events $|\Evt(\Sigma)| = \{
\evt{inc}, \evt{reset} \}$ with argument variables
$\upsilon(\Evt(\Sigma))(\evt{inc}) = \{ x \}$ and
$\upsilon(\Evt(\Sigma))(\evt{reset}) = \emptyset$ such that $\evt{inc}(x) \in
\Evt(\Sigma)$ and $\evt{reset} \in \Evt(\Sigma)$; as well as the data signature
$\Attr(\Sigma) = \{ \attr{cnt} \}$.
\end{example}

For specifying transition guards and effects, we exchange UML's notorious and
intricate expression and action languages both syntactically and semantically by
a straightforward \CASL fragment rendering guards as data state predicates and
effects as data transition predicates: We assume given a fixed universe $\DATA$
of \emph{data values} and a \CASL specification $\Data$ with a dedicated sort
$\data$ in its signature $\Sign(\Data)$ such that the universe $\data^M$ of
every model $M \in \Mod[\CASL](\Data)$ is isomorphic to $\DATA$, i.e., there is
a bijection $\iota_{M, \data} : \data^M \isorel \DATA$. This puts at our
disposal the open formulæ $\Frm[\CASL]{\Sign(\Data)}{X}$ over sorted variables
$X = (X_s)_{s \in S}$ and their satisfaction relation $M, \beta
\models[\CASL]_{\Sign(\Data), X} \varphi$ for models $M \in \Mod[\CASL](\Data)$,
variable valuations $\beta : X \to M$, and formulæ $\varphi \in
\Frm[\CASL]{\Sign(\Data)}{X}$.

\begin{example}
Consider the natural numbers $\NZ$ as data values $\DATA$. The \CASL
specification in \cref{lst:nat} characterises $\NZ$ up to isomorphism
as the carrier set of the dedicated sort $\data = \Id{Nat}$. It
specifies an abstract data type with sort $\Id{Nat},$ operations $+, 0,
\Id{suc},$ and a predicate $<.$

\end{example}

The very simple choice of $\DATA$ capturing data with only a single sort can, in
principal, be replaced by any institutional data modelling language that, for
our purposes of a theoroidal institution comorphism (see \cref{sec:comorphism}),
is faithfully representable in \CASL; one such possibility are UML class
diagrams, see~\cite{james-et-al:wadt:2012}.

\paragraph{Data states and guards.}
A \emph{data state} $\omega$ for a data signature $A$ is given by a function
$\omega : A \to \DATA$; in particular, $\DataSt(A) = \DATA^{A}$ is the set of
$A$"=data states.  The guards of a machine are \emph{state predicates} in
$\Frm[\DATA]{A}{X} = \Frm[\CASL]{\Sign(\Data)}{A \cup X}$, taking $A$ as well as
an additional set $X$ as variables of sort $\data$.  A state predicate $\phi \in
\Frm[\DATA]{A}{X}$ is to be interpreted over an $A$"=data state $\omega$ and
valuation $\beta : X \to \DATA$ and we define the \emph{satisfaction relation}
$\models[\DATA]$ by
\begin{equation*}
  \omega, \beta \models[\DATA]_{A, X} \phi
\iff
  M, \iota_{M, \data}^{-1} \compfun (\omega \cup \beta) \models[\CASL]_{\Sign(\Data), A \cup X} \phi
\end{equation*}
where $M \in \Mod[\CASL](\Data)$ and $\iota_{M, \data} : M(\data) \isorel
\DATA$.  For a state predicate $\varphi \in \Frm[\DATA]{A}{\emptyset}$ not
involving any variables, we write $\omega \models[\DATA]_A \varphi$ for $\omega
\models[\DATA]_{A, \emptyset} \varphi$.

\begin{example}
The guard $\attr{cnt} + x \leq 4$ of the machine in \cref{fig:uml-counter}
features both the attribute $\attr{cnt}$ and the variable $x$. A data state
fulfilling this state predicate for $x = 0$ is $\attr{cnt} \mapsto 3$.
\end{example}

\paragraph{Data transitions and effects.}
A \emph{data transition} $(\omega, \omega')$ for a data signature $A$ is a pair
of $A$-data states; in particular, $\DataTr(A) = (\DATA^{A})^2$ is the set of
$A$"=data transitions.  It holds that $(\DATA^{A})^2 \isorel \DATA^{2A}$, where
$2A = A \uplus A$ and we assume that no attribute in $A$ ends in a prime
$\prime$ and all attributes in the second summand are adorned with an additional
prime.  The effects of a machine are \emph{transition predicates} in
$\Frm[2\DATA]{A}{X} = \Frm[\DATA]{2A}{X}$.  The satisfaction relation
$\models[2\DATA]$ for a transition predicate $\psi \in \Frm[2\DATA]{A}{X}$, data
transition $(\omega, \omega') \in \DataTr(A)$, and valuation $\beta : X \to
\DATA$ is defined as
\begin{equation*}
  (\omega, \omega'), \beta \models[2\DATA]_{A, X} \psi
\iff
  \omega + \omega', \beta \models[\DATA]_{2A, X} \psi
\end{equation*}
where $\omega + \omega' \in \DataSt(2A)$ with $(\omega + \omega')(a) =
\omega(a)$ and $(\omega + \omega')(a') = \omega'(a)$.

\begin{example}
The effect $\attr{cnt}' = \attr{cnt} + x$ of the machine in
\cref{fig:uml-counter} describes the increment of the value of attribute
$\attr{cnt}$ by a variable amount $x$.
\end{example}

\subsection{Syntax of Simple UML State Machines}

A simple UML state machine $U$ uses an event/data signature
$\EDSign(U)$ for its events and attributes and consists of a finite
set of \emph{control states} $\CtrlSt(U)$, a finite set of
\emph{transition specifications} $\Trans(U)$ of the form $(c, \phi,
e(X), \psi, c')$ with $c, c' \in \CtrlSt(U)$, $e(X) \in
\Evt(\EDSign(U))$, a state predicate $\phi \in
\Frm[\DATA]{\Attr(\EDSign(U))}{X}$, a transition predicate $\psi \in
\Frm[2\DATA]{\Attr(\EDSign(U))}{X}$, an \emph{initial control state}
$\ctrlSt_0(U) \in \CtrlSt(U)$, and an \emph{initial state predicate}
$\datapred_0(U) \in \Frm[\DATA]{\Attr(\EDSign(U))}{\emptyset}$, such
that $\CtrlSt(U)$ is \emph{syntactically reachable}, i.e., for every
$c \in \CtrlSt(U) \setminus \{ \ctrlSt_0(U) \}$ there are
$(\ctrlSt_0(U),\allowbreak \phi_1,\allowbreak e_1(X_1),\allowbreak
\psi_1,\allowbreak c_1), \ldots, (c_{n-1},\allowbreak
\phi_n,\allowbreak e_n(X_n),\allowbreak \psi_n,\allowbreak c_n) \in
\Trans(U)$ with $n > 0$ such that $c_n = c$.
Syntactic reachability guarantees initially connected state machine graphs. 
This simplifies graph-based algorithms (see \cref{alg:sen-op-spec}).

\begin{example}
The machine in \cref{fig:uml-counter} has as its control states $\{ \state{s1},
\state{s2} \}$, as its transition specifications $\splitatcommas{\{ (\state{s1},
  \attr{cnt} + x \leq 4, \evt{inc}(x), \attr{cnt}' = \attr{cnt} + x,
  \state{s1}), (\state{s1}, \attr{cnt} + x = 4, \evt{inc}(x), \attr{cnt}' = 4,
  \state{s2}), (\state{s2}, \mathrm{true}, \evt{reset}, \attr{cnt}' = 0,
  \state{s1}) \}}$, as initial control state $\state{s1}$, and as initial state
predicate $\attr{cnt} = 0$.
\end{example}

\subsection{Event/Data Structures and Models of Simple UML State Machines}

For capturing machines semantically, we use event/data structures that are given
over an event/data signature $\Sigma$ and consist of a transition system of
configurations such that all configurations are reachable from its initial
configurations.  Herein, configurations show a control state, corresponding to
machine states, and a data name from which a proper data state over
$\Attr(\Sigma)$ can be retrieved by a labelling function.  Transitions connect
configurations by events from $\Evt(\Sigma)$ with their arguments instantiated
by data from $\DATA$.

Formally, a $\Sigma$"=\emph{event/data structure} $M = (\Gamma, R, \Gamma_0,
\omega)$ over an event/data signature $\Sigma$ consists of a set of
\emph{configurations} $\Gamma \subseteq C \times D$ for some sets of
\emph{control states} $C$ and \emph{data names} $D$, a family of
\emph{transition relations} $R = (R_{e(\beta)} \subseteq \Gamma \times
\Gamma)_{e(X) \in \Evt(\Sigma), \beta : X \to \DATA}$, and a non-empty set of
\emph{initial configurations} $\Gamma_0 = \{ c_0 \} \times D_0 \subseteq \Gamma$
with a unique \emph{initial control state} $c_0 \in C$ such that $\Gamma$ is
\emph{reachable} via $R$, i.e., for all $\gamma \in \Gamma$ there are $\gamma_0
\in \Gamma_0$, $n \geq 0$, $e_1(X_1),\allowbreak \ldots,\allowbreak e_n(X_n) \in
\Evt(\Sigma)$, $\beta_1 : X_1 \to \DATA, \ldots, \beta_n : X_n \to \DATA$, and
$(\gamma_i, \gamma_{i+1}) \in R_{e_{i+1}(\beta_{i+1})}$ for all $0 \leq i < n$
with $\gamma_n = \gamma$; and a \emph{data state labelling} $\omega : D \to
\DataSt(\Attr(\Sigma))$.  We write $\ctrlSt(M)(\gamma) = c$ and
$\dataSt(M)(\gamma) = \omega(d)$ for $\gamma = (c, d) \in \Gamma$, $\Conf(M)$
for $\Gamma$, $\CtrlSt(M)$ for $\{ \ctrlSt(M)(\gamma) \mid \gamma \in \Conf(M)
\}$, $\Rel(M)$ for $R$, $\Conf_0(M)$ for $\Gamma_0$, $\ctrlSt_0(M)$ for $c_0$,
and $\DataSt_0(M)$ for $\{ \omega(M)(\gamma_0) \mid \gamma_0 \in \Gamma_0 \}$.

The restriction to reachable transition systems is not strictly necessary and
could be replaced by constraining all statements on event/data structures to
take into account only their reachable part (see, e.g., \cref{lem:sat-cond}).

\begin{example}\label{ex:edhml-structures}
For an event/data structure for the machine in \cref{fig:uml-counter} over its
signature $\Sigma$ in \cref{ex:edhml-signatures} we may choose the control
states $C$ as $\{ \state{s1}, \state{s2} \}$, and the data names $D$ as the
set $\DataSt(\Attr(\Sigma)) = \DATA^{\{ \attr{cnt} \}}$.  In particular,
the data state labelling $\omega$ is just the identity.  The only initial
configuration is $(\state{s1}, \{ \attr{cnt} \mapsto 0 \})$.  A possible
transition goes from configuration $(\state{s1}, \{ \attr{cnt} \mapsto 2 \})$ to
configuration $(\state{s2}, \{ \attr{cnt} \mapsto 4 \})$ with the instantiated
event $\evt{inc}(2)$.
\end{example}

A $\EDSign(U)$"=event/data structure $M$ is a \emph{model} of a simple UML state
machine $U$ if $\CtrlSt(U) \subseteq \CtrlSt(M)$ up to a bijective renaming,
$\ctrlSt_0(M) = \ctrlSt_0(U)$, $\DataSt_0(M) \subseteq \{ \omega \in
|\DataSt(\Attr(\EDSign(U)))| \mid \omega \models[\DATA]_{\Attr(\EDSign(U))}
\datapred_0(U) \}$, and if the following holds for all $(c, d) \in \Conf(M)$:
\begin{itemize}
  \item for all $(c, \phi, e(X), \psi, c') \in \Trans(U)$ and $\beta : X \to
\DATA$ with $\dataSt(M)(d), \beta \models[\DATA]_{\Attr(\EDSign(U)), X}
\phi$, there is a $((c, d),\allowbreak (c', d')) \in \Rel(M)_{e(\beta)}$ with
$(\dataSt(M)(d),\allowbreak \dataSt(M)(d')), \beta
\models[2\DATA]_{\Attr(\EDSign(U)), X} \mkern-2mu\psi$;

  \item for all $((c, d), (c', d')) \in \Rel(M)_{e(\beta)}$ there is either some
$(c, \phi, e(X), \psi, c') \in \Trans(U)$ with $\dataSt(M)(d), \beta
\models[\DATA]_{\Attr(\EDSign(U)), X} \phi$ and $(\dataSt(M)(d),\allowbreak
\dataSt(M)(d')), \beta \models[2\DATA]_{\Attr(\EDSign(U)), X} \psi$, or
$\dataSt(M)(d), \beta \not\models[\DATA]_{\Attr(\EDSign(U)), X} \mkern-3mu\bigvee_{(c, \phi,
  e(X), \psi, c') \in \Trans(U)} \phi$, $c = c'$, and $\dataSt(M)(d) =
\dataSt(M)(d')$.
\end{itemize}

A model of $U$ thus on the one hand implements each transition prescribed by
$U$, but on the other hand must not show transitions not covered by the
specified transitions.  Moreover, it is \emph{input-enabled}, i.e., every event
can be consumed in every control state: If no precondition of an explicitly
specified transition is satisfied, there is a self-loop which leaves the data
state untouched.  In fact, input-enabledness, as required by the UML
specification~\cite{uml-2.5.1}, can also be rendered as a syntactic
transformation making a simple UML state machine $U$ input-enabled by adding the
following set of transition specifications for idling self-loops:
\begin{equation*}\textstyle
  \{ (c, \neg(\bigvee_{(c, \phi, e(X), \psi, c') \in \Trans(U)} \phi), e(X), \id{\Attr(\EDSign(U))}, c) \mid c \in C,\ e(X) \in \Evt(\EDSign(U)) \}
\ \text{.}
\end{equation*}

\begin{example}
For the simple UML state machine in \cref{fig:uml-counter} the ``grey''
transitions correspond to an input-enabledness completion w.r.t.\ the ``black''
transitions.
\end{example}

The requirement of syntactic reachability for simple UML state machines is
correlated with the requirement of (semantic) reachability of event/data
structures, as a machine violating syntactic reachability cannot have a model.
Equally, a machine with a non-satisfiable initial state predicate fails to have
a model.

\subsection{Event/Data Signature Morphisms, Reducts, and Translations}\label{sec:ed-morphisms}

The external interface of a simple UML state machine is given by
events, its internal interface by attributes. Both interfaces,
represented as an event/data signature, are susceptible to change in
the system development process which is captured by signature
morphisms.  Such changes have also to be reflected in the guards and
effects, i.e., data state and transition predicates, by syntactical
translations as well as in the interpretation domains by semantical
reducts.

A \emph{data signature morphism} from a data signature $A$ to a data signature
$A'$ is a function $\alpha : A \to A'$.  The \emph{$\alpha$-reduct} of an
$A'$-data state $\omega' : A' \to \DATA$ along a data signature morphism $\alpha
: A \to A'$ is given by the $A$-data state $\reduct{\omega'}{\alpha} : A \to
\DATA$ with $(\reduct{\omega'}{\alpha})(a) = \omega'(\alpha(a))$ for every $a
\in A$; the \emph{$\alpha$-reduct} of an $A'$-data transition $(\omega',
\omega'')$ by the $A$-data transition $\reduct{(\omega', \omega'')}{\alpha} =
(\reduct{\omega'}{\alpha}, \reduct{\omega''}{\alpha})$.  The \emph{state
  predicate translation} $\Frm[\DATA]{\alpha}{X} : \Frm[\DATA]{A}{X} \to
\Frm[\DATA]{A'}{X}$ along a data signature morphism $\alpha : A \to A'$ is given
by the \CASL-formula translation $\Frm[\CASL]{\Sign(\Data)}{\alpha \cup \id{X}}$
along the substitution $\alpha \cup \id{X}$; the \emph{transition predicate
  translation} $\Frm[2\DATA]{\alpha}{X}$ by $\Frm[\DATA]{2\alpha}{X}$ with
$2\alpha : 2A \to 2A'$ defined by $2\alpha(a) = \alpha(a)$ and $2\alpha(a') =
\alpha(a)'$.  For each of these two reduct-translation-pairs the
\emph{satisfaction condition} holds due to the general substitution lemma for
\CASL:
\begin{align*}
  \reduct{\omega'}{\alpha}, \beta \models[\DATA]_{A, X} \phi
&\iff
  \omega', \beta \models[\DATA]_{A', X} \Frm[\DATA]{\alpha}{X}(\phi)
\\
  \reduct{(\omega', \omega'')}{\alpha}, \beta \models[2\DATA]_{A, X} \psi
&\iff
  (\omega', \omega''), \beta \models[2\DATA]_{A', X} \Frm[2\DATA]{\alpha}{X}(\psi)
\end{align*}

An \emph{event signature morphism} $\eta : \Evt \to \Evt'$ is a function $\eta :
|\Evt| \to |\Evt'|$ such that $\upsilon(\Evt)(e) = \upsilon(\Evt')(\eta(e))$ for
all $e \in |\Evt|$.  An \emph{event/data signature morphism} $\sigma : \Sigma
\to \Sigma'$ consists of an event signature morphism $\Evt(\sigma) : \Evt(\Sigma)
\to \Evt(\Sigma')$ and a data signature morphism $\Attr(\sigma) : \Attr(\Sigma)
\to \Attr(\Sigma')$.  The \emph{$\sigma$-reduct} of a $\Sigma'$-event/data
structure $M'$ along $\sigma$ is the $\Sigma$-event/data structure
$\reduct{M'}{\sigma}$ such that
\begin{itemize}
  \item $\Conf(\reduct{M'}{\sigma}) \subseteq \Conf(M')$ as well as
$\Rel(\reduct{M'}{\sigma}) = (\Rel(\reduct{M'}{\sigma})_{e(\beta)})_{e(X) \in
  \Evt(\Sigma), \beta : X \to \DATA}$ are inductively defined by
$\Conf(\reduct{M'}{\sigma}) \supseteq \Conf_0(M')$ and, for all $\gamma',
\gamma'' \in \Conf(M')$, $e(X) \in \Evt(\Sigma)$, and $\beta : X \to \DATA$, if
$\gamma' \in \Conf(\reduct{M'}{\sigma})$ and $(\gamma', \gamma'') \in
\Rel(M')_{\Evt(\sigma)(e)(\beta)}$, then $\gamma'' \in
\Conf(\reduct{M'}{\sigma})$ and $(\gamma', \gamma'') \in
\Rel(\reduct{M'}{\sigma})_{e(\beta)}$;

  \item $\Conf_0(\reduct{M'}{\sigma}) = \Conf_0(M')$; and

  \item $\dataSt(\reduct{M'}{\sigma})(\gamma') =
\reduct{(\dataSt(M')(\gamma'))}{\sigma}$ for all $\gamma' \in
\Conf(\reduct{M'}{\sigma})$.
\end{itemize}

Building a reduct of an event/data-structure does not affect the single
configurations, but potentially reduces the set of configurations by restricting
the available events, and the data state observable from the data name of a
configuration.  We denote by $\Conf^F(M, \gamma)$ and $\Conf^F(M)$,
respectively, the set of configurations of a $\Sigma$-event/data structure $M$
that are $F$-reachable from a configuration $\gamma \in \Conf(M)$ and from an
initial configuration $\gamma_0 \in \Conf_0(M)$, respectively, with a set of
events $F \subseteq \Evt(\Sigma)$ where a $\gamma_n \in \Conf(M)$ is
\emph{$F$"=reachable in $M$ from} a $\gamma_1 \in \Conf(M)$ if there are $n \geq
1$, $e_2(X_2), \ldots, e_n(X_n) \in F$, $\beta_2 : X_2 \to \DATA, \ldots,
\beta_n : X_n \to \DATA$, and $(\gamma_i, \gamma_{i+1}) \in
\Rel(M)_{e_{i+1}(\beta_{i+1})}$ for all $1 \leq i < n$.

\begin{techreport}
\begin{lemma}\label{lem:relative}
Let $\sigma : \Sigma \to \Sigma'$ be an event/data signature morphism, $F
\subseteq \Evt(\Sigma)$, and $M'$ a $\Sigma'$-event/data structure.
\begin{enumerate}
  \item\label{it:lem:relative:trans} For all $\gamma_1', \gamma'_2 \in
\Conf(M')$, if $\gamma_1' \in \Conf(\reduct{M'}{\sigma})$, then $(\gamma_1',
\gamma_2') \in \Rel(\reduct{M'}{\sigma})_{e(\beta)}$ if, and only if,
$(\gamma_1', \gamma_2') \in \Rel(M')_{\Evt(\sigma)(e)(\beta)}$.

  \item\label{it:lem:relative:reach-from} For all $\gamma', \gamma'' \in
\Conf(M')$ such that $\gamma' \in \Conf(\reduct{M'}{\sigma})$, $\gamma'' \in
\Conf^F(\reduct{M'}{\sigma}, \gamma')$ if, and only if, $\gamma'' \in
\Conf^{\Evt(\sigma)(F)}(M', \gamma')$.

  \item\label{it:lem:relative:reach} For all $\gamma' \in \Conf(M')$, $\gamma'
\in \Conf^F(\reduct{M'}{\sigma})$ if, and only if, $\gamma' \in
\Conf^{\Evt(\sigma)(F)}(M')$.
\end{enumerate}
\end{lemma}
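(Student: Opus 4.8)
The three parts are interdependent, so the plan is to establish them in order, with \cref{it:lem:relative:trans} carrying out the actual unwinding of the inductive definition of the $\sigma$-reduct and the later parts lifting it to paths.

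For \cref{it:lem:relative:trans} I would argue directly from the simultaneous inductive definition of $\Conf(\reduct{M'}{\sigma})$ and $\Rel(\reduct{M'}{\sigma})$. Its only generating clause places a pair $(\gamma_1', \gamma_2')$ into $\Rel(\reduct{M'}{\sigma})_{e(\beta)}$ precisely when $\gamma_1' \in \Conf(\reduct{M'}{\sigma})$ and $(\gamma_1', \gamma_2') \in \Rel(M')_{\Evt(\sigma)(e)(\beta)}$. The ``only if'' direction is then rule induction on the construction: every pair in the reduct relation was inserted by this clause and is therefore witnessed by the corresponding pair in $M'$. The ``if'' direction is a single application of the clause under the standing hypothesis $\gamma_1' \in \Conf(\reduct{M'}{\sigma})$. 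In effect, $\Rel(\reduct{M'}{\sigma})_{e(\beta)}$ is exactly $\Rel(M')_{\Evt(\sigma)(e)(\beta)}$ restricted to sources lying in the reduct.

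\Cref{it:lem:relative:reach-from} I would prove by induction on the length $n$ of a reachability witness, applying \cref{it:lem:relative:trans} stepwise. For the forward direction, an $F$-path $\gamma' = \gamma_1, \ldots, \gamma_n = \gamma''$ in $\reduct{M'}{\sigma}$ has all its sources in $\Conf(\reduct{M'}{\sigma})$, so \cref{it:lem:relative:trans} turns each reduct edge into an $M'$ edge labelled by $\Evt(\sigma)(e_{i+1}) \in \Evt(\sigma)(F)$, yielding an $\Evt(\sigma)(F)$-path in $M'$. The backward direction is the substantive one: given an $\Evt(\sigma)(F)$-path $\gamma' = \gamma_1, \ldots, \gamma_n = \gamma''$ in $M'$, I would first show by a forward induction that every $\gamma_i$ lies in $\Conf(\reduct{M'}{\sigma})$ — the base case is the hypothesis $\gamma' \in \Conf(\reduct{M'}{\sigma})$, and the step is exactly the closure clause of the reduct's definition. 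Once all sources are known to lie in the reduct, each edge $(\gamma_i, \gamma_{i+1}) \in \Rel(M')_{e'_{i+1}(\beta_{i+1})}$ with $e'_{i+1} \in \Evt(\sigma)(F)$ is pulled back through the ``if'' direction of \cref{it:lem:relative:trans} to a reduct edge labelled by a chosen preimage $e_{i+1} \in F$ with $\Evt(\sigma)(e_{i+1}) = e'_{i+1}$, giving the desired $F$-path in $\reduct{M'}{\sigma}$.

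Finally, \cref{it:lem:relative:reach} follows by combining \cref{it:lem:relative:reach-from} with the identity $\Conf_0(\reduct{M'}{\sigma}) = \Conf_0(M')$ from the definition of the reduct. Since every $\gamma_0 \in \Conf_0(M')$ lies in $\Conf(\reduct{M'}{\sigma})$ (the inductive definition starts from $\Conf(\reduct{M'}{\sigma}) \supseteq \Conf_0(M')$), each such $\gamma_0$ satisfies the hypothesis of \cref{it:lem:relative:reach-from}, and $F$-reachability from an initial configuration in $\reduct{M'}{\sigma}$ coincides, configuration by configuration, with $\Evt(\sigma)(F)$-reachability from the same initial configuration in $M'$.

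The main obstacle is the backward direction of \cref{it:lem:relative:reach-from}, and specifically the preimage choice: $\Evt(\sigma)$ need not be injective, so one must \emph{select} some $e_{i+1} \in F$ above each $e'_{i+1}$. This is legitimate because $e'_{i+1} \in \Evt(\sigma)(F)$ guarantees a preimage in $F$, and the event-signature-morphism condition $\upsilon(\Evt(\Sigma))(e_{i+1}) = \upsilon(\Evt(\Sigma'))(\Evt(\sigma)(e_{i+1}))$ guarantees that $\beta_{i+1}$ is a valuation over the correct variable set for $e_{i+1}$, so the very same valuation may label the reconstructed reduct edge. As reachability is an existential statement, choosing any one such preimage at each step suffices.
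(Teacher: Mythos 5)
Your proposal is correct and follows essentially the same route as the paper's proof: \cref{it:lem:relative:trans} read off from the inductive definition of the reduct, \cref{it:lem:relative:reach-from} by induction on the length of the reachability witness using \cref{it:lem:relative:trans} stepwise, and \cref{it:lem:relative:reach} by combining \cref{it:lem:relative:reach-from} with $\Conf_0(\reduct{M'}{\sigma}) = \Conf_0(M')$. Your explicit treatment of the preimage choice under a possibly non-injective $\Evt(\sigma)$ is a point the paper leaves implicit, but it does not change the argument.
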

\begin{proof}
\cref{it:lem:relative:trans}~This follows directly from the inductive definition
of the $\sigma$-reduct of $\Sigma'$-event/data structures.

\smallskip\noindent%
\cref{it:lem:relative:reach-from}~Let $\gamma', \gamma'' \in \Conf(M')$ with
$\gamma' \in \Conf(\reduct{M'}{\sigma})$.  By induction, it holds that $\gamma''
\in \Conf^{\Evt(\sigma)(F)}(M')$ if, and only if, there are $n \geq 0$,
$e_1(X_1), \ldots, e_{n}(X_n) \in F$, and $\beta_1 : X_1 \to \DATA, \ldots,
\beta_n : X_n \to \DATA$, and $(\gamma_i', \gamma_{i+1}') \in
\Rel(M')_{\Evt(\sigma)(e_{i+1})(\beta_{i+1})}$ for all $0 \leq i < n$ with
$\gamma' = \gamma_0'$ and $\gamma'' = \gamma_n'$.  Thus, by
\cref{it:lem:relative:trans}, since $\gamma' \in \Conf(\reduct{M'}{\sigma})$,
$\gamma'' \in \Conf^{\Evt(\sigma)(F)}(M')$ if, and only if, $\gamma'' \in
\Conf^F(\reduct{M'}{\sigma})$.

\smallskip\noindent%
\cref{it:lem:relative:reach}~Let $\gamma' \in \Conf(M')$.  By definition it
holds that $\gamma' \in \Conf^{\Evt(\sigma)(F)}(M')$ if, and only if, $\gamma'
\in \Conf^{\Evt(\sigma)(F)}(M', \gamma_0')$ for some $\gamma_0' \in
\Conf_0(M')$; if, and only if $\gamma' \in \Conf^{F}(\reduct{M'}{\sigma},
\gamma_0')$ for some $\gamma_0' \in \Conf(M')$ by
\cref{it:lem:relative:reach-from} since $\gamma_0' \in
\Conf_0(\reduct{M'}{\sigma})$; if, and only if, $\gamma' \in
\Conf^{F}(\reduct{M'}{\sigma})$ since $\Conf_0(\reduct{M'}{\sigma}) =
\Conf_0(M')$.
\end{proof}

\begin{example}
Let $\Sigma$ be as in \cref{ex:edhml-signatures} and $\Sigma_0$ the event/data
signature with $\Evt(\Sigma_0) = \Evt(\Sigma)$ and $\Attr(\Sigma_0) =
\emptyset$.  Consider the signature morphism $\sigma : \Sigma_0 \to \Sigma$ as
the identity on the events and the trivial embedding on the attributes.  Let $M$
be a model of the simple UML state machine in \cref{fig:uml-counter}.  The
syntactic transition $(\state{s1}, \attr{cnt} + x \leq 4, \evt{inc}(x),
\attr{cnt}' = \attr{cnt} + x, \state{s1})$ induces, among others, the two
semantic transitions $((\state{s1}, d_1), (\state{s1}, d_2)), ((\state{s1},
d_2), (\state{s1}, d_3)) \in R(M)_{\evt{inc}(\{ x \mapsto 1 \})}$ where
$\dataSt(M)(d_i) = \{ \attr{cnt} \mapsto i \}$ for $1 \leq i \leq 3$.  In the
reduct $\reduct{M}{\sigma}$ we find exactly these two semantic transitions,
however, $\dataSt(\reduct{M}{\sigma})(d_i) = \emptyset$ for all $1 \leq i \leq
3$.  This illustrates why we distinguish between data states and data names.
With the distinction, we have a bijection between semantic transitions in the
reduct and semantic transitions in the original structure.  Without the
distinction, the two different transitions in $M$ would collapse into one
transition only as there is just a single data state $\emptyset$.
\end{example}
\end{techreport}

Although it is straightforward to define a translation of simple UML state
machines along an event/data signature morphism, the rather restrictive notion
of their models prevents the satisfaction condition to hold.  In fact, this is
already true for our previous endeavours to institutionalise UML state
machines~\cite{knapp-et-al:fase:2015,knapp-mossakowski:calco:2017}.  There
machines themselves were taken to be sentences over signatures comprising both
events and states, and the satisfaction relation also required that a model
shows exactly the transitions of such a machine sentence.  For signature
morphisms $\sigma$ that are not surjective on states, building the reduct could
result in less states and transitions, which leads to the following
counterexample to the satisfaction condition~\cite{rosenberger:master:2017}:
\begin{align*}
\left\llbracket\ %
\begin{tikzpicture}[baseline=(s1.base), auto, scale=.85, transform shape]
\tikzset{
  every node/.append style={align=left},
  every state/.append style={rectangle, rounded corners, minimum width=1.5cm},
  every initial by arrow/.style={outer sep=0pt, inner sep=0pt},
}
\node[state, initial text={}, initial left] (s1) {$\state{s1}$};
\node[state, right of=s1, node distance=3.0cm] (s2) {$\state{s2}$};
\path[->,font={\fontsize{8pt}{8pt}\selectfont}] 
  (s1) edge node[anchor=base, above] {$e /$} (s2)
;
\end{tikzpicture}%
\ \right\rrbracket
&\models
\begin{tikzpicture}[baseline=(s1.base), scale=.85, transform shape]
\tikzset{
  every node/.append style={align=left},
  every state/.append style={rectangle, rounded corners, minimum width=1.5cm},
  every initial by arrow/.style={outer sep=0pt, inner sep=0pt},
}
\node[state, initial text={}, initial left] (s1) {$\state{s1}$};
\node[state, right of=s1, node distance=3.0cm] (s2) {$\state{s2}$};
\path[->,font={\fontsize{8pt}{8pt}\selectfont}] 
  (s1) edge node[anchor=base, above] {$e /$} (s2)
;
\end{tikzpicture}
\\[1ex]
\rotatebox[origin=c]{90}{$\mapsto$}\,\reductop\sigma\hspace*{1.95cm} & \hspace*{2.65cm}\rotatebox[origin=c]{-90}{$\mapsto$}\,\sigma
\\[1ex]
\left\llbracket\ %
\begin{tikzpicture}[baseline={($0.5*(s2.base)+0.5*(s3.base)$)}, auto, scale=.85, transform shape]
\tikzset{
  every node/.append style={align=left},
  every state/.append style={rectangle, rounded corners, minimum width=1.5cm},
  every initial by arrow/.style={outer sep=0pt, inner sep=0pt},
}
\node[state, initial text={}, initial left] (s1) {$\sigma(\state{s1})$};
\node[state, right of=s1, node distance=3.0cm] (s2) {$\sigma(\state{s2})$};
\node[state, below of=s2, node distance=1.5cm] (s3) {$\state{s}'$};
\path[->,font={\fontsize{8pt}{8pt}\selectfont}] 
  (s1) edge node[anchor=base, above] {$e /$} (s2)
  (s1) edge[shorten <=-1pt, shorten >=-1pt] node[anchor=base, above] {$e /$} (s3)
;
\end{tikzpicture}%
\ \right\rrbracket
&\not\models
\begin{tikzpicture}[baseline=(s1.base), scale=.85, transform shape]
\tikzset{
  every node/.append style={align=left},
  every state/.append style={rectangle, rounded corners, minimum width=1.5cm},
  every initial by arrow/.style={outer sep=0pt, inner sep=0pt},
}
\node[state, initial text={}, initial left] (s1) {$\sigma(\state{s1})$};
\node[state, right of=s1, node distance=3.0cm] (s2) {$\sigma(\state{s2})$};
\path[->,font={\fontsize{8pt}{8pt}\selectfont}] 
  (s1) edge node[anchor=base, above] {$e /$} (s2)
;
\end{tikzpicture}
\end{align*}

We therefore propose to make a detour through a more general hybrid modal logic.
This logic is directly based on event/data structures and thus close to the
domain of state machines.  For forming an institution, its hybrid features allow
to avoid control states as part of the signature and its event-based modalities
allow to specify both mandatory and forbidden behaviour in a more fine-grained
manner.  Still, the logic is expressive enough to characterise the model class
of a simple UML state machine syntactically.

\section{A Hybrid Modal Logic for Event/Data Systems}\label{sec:edhml}

The logic $\EDHML$ is a hybrid modal logic for specifying event/data-based
reactive systems and reasoning about them.  The $\EDHML$-signatures are the
event/data signatures, the $\EDHML$-structures the event/data structures.  The
modal part of the logic allows to handle transitions between configurations
where the modalities describe moves between configurations that adhere to a
pre-condition or guard as a state predicate for an event with arguments and a
transition predicate for the data change corresponding to effects.  The hybrid
part of the logic allows to bind control states of system configurations and to
jump to configurations with such control states explicitly, but leaves out
nominals as interfacing names as well as the possibility to quantify over
control states.  The logic builds on the hybrid dynamic logic $\HDL$ for
specifying reactive systems without data~\cite{madeira-et-al:ictac:2016} and its
extension $\EDHDL$ to handle also data~\cite{hennicker-madeira-knapp:fase:2019}.
We restrict ourselves to modal operators consisting only of single instead of
compound actions as done in dynamic logic.  However, we still retain a box
modality for accessing all configurations that are reachable from a given
configuration.  Moreover, we extend $\EDHDL$ by adding parameters to events.

The category of \emph{$\EDHML$-signatures} $\Sig[\EDHML]$ consists of the
event/data signatures and signature morphisms.  The $\Sigma$-event/data
structures form the discrete category $\Str[\EDHML](\Sigma)$ of
\emph{$\EDHML$-structures} over $\Sigma$.  For each signature morphism $\sigma :
\Sigma \to \Sigma'$ in $\Sig[\EDHML]$ the \emph{$\sigma$-reduct functor}
$\Str[\EDHML](\sigma) : \Str[\EDHML](\Sigma') \to \Str[\EDHML](\Sigma)$ is given
by $\Str[\EDHML](\sigma)(M') = \reduct{M'}{\sigma}$.  As the next step we
introduce the formulæ and sentences of $\EDHML$ together with their translation
along $\Sig[\EDHML]$-morphisms and their satisfaction over $\Str[\EDHML]$.  We
then show that for $\EDHML$ the satisfaction condition holds and thus obtain
$\EDHML$ as an institution.  Subsequently, we show that $\EDHML$ is
simultaneously expressive enough to characterise the model class of simple UML
state machines.

\subsection{Formulæ and Sentences of $\EDHML$}

$\EDHML$-formulæ aim at expressing control and data state properties of
configurations as well as accessibility properties of configurations along
transitions for particular events.  The pure data state
part is captured by data state sentences over $\DATA$.  The control state part
can be accessed and manipulated by hybrid operators for binding the control
state in a state variable, $\hybindop s$; checking for a particular control
state, $s$; and accessing all configurations with a particular control state,
$\hyatop^F$, which, however, only pertains to reachable configurations relative
to a set $F$ of events.  Transitions between configurations are covered by
different modalities: a box modality for accessing all configurations that are
reachable from a given configuration, $\mlboxop^F$, again relative to a set $F$ of
events; a diamond modality for checking that an event with arguments is possible
with a particular data state change, $\dldia{e(X)}{\psi}{}$; and a modality for
checking the reaction to an event with arguments according to a pre-condition
and a transition predicate, $\dlwp{e(X)}{\phi}{\psi}{}$.

Formally, the \emph{$\Sigma$"=event/data formulæ} $\Frm[\EDHML]{\Sigma}{S}$ over
an event/data signature $\Sigma$ and a set of \emph{state variables} $S$ are
inductively defined by
\begin{itemize}
  \item $\varphi$ --- data state sentence $\varphi \in \Frm[\DATA]{\Attr(\Sigma)}{\emptyset}$
holds in the current configuration;

  \item $s$ ---
the control state of the current configuration is $s \in S$;

  \item $\hybind{s}{\varrho}$ --- calling the current control state $s$, formula
$\varrho \in \Frm[\EDHML]{\Sigma}{S \uplus \{ s \}}$ holds;

  \item $\hyat[F]{s}{\varrho}$ --- in all configurations with control state $s
\in S$ that are reachable with events from $F \subseteq \Evt(\Sigma)$ formula
$\varrho \in \Frm[\EDHML]{\Sigma}{S}$ holds;

  \item $\mlbox[F]{\varrho}$ --- in all configurations that are reachable
from the current configuration with events from $F \subseteq \Evt(\Sigma)$
formula $\varrho \in \Frm[\EDHML]{\Sigma}{S}$ holds;

  \item $\dldia{e(X)}{\psi}{\varrho}$ --- in the current configuration there is
a valuation of $X$ and a transition for event $e(X) \in \Evt(\Sigma)$ with these
arguments that satisfies transition formula $\psi \in
\Frm[2\DATA]{\Attr(\Sigma)}{X}$ and makes $\varrho \in \Frm[\EDHML]{\Sigma}{S}$
hold afterwards;

  \item $\dlwp{e(X)}{\phi}{\psi}{\varrho}$ --- in the current configuration for
all valuations of $X$ satisfying state formula $\phi\in
\Frm[\DATA]{\Attr(\Sigma)}{X}$ there is a transition for event $e(X) \in
\Evt(\Sigma)$ with these arguments that satisfies transition formula $\psi\in
\Frm[2\DATA]{\Attr(\Sigma)}{X}$ and makes $\varrho \in \Frm[\EDHML]{\Sigma}{S}$
hold afterwards;

  \item $\neg\varrho$ ---
in the current configuration $\varrho \in \Frm[\EDHML]{\Sigma}{S}$ does not hold;

  \item $\varrho_1 \lor \varrho_2$ --- in the current configuration $\varrho_1
\in \Frm[\EDHML]{\Sigma}{S}$ or $\varrho_2 \in \Frm[\EDHML]{\Sigma}{S}$ hold.
\end{itemize}
We write $\hyat{s}{\varrho}$ for $\hyat[\Evt(\Sigma)]{s}{\varrho}$,
$\mlbox{\varrho}$ for $\mlbox[\Evt(\Sigma)]{\varrho}$, $\mldia[F]{\varrho}$ for
$\neg\mlbox[F]{\neg\varrho}$, $\mldia{\varrho}$ for
$\mldia[\Evt(\Sigma)]{\varrho}$, $\dlbox{e(X)}{\psi}{\varrho}$ for
$\neg\dldia{e(X)}{\psi}{\neg\varrho}$, and $\truefrm$ for $\hybind{s}{s}$.

\begin{example}
An event/data formula can make two kinds of requirements on an event/data
structure: On the one hand, it can require the presence of certain mandatory
transitions, on the other hand it can require the absence of certain prohibited
transitions.  Considering the simple UML state machine in
\cref{fig:uml-counter}, the formula
\begin{equation*}
  \hyat{\state{s1}}{\dlwp{\evt{inc}(x)}{\attr{cnt} + x = 4}{\attr{cnt}' = 4}{\state{s2}}}
\end{equation*}
requires for each valuation of $\beta : \{ x \} \to \NZ$ such that $\attr{cnt} +
x = 4$ holds that there is a transition from control state $\state{s1}$ to
control state $\state{s2}$ for the instantiated event $\evt{inc}(\beta)$ where
$\attr{cnt}$ is changed to $4$.  On the other hand, the formula
\begin{equation*}
  \hyat{\state{s2}}{\dlbox{\evt{reset}}{\neg(\attr{cnt}' = 0)}{\falsefrm}}
\end{equation*}
prohibits any transitions out of $\state{s2}$ that are labelled with the event
$\evt{reset}$ but do not satisfy $\attr{cnt}' = 0$.

In the context of \cref{fig:uml-counter}, these formulæ only have
their explained intended meaning when $\state{s1}$ and $\state{s2}$
indeed refer to the eponymous states.  However, $\EDHML$ does not show
nominals for explicitly naming control states as part of the state
machine's interface and the reference to specific states always has to
build these states' context first using the modalities and the bind
operator.  On the other hand, as indicated in \cref{sec:ed-morphisms},
the inclusion of nominals may interfere disadvantageously with the
reduct formation.
\end{example}

Let $\sigma : \Sigma \to \Sigma'$ be an event/data signature morphism.  The
\emph{event/data formulæ translation} $\Frm[\EDHML]{\sigma}{S} :
\Frm[\EDHML]{\Sigma}{S} \to \Frm[\EDHML]{\Sigma'}{S}$ along $\sigma$ is
recursively given by
\begin{itemize}
  \item $\Frm[\EDHML]{\sigma}{S}(\varphi) =
\Frm[\DATA]{\Attr(\sigma)}{\emptyset}(\varphi)$;

  \item $\Frm[\EDHML]{\sigma}{S}(s) = s$;

  \item $\Frm[\EDHML]{\sigma}{S}(\hybind{s}{\varrho}) =
\hybind{s}{\Frm[\EDHML]{\sigma}{S \uplus \{ s \}}(\varrho)}$;

  \item $\Frm[\EDHML]{\sigma}{S}(\hyat[F]{s}{\varrho}) =
\hyat[\Evt(\sigma)(F)]{s}{\Frm[\EDHML]{\sigma}{S}(\varrho)}$;

  \item $\Frm[\EDHML]{\sigma}{S}(\mlbox[F]{\varrho}) =
\mlbox[\Evt(\sigma)(F)]{\Frm[\EDHML]{\sigma}{S}(\varrho)}$;

  \item $\Frm[\EDHML]{\sigma}{S}(\dldia{e(X)}{\psi}{\varrho}) =
\dldia{\Evt(\sigma)(e)(X)}{\Frm[2\DATA]{\Attr(\sigma)}{X}(\psi)}{\Frm[\EDHML]{\sigma}{S}(\varrho)}$;

  \item $\Frm[\EDHML]{\sigma}{S}(\dlwp{e(X)}{\phi}{\psi}{\varrho}) =
\dlwp{\Evt(\sigma)(e)(X)}{\Frm[\DATA]{\Attr(\sigma)}{X}(\phi)}{\Frm[2\DATA]{\Attr(\sigma)}{X}(\psi)}{\Frm[\EDHML]{\sigma}{S}(\varrho)}$;

  \item $\Frm[\EDHML]{\sigma}{S}(\neg \varrho) = \neg\Frm[\EDHML]{\sigma}{S}(\varrho)$;

  \item $\Frm[\EDHML]{\sigma}{S}(\varrho_1 \lor \varrho_2) =
\Frm[\EDHML]{\sigma}{S}(\varrho_1) \lor \Frm[\EDHML]{\sigma}{S}(\varrho_2)$.
\end{itemize}

The set $\Sen[\EDHML](\Sigma)$ of \emph{$\Sigma$-event/data sentences} is given
by $\Frm[\EDHML]{\Sigma}{\emptyset}$, the \emph{event/data sentence translation}
$\Sen[\EDHML](\sigma) : \Sen[\EDHML](\Sigma) \to \Sen[\EDHML](\Sigma')$ by
$\Frm[\EDHML]{\sigma}{\emptyset}$.

\subsection{Satisfaction Relation for $\EDHML$}

The $\EDHML$-satisfaction relation connects $\EDHML$-structures and
$\EDHML$-formulæ, expressing whether in some configuration of the structure a
particular formula holds with respect to an assignment of control states to
state variables.  Let $\Sigma$ be an event/data signature, $M$ a
$\Sigma$"=event/data structure, $S$ a set of state variables, $v : S \to
\CtrlSt(M)$ a state variable assignment, and $\gamma \in \Conf(M)$.  The
\emph{satisfaction relation} for event/data formulæ is inductively given by
\begin{itemize}[leftmargin=*, itemsep=2pt]
  \item $M, v, \gamma \models[\EDHML]_{\Sigma, S} \varphi$ iff
$\dataSt(M)(\gamma) \models[\DATA]_{\Attr(\Sigma)} \varphi$;

  \item $M, v, \gamma \models[\EDHML]_{\Sigma, S} s$ iff $v(s) = \ctrlSt(M)(\gamma)$;

  \item $M, v, \gamma \models[\EDHML]_{\Sigma, S} \hybind{s}{\varrho}$
iff $M, v\{ s \mapsto \ctrlSt(M)(\gamma) \}, \gamma \models[\EDHML]_{\Sigma, S
  \uplus \{ s \}} \varrho$;

  \item $M, v, \gamma \models[\EDHML]_{\Sigma, S} \hyat[F]{s}{\varrho}$ iff $M,
v, \gamma' \models[\EDHML]_{\Sigma, S} \varrho$\\
for all $\gamma' \in \Conf^F(M)$ with $\ctrlSt(M)(\gamma') = v(s)$;

  \item $M, v, \gamma \models[\EDHML]_{\Sigma, S} \mlbox[F]{\varrho}$ iff $M, v,
\gamma' \models[\EDHML]_{\Sigma, S} \varrho$ for all $\gamma' \in \Conf^F(M,
\gamma)$;

  \item $M, v, \gamma \models[\EDHML]_{\Sigma, S}
\dldia{e(X)}{\psi}{\varrho}$ iff there is a $\beta : X \to \DATA$ and a 
$\gamma' \in \Conf(M)$ such that $(\gamma, \gamma') \in
\Rel(M)_{e(\beta)}$, $(\dataSt(M)(\gamma), \dataSt(M)(\gamma')), \beta
\models[2\DATA]_{\Attr(\Sigma), X} \psi$, and $M, v, \gamma'
\models[\EDHML]_{\Sigma, S} \varrho$;

  \item $M, v, \gamma \models[\EDHML]_{\Sigma, S}
\dlwp{e(X)}{\phi}{\psi}{\varrho}$ iff for all $\beta : X \to \DATA$ with
$\dataSt(M)(\gamma), \beta \models[\DATA]_{\Attr(\Sigma), X} \phi$ there is
some $\gamma' \in \Conf(M)$ such that $(\gamma, \gamma') \in
\Rel(M)_{e(\beta)}$,\\ $(\dataSt(M)(\gamma), \dataSt(M)(\gamma')), \beta
\models[2\DATA]_{\Attr(\Sigma), X} \psi$, and $M, v, \gamma'
\models[\EDHML]_{\Sigma, S} \varrho$;

  \item $M, v, \gamma \models[\EDHML]_{\Sigma, S} \neg\varrho$ iff $M, v, \gamma
\not\models[\EDHML]_{\Sigma, S} \varrho$;

  \item $M, v, \gamma \models[\EDHML]_{\Sigma, S} \varrho_1 \lor \varrho_2$ iff
$M, v, \gamma \models[\EDHML]_{\Sigma, S} \varrho_1$ or $M, v, \gamma
\models[\EDHML]_{\Sigma, S} \varrho_2$.
\end{itemize}

This satisfaction relation is well-behaved with respect to reducts of
$\EDHML$-structures.  On the one hand, this is due to the use of abstract data
names rather than data states in the structures, and on the other hand to the
satisfaction condition of $\DATA$ and $2\DATA$.

\begin{lemma}\label{lem:sat-cond}
Let $\sigma : \Sigma \to \Sigma'$ be a event/data signature morphism and $M'$ a
$\Sigma'$-event/data structure.  For all $\varrho \in \Frm[\EDHML]{\Sigma}{S}$, all
$\gamma' \in \Conf(\reduct{M'}{\sigma}) \subseteq \Conf(M')$, and all $v : S \to
\CtrlSt(\reduct{M'}{\sigma}) \subseteq \CtrlSt(M')$ it holds that
\begin{equation*}
  \reduct{M'}{\sigma}, v, \gamma' \models[\EDHML]_{\Sigma, S} \varrho \iff M', v, \gamma' \models[\EDHML]_{\Sigma', S} \Frm[\EDHML]{\sigma}{S}(\varrho)
\ \text{.}
\end{equation*}
\end{lemma}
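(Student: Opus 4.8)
The plan is to prove this satisfaction condition by structural induction on the formula $\varrho$, generalizing over the state variable set $S$, the assignment $v$, and the configuration $\gamma'$ (so that the induction hypothesis can be applied at changed configurations and extended assignments, as needed by the $\hybindop$, modal, and diamond/wp cases). Let me sketch the key ingredients before organizing the cases.

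Let me think about what facts I'll need for each case.

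The base cases are the data state sentences and state variables. For $\varphi \in \Frm[\DATA]{\Attr(\Sigma)}{\emptyset}$, I need $\dataSt(\reduct{M'}{\sigma})(\gamma') \models[\DATA] \varphi$ iff $\dataSt(M')(\gamma') \models[\DATA] \Frm[\DATA]{\Attr(\sigma)}{\emptyset}(\varphi)$. This is exactly the satisfaction condition for the data institution $\DATA$ stated earlier in Section 2.5, applied to the reduct $\dataSt(\reduct{M'}{\sigma})(\gamma') = \reduct{(\dataSt(M')(\gamma'))}{\Attr(\sigma)}$. For the state variable $s$, I need $v(s) = \ctrlSt(\reduct{M'}{\sigma})(\gamma')$ iff $v(s) = \ctrlSt(M')(\gamma')$; since the reduct doesn't change individual configurations, $\ctrlSt(\reduct{M'}{\sigma})(\gamma') = \ctrlSt(M')(\gamma')$, so these coincide.

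Now the modal and hybrid cases. These are the interesting ones.

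For the boolean cases ($\neg$, $\lor$) the induction is immediate. The $\hybind{s}{\varrho}$ case needs the IH at the extended assignment $v\{s \mapsto \ctrlSt(M')(\gamma')\}$ and enlarged variable set $S \uplus \{s\}$; since control states agree on $\gamma'$, the updated assignments match on both sides, and the formula translation commutes with binding by definition. The genuinely substantial cases are those involving reachability and transitions, namely $\hyatop^F$, $\mlboxop^F$, $\dldiaop$, and $\dlwpop$. For these I would invoke Lemma~\ref{lem:relative}. Concretely, the $\hyat[F]{s}{\varrho}$ and $\mlbox[F]{\varrho}$ cases quantify over configurations that are $F$-reachable in $\reduct{M'}{\sigma}$ versus $\Evt(\sigma)(F)$-reachable in $M'$; parts~\ref{it:lem:relative:reach} and~\ref{it:lem:relative:reach-from} of Lemma~\ref{lem:relative} supply a bijection between exactly these two sets, matching the $F$-annotation in the formula with the $\Evt(\sigma)(F)$-annotation produced by the translation $\Frm[\EDHML]{\sigma}{S}$. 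For each such configuration $\gamma''$ I apply the IH. The diamond and weakest-precondition cases quantify over single transitions $(\gamma', \gamma'') \in \Rel(\reduct{M'}{\sigma})_{e(\beta)}$ versus $(\gamma', \gamma'') \in \Rel(M')_{\Evt(\sigma)(e)(\beta)}$; part~\ref{it:lem:relative:trans} of Lemma~\ref{lem:relative} gives the required correspondence of transitions out of a reachable $\gamma'$, while the transition-predicate satisfaction condition $\models[2\DATA]$ (Section~\ref{sec:ed-morphisms}) handles the $\psi$-constraint and the state-predicate condition $\models[\DATA]$ handles the guard $\phi$ in the $\dlwpop$ case. In each case the target configuration $\gamma''$ lies in $\Conf(\reduct{M'}{\sigma})$, so the IH is applicable there.

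The main obstacle, and the point requiring genuine care, is the handling of the reachability constraints built into $\hyatop^F$ and $\mlboxop^F$: one must ensure that the configuration $\gamma''$ reached in $M'$ lies in $\Conf(\reduct{M'}{\sigma})$ before applying the induction hypothesis, since the IH is only stated for configurations of the reduct. This is precisely what Lemma~\ref{lem:relative} guarantees, so the proof hinges on threading that lemma correctly through the quantifiers: the set of configurations over which the reduct quantifies is, by construction of the reduct and by reachability, exactly the set of $\Evt(\sigma)(F)$-reachable configurations of $M'$ that the translated formula quantifies over. A secondary subtlety is that the hybrid binder enlarges $S$, so the induction must be set up with $S$, $v$, and $\gamma'$ all universally quantified (not fixed) to permit applying the IH at the modified data these cases produce; once the statement is phrased this way, every inductive step reduces to one of the satisfaction conditions for $\DATA$/$2\DATA$ or to a part of Lemma~\ref{lem:relative}, with no further computation.
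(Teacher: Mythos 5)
Your proposal is correct and follows essentially the same route as the paper's proof: a structural induction over $\varrho$ with $S$, $v$, and $\gamma'$ universally quantified, discharging the data-state and transition-predicate constraints via the satisfaction conditions for $\DATA$ and $2\DATA$, and handling the reachability and transition correspondences in the $\hyatop^F$, $\mlboxop^F$, $\dldiaop$, and $\dlwpop$ cases via the three parts of \cref{lem:relative} exactly as the paper does. No gaps; the one cosmetic difference is that \cref{lem:relative} gives an equality of the relevant configuration sets rather than a ``bijection,'' which is what your argument actually uses.
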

\begin{techreport}
\begin{proof}
We apply induction on the structure of $\Sigma$-event/data formulæ.
We only consider the cases $\varphi$, $s$, $\hybind{s}{\varrho}$,
$\hyat[F]{s}{\varrho}$, $\dldia{e(X)}{\psi}{\varrho}$, and $\dlwp{e(X)}{\phi}{\psi}{\varrho}$; negation and
disjunction are straightforward.

\abovedisplayskip4pt plus 2pt
\smallskip\noindent\textit{Case $\varphi$}:
\begin{align*} &
  \reduct{M'}{\sigma}, v, \gamma' \models[\EDHML]_{\Sigma, S} \varphi
\just{def.\ $\models[\EDHML]$}
  \dataSt(\reduct{M'}{\sigma})(\gamma') \models[\DATA]_{\Attr(\Sigma)} \varphi
\just{def.\ $\reductop\sigma$}
  \reduct{\dataSt(M')(\gamma')}{\sigma} \models[\DATA]_{\Attr(\Sigma)} \varphi
\just{sat.\ cond.\ $\DATA$}
  \dataSt(M')(\gamma') \models[\DATA]_{\Attr(\Sigma')} \Attr(\sigma)(\varphi)
\just{def.\ $\models[\EDHML]$}
  M', v, \gamma' \models[\EDHML]_{\Sigma', S} \Attr(\sigma)(\varphi)
\just{def.\ $\Frm[\EDHML]{\sigma}{S}$}
  M', v, \gamma' \models[\EDHML]_{\Sigma', S} \Frm[\EDHML]{\sigma}{S}(\varphi)
\end{align*}

\smallskip\noindent\textit{Case $s$}:
\begin{align*} &
  \reduct{M'}{\sigma}, v, \gamma' \models[\EDHML]_{\Sigma, S} s
\just{def.\ $\models[\EDHML]$}
  v(s) = \ctrlSt(\reduct{M'}{\sigma})(\gamma')
\just{def.\ $\reductop\sigma$}
  v(s) = \ctrlSt(M')(\gamma')
\just{def.\ $\models[\EDHML]$}
  M', v, \gamma' \models[\EDHML]_{\Sigma', S} s
\just{def.\ $\Frm[\EDHML]{\sigma}{S}$}
  M', v, \gamma' \models[\EDHML]_{\Sigma', S} \Frm[\EDHML]{\sigma}{S}(s)
\end{align*}

\smallskip\noindent\textit{Case $\hybind{s}{\varrho}$}:
\begin{align*} &
  \reduct{M'}{\sigma}, v, \gamma' \models[\EDHML]_{\Sigma, S} \hybind{s}{\varrho}
\just{def.\ $\models[\EDHML]$}
  \reduct{M'}{\sigma}, v\{ s \mapsto \ctrlSt(\reduct{M'}{\sigma})(\gamma') \}, \gamma' \models[\EDHML]_{\Sigma, S \uplus \{ s \}} \varrho
\just{def.\ $\reductop\sigma$ and I.\,H.}
  M', v\{ s \mapsto \ctrlSt(M')(\gamma') \}, \gamma' \models[\EDHML]_{\Sigma', S \uplus \{ s \}} \Frm[\EDHML]{\sigma}{S \uplus \{ s \}}(\varrho)
\just{def.\ $\models[\EDHML]$}
  M', v, \gamma' \models[\EDHML]_{\Sigma', S} \hybind{s}{\Frm[\EDHML]{\sigma}{S \uplus \{ s \}}(\varrho)}
\just{def.\ $\Frm[\EDHML]{\sigma}{S}$}
  M', v, \gamma' \models[\EDHML]_{\Sigma', S} \Frm[\EDHML]{\sigma}{S}(\hybind{s}{\varrho})
\end{align*}

\smallskip\noindent\textit{Case $\hyat[F]{s}{\varrho}$}:
\begin{align*} &
  \reduct{M'}{\sigma}, v, \gamma' \models[\EDHML]_{\Sigma, S} \hyat[F]{s}{\varrho}
\just{def.\ $\models[\EDHML]$}
  \reduct{M'}{\sigma}, v, \gamma'' \models[\EDHML]_{\Sigma, S} \varrho
\quad\text{for all $\gamma'' \in \Conf^F(\reduct{M'}{\sigma})$ with $\ctrlSt(\reduct{M'}{\sigma})(\gamma'') = v(s)$}
\just{\cref{lem:relative}\cref{it:lem:relative:reach}}
  \reduct{M'}{\sigma}, v, \gamma'' \models[\EDHML]_{\Sigma, S} \varrho\quad\text{for all $\gamma'' \in \Conf^{\Evt(\sigma)(F)}(M')$ with $\ctrlSt(\reduct{M'}{\sigma})(\gamma'') = v(s)$}
\just{def.\ $\reductop\sigma$ and I.\,H.}
  M', v, \gamma'' \models[\EDHML]_{\Sigma', S} \Frm[\EDHML]{\sigma}{S}(\varrho) \quad\text{for all $\gamma'' \in \Conf^{\Evt(\sigma)(F)}(M')$ with $\ctrlSt(M')(\gamma'') = v(s)$}
\just{def.\ $\models[\EDHML]$}
  M', v, \gamma' \models[\EDHML]_{\Sigma', S} \hyat[\Evt(\sigma)(F)]{s}{\Frm[\EDHML]{\sigma}{S}(\varrho)}
\just{def.\ $\Frm[\EDHML]{\sigma}{S}$}
  M', v, \gamma' \models[\EDHML]_{\Sigma', S} \Frm[\EDHML]{\sigma}{S}(\hyat[F]{s}{\varrho})
\end{align*}

\smallskip\noindent\textit{Case $\mlbox[F]{\varrho}$}:
\begin{align*} &
  \reduct{M'}{\sigma}, v, \gamma' \models[\EDHML]_{\Sigma, S} \mlbox[F]{\varrho}
\just{def.\ $\models[\EDHML]$}
  \reduct{M'}{\sigma}, v, \gamma'' \models[\EDHML]_{\Sigma, S} \varrho
\quad\text{for all $\gamma'' \in \Conf^F(\reduct{M'}{\sigma}, \gamma')$}
\just{\cref{lem:relative}\cref{it:lem:relative:reach-from}}
  \reduct{M'}{\sigma}, v, \gamma'' \models[\EDHML]_{\Sigma, S} \varrho\quad\text{for all $\gamma'' \in \Conf^{\Evt(\sigma)(F)}(M', \gamma')$}
\just{I.\,H.}
  M', v, \gamma'' \models[\EDHML]_{\Sigma', S} \Frm[\EDHML]{\sigma}{S}(\varrho) \quad\text{for all $\gamma'' \in \Conf^{\Evt(\sigma)(F)}(M', \gamma')$}
\just{def.\ $\models[\EDHML]$}
  M', v, \gamma' \models[\EDHML]_{\Sigma', S} \mlbox[\Evt(\sigma)(F)]{\Frm[\EDHML]{\sigma}{S}(\varrho)}
\just{def.\ $\Frm[\EDHML]{\sigma}{S}$}
  M', v, \gamma' \models[\EDHML]_{\Sigma', S} \Frm[\EDHML]{\sigma}{S}(\mlbox[F]{\varrho})
\end{align*}

\smallskip\noindent\textit{Case $\dldia{e(X)}{\psi}{\varrho}$}:
\begin{align*} &
  \reduct{M'}{\sigma}, v, \gamma' \models[\EDHML]_{\Sigma, S} \dldia{e(X)}{\psi}{\varrho}
\just{def.\ $\models[\EDHML]$}
  \reduct{M'}{\sigma}, v, \gamma'' \models[\EDHML]_{\Sigma, S} \varrho
\quad\begin{array}[t]{@{}l@{}}
  \text{for some $\beta : X \to \DATA$, $\gamma'' \in \Conf(\reduct{M'}{\sigma})$ with}\\
  \text{$(\gamma', \gamma'') \in \Rel(\reduct{M'}{\sigma})_{e(\beta)}$ and}\\[.5ex]
  \text{$(\dataSt(\reduct{M'}{\sigma})(\gamma'), \dataSt(\reduct{M'}{\sigma})(\gamma'')), \beta \models[2\DATA]_{\Attr(\Sigma), X} \psi$}
\end{array}
\just{\cref{lem:relative}\cref{it:lem:relative:trans}}
  \reduct{M'}{\sigma}, v, \gamma'' \models[\EDHML]_{\Sigma, S} \varrho
\quad\begin{array}[t]{@{}l@{}}
  \text{for some $\beta : X \to \DATA$, $\gamma'' \in \Conf(M')$ with}\\
  \text{$(\gamma', \gamma'') \in \Rel(M')_{\Evt(\sigma)(e)(\beta)}$ and}\\[.5ex]
  \text{$(\dataSt(\reduct{M'}{\sigma})(\gamma'), \dataSt(\reduct{M'}{\sigma})(\gamma'')), \beta \models[2\DATA]_{\Attr(\Sigma), X} \psi$}
\end{array}
\just{def. $\reductop\sigma$}
  \reduct{M'}{\sigma}, v, \gamma'' \models[\EDHML]_{\Sigma, S} \varrho
\quad\begin{array}[t]{@{}l@{}}
  \text{for some $\beta : X \to \DATA$, $\gamma'' \in \Conf(M')$ with}\\
  \text{$(\gamma', \gamma'') \in \Rel(M')_{\Evt(\sigma)(e)(\beta)}$ and}\\[.5ex]
  \text{$(\reduct{\dataSt(M')(\gamma')}{\sigma}, \reduct{\dataSt(M')(\gamma'')}{\sigma}), \beta \models[2\DATA]_{\Attr(\Sigma), X} \psi$}
\end{array}
\just{sat.\ cond.\ $2\DATA$}
  \reduct{M'}{\sigma}, v, \gamma'' \models[\EDHML]_{\Sigma, S} \varrho
\quad\begin{array}[t]{@{}l@{}}
  \text{for some $\beta : X \to \DATA$, $\gamma'' \in \Conf(M')$ with}\\
  \text{$(\gamma', \gamma'') \in \Rel(M')_{\Evt(\sigma)(e)(\beta)}$ and}\\[.5ex]
  \text{$(\dataSt(M')(\gamma'), \dataSt(M')(\gamma'')), \beta \models[2\DATA]_{\Attr(\Sigma'), X} \Frm[2\DATA]{\Attr(\sigma)}{X}(\psi)$}
\end{array}
\just{I.\,H.}
  M', v, \gamma'' \models[\EDHML]_{\Sigma', S} \Frm[\EDHML]{\sigma}{S}(\varrho)
\quad\begin{array}[t]{@{}l@{}}
  \text{for some $\beta : X \to \DATA$, $\gamma'' \in \Conf(M')$ with}\\
  \text{$(\gamma', \gamma'') \in \Rel(M')_{\Evt(\sigma)(e)(\beta)}$ and}\\[.5ex]
  \text{$(\dataSt(M')(\gamma'), \dataSt(M')(\gamma'')), \beta \models[2\DATA]_{\Attr(\Sigma'), X} \Frm[2\DATA]{\Attr(\sigma)}{X}(\psi)$}
\end{array}
\just{def.\ $\models[\EDHML]$}
  M', v, \gamma' \models[\EDHML]_{\Sigma', S} \dldia{\Evt(\sigma)(e)(X)}{\Frm[2\DATA]{\Attr(\sigma)}{X}(\psi)}{\Frm[\EDHML]{\sigma}{S}(\varrho)}
\just{def.\ $\Frm[\EDHML]{\sigma}{S}$}
  M', v, \gamma' \models[\EDHML]_{\Sigma', S} \Frm[\EDHML]{\sigma}{S}(\dldia{e(X)}{\psi}{\varrho})
\end{align*}

\smallskip\noindent\textit{Case $\dlwp{e(X)}{\phi}{\psi}{\varrho}$}:
\begin{align*} &
  \reduct{M'}{\sigma}, v, \gamma' \models[\EDHML]_{\Sigma, S} \dlwp{e(X)}{\phi}{\psi}{\varrho}
\just{def.\ $\models[\EDHML]$}
  \reduct{M'}{\sigma}, v, \gamma'' \models[\EDHML]_{\Sigma, S} \varrho
\\[-.5ex]&\qquad\begin{array}[t]{@{}l@{}}
  \text{for all $\beta : X \to \DATA$ such that $\dataSt(\reduct{M'}{\sigma})(\gamma'), \beta \models[\DATA]_{\Attr(\Sigma), X} \phi$ and}\\
  \text{some $\gamma'' \in \Conf(\reduct{M'}{\sigma})$ with $(\gamma', \gamma'') \in \Rel(\reduct{M'}{\sigma})_{e(\beta)}$ and}\\[.5ex]
  \text{$(\dataSt(\reduct{M'}{\sigma})(\gamma'), \dataSt(\reduct{M'}{\sigma})(\gamma'')), \beta \models[2\DATA]_{\Attr(\Sigma), X} \psi$}
\end{array}
\just{\cref{lem:relative}\cref{it:lem:relative:trans}}
  \reduct{M'}{\sigma}, v, \gamma'' \models[\EDHML]_{\Sigma, S} \varrho
\\[-.5ex]&\qquad\begin{array}[t]{@{}l@{}}
  \text{for all $\beta : X \to \DATA$ such that $\dataSt(\reduct{M'}{\sigma})(\gamma'), \beta \models[\DATA]_{\Attr(\Sigma), X} \phi$ and}\\
  \text{some $\gamma'' \in \Conf(M')$ with $(\gamma', \gamma'') \in \Rel(M')_{\Evt(\sigma)(e)(\beta)}$ and}\\[.5ex]
  \text{$(\dataSt(\reduct{M'}{\sigma})(\gamma'), \dataSt(\reduct{M'}{\sigma})(\gamma'')), \beta \models[2\DATA]_{\Attr(\Sigma), X} \psi$}
\end{array}
\just{def.\ $\reductop\sigma$}
  \reduct{M'}{\sigma}, v, \gamma'' \models[\EDHML]_{\Sigma, S} \varrho
\\[-.5ex]&\qquad\begin{array}[t]{@{}l@{}}
  \text{for all $\beta : X \to \DATA$ such that $\reduct{\dataSt(M')(\gamma')}{\sigma}, \beta \models[\DATA]_{\Attr(\Sigma), X} \phi$ and}\\
  \text{some $\gamma'' \in \Conf(M')$ with $(\gamma', \gamma'') \in \Rel(M')_{\Evt(\sigma)(e)(\beta)}$ and}\\[.5ex]
  \text{$(\reduct{\dataSt(M')(\gamma')}{\sigma}, \reduct{\dataSt(M')(\gamma'')}{\sigma}), \beta \models[2\DATA]_{\Attr(\Sigma), X} \psi$}
\end{array}
\just{sat.\ cond.\ $\DATA$, sat.\ cond.\ $2\DATA$}
  \reduct{M'}{\sigma}, v, \gamma'' \models[\EDHML]_{\Sigma, S} \varrho
\\[-.5ex]&\qquad\begin{array}[t]{@{}l@{}}
  \text{for all $\beta : X \to \DATA$ such that $\dataSt(M')(\gamma'), \beta \models[\DATA]_{\Attr(\Sigma'), X} \Frm[\DATA]{\Attr(\sigma)}{X}(\phi)$ and}\\
  \text{some $\gamma'' \in \Conf(M')$ with $(\gamma', \gamma'') \in \Rel(M')_{\Evt(\sigma)(e)(\beta)}$ and}\\[.5ex]
  \text{$(\dataSt(M')(\gamma'), \dataSt(M')(\gamma'')), \beta \models[2\DATA]_{\Attr(\Sigma'), X} \Frm[2\DATA]{\Attr(\sigma)}{X}(\psi)$}
\end{array}
\just{I.\,H.}
  M', v, \gamma'' \models[\EDHML]_{\Sigma', S} \Frm[\EDHML]{\sigma}{S}(\varrho)
\\[-.5ex]&\qquad\begin{array}[t]{@{}l@{}}
  \text{for all $\beta : X \to \DATA$ such that $\dataSt(M')(\gamma'), \beta \models[\DATA]_{\Attr(\Sigma'), X} \Frm[\DATA]{\Attr(\sigma)}{X}(\phi)$ and}\\
  \text{some $\gamma'' \in \Conf(M')$ with $(\gamma', \gamma'') \in \Rel(M')_{\Evt(\sigma)(e)(\beta)}$ and}\\[.5ex]
  \text{$(\dataSt(M')(\gamma'), \dataSt(M')(\gamma'')), \beta \models[2\DATA]_{\Attr(\Sigma'), X} \Frm[2\DATA]{\Attr(\sigma)}{X}(\psi)$}
\end{array}
\just{def.\ $\models[\EDHML]$}
  M', v, \gamma' \models[\EDHML]_{\Sigma', S} \dlwp{\Evt(\sigma)(e)(X)}{\Frm[\DATA]{\Attr(\sigma)}{X}(\phi)}{\Frm[2\DATA]{\Attr(\sigma)}{X}(\psi)}{\Frm[\EDHML]{\sigma}{S}(\varrho)}
\just{def.\ $\Frm[\EDHML]{\sigma}{S}$}
  M', v, \gamma' \models[\EDHML]_{\Sigma', S} \Frm[\EDHML]{\sigma}{S}(\dlwp{e(X)}{\phi}{\psi}{\varrho})
\end{align*}
\end{proof}
\end{techreport}

For a $\Sigma \in |\Sig[\EDHML]|$, an $M \in |\Str[\EDHML](\Sigma)|$, and a
$\rho \in \Sen[\EDHML](\Sigma)$ the \emph{satisfaction relation} $M
\models[\EDHML]_{\Sigma} \rho$ holds if, and only if, $M, \emptyset, \gamma_0
\models[\EDHML]_{\Sigma, \emptyset} \rho$ for all $\gamma_0 \in \Conf_0(M)$.

\begin{theorem}\label{thm:sat-cond}
$(\Sig[\EDHML], \Str[\EDHML], \Sen[\EDHML], {\models[\EDHML]})$ is an institution.
\end{theorem}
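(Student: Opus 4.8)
The plan is to verify the four institution axioms in turn, observing that three of the four ingredients have already been fixed: $\Sig[\EDHML]$ is the category of event/data signatures and their morphisms, $\Str[\EDHML]$ assigns to each signature the discrete category of its structures together with the reduct on morphisms, and $\Sen[\EDHML]$ assigns the sets $\Frm[\EDHML]{\Sigma}{\emptyset}$ together with the translations $\Frm[\EDHML]{\sigma}{\emptyset}$. Hence the actual work is to check that $\Str[\EDHML]$ and $\Sen[\EDHML]$ are functors and that the satisfaction condition holds at the level of sentences. I would dispatch the satisfaction condition first, as it is almost entirely delegated to \cref{lem:sat-cond}, and only then return to functoriality.

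For the satisfaction condition, fix $\sigma : \Sigma \to \Sigma'$, a $\Sigma'$-structure $M'$, and a sentence $\rho \in \Sen[\EDHML](\Sigma) = \Frm[\EDHML]{\Sigma}{\emptyset}$. Unfolding the sentence-level relation, $\reduct{M'}{\sigma} \models[\EDHML]_{\Sigma} \rho$ means $\reduct{M'}{\sigma}, \emptyset, \gamma_0 \models[\EDHML]_{\Sigma, \emptyset} \rho$ for all $\gamma_0 \in \Conf_0(\reduct{M'}{\sigma})$, while $M' \models[\EDHML]_{\Sigma'} \Frm[\EDHML]{\sigma}{\emptyset}(\rho)$ means $M', \emptyset, \gamma_0 \models[\EDHML]_{\Sigma', \emptyset} \Frm[\EDHML]{\sigma}{\emptyset}(\rho)$ for all $\gamma_0 \in \Conf_0(M')$. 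Since $\Conf_0(\reduct{M'}{\sigma}) = \Conf_0(M')$ by the definition of the reduct, both sides quantify over the same index set; and since $\Conf_0(\reduct{M'}{\sigma}) \subseteq \Conf(\reduct{M'}{\sigma})$, each such $\gamma_0$ together with the empty assignment $v = \emptyset$ meets the hypotheses of \cref{lem:sat-cond} at $S = \emptyset$. The lemma then yields the per-configuration equivalence $\reduct{M'}{\sigma}, \emptyset, \gamma_0 \models[\EDHML]_{\Sigma, \emptyset} \rho \iff M', \emptyset, \gamma_0 \models[\EDHML]_{\Sigma', \emptyset} \Frm[\EDHML]{\sigma}{\emptyset}(\rho)$, and conjoining over the common set of initial configurations gives the sentence-level biconditional.

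It then remains to check functoriality. For $\Sen[\EDHML]$ I would argue by structural induction on formulæ that $\Frm[\EDHML]{\id{\Sigma}}{S}$ is the identity and that $\Frm[\EDHML]{\sigma; \tau}{S} = \Frm[\EDHML]{\tau}{S} \compfun \Frm[\EDHML]{\sigma}{S}$; every case reduces to the corresponding law for the component translations $\Evt(-)$, $\Frm[\DATA]{\Attr(-)}{X}$, and $\Frm[2\DATA]{\Attr(-)}{X}$, which are themselves functorial, the data and transition predicate translations being \CASL-substitutions along $\Attr(-)$. For $\Str[\EDHML]$, note that each $\Str[\EDHML](\Sigma)$ is discrete, so preservation of morphisms is automatic and only the object-level equations $\reduct{M'}{\id{\Sigma}} = M'$ and $\reduct{(\reduct{M''}{\tau})}{\sigma} = \reduct{M''}{(\sigma; \tau)}$ need checking; the identity case is immediate from the definition of the reduct.

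The main obstacle is the composition law for the reduct. Both $\Conf(\reduct{M''}{(\sigma; \tau)})$ and $\Conf(\reduct{(\reduct{M''}{\tau})}{\sigma})$ are defined as inductive closures of $\Conf_0(M'')$, so I would show they coincide by comparing their generating transition steps: a step for $e(X) \in \Evt(\Sigma)$ out of a configuration already in the reduct uses $\Rel(\reduct{M''}{\tau})_{\Evt(\sigma)(e)(\beta)}$ on the inside, which by \cref{lem:relative}\cref{it:lem:relative:trans} agrees with $\Rel(M'')_{\Evt(\tau)(\Evt(\sigma)(e))(\beta)} = \Rel(M'')_{\Evt(\sigma; \tau)(e)(\beta)}$, exactly the transitions used by the outer reduct; an induction on reachability then identifies the two closures. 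Preservation of the remaining data---the data-state labellings and initial configurations---follows because $\Attr(-)$ is a functor and the data-state reduct is precomposition, so $\reduct{(\reduct{\omega''}{\tau})}{\sigma} = \reduct{\omega''}{(\sigma; \tau)}$. Conjoining these observations establishes the composition law and completes the verification that all four institution axioms hold.
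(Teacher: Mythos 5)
Your proposal is correct and takes essentially the same route as the paper: the satisfaction condition is obtained by instantiating \cref{lem:sat-cond} at $S = \emptyset$ and $v = \emptyset$ over the common set of initial configurations $\Conf_0(\reduct{M'}{\sigma}) = \Conf_0(M')$, which is exactly the paper's (one-line) argument. The functoriality checks you add --- in particular the composition law for reducts established via \cref{lem:relative}\cref{it:lem:relative:trans} --- are sound and merely fill in details the paper leaves implicit.
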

\begin{techreport}
\begin{proof}
The satisfaction condition that for any $\sigma : \Sigma \to \Sigma'$ in
$\Sig[\EDHML]$, $M' \in |\Str[\EDHML](\Sigma')|$, and $\rho \in
\Sen[\EDHML](\Sigma)$, it holds that
\begin{equation*}
  \Str[\EDHML](\sigma)(M') \models[\EDHML]_{\Sigma} \rho
\iff
  M' \models[\EDHML]_{\Sigma'} \Sen[\EDHML](\sigma)(\rho)
\end{equation*}
directly follows from \cref{lem:sat-cond}.
\end{proof}
\end{techreport}

\subsection{Representing Simple UML State Machines in $\EDHML$}\label{sec:simple_uml_state_machines}

\begin{algorithm}[t]
\caption{Constructing an $\EDHML$-sentence from a set of transition specifications}
\label{alg:sen-op-spec}
\begin{algorithmic}[1]
\Require $\raisebox{8pt}{} T \equiv \text{a set of transition specifications}$
\Requirx $\mathit{Im}_T(c) = \{ (\phi, e(X), \psi, c') \mid (c, \phi, e(X), \psi, c') \in T \}$
\Requirx $\mathit{Im}_T(c, e(X)) = \{ (\phi, \psi, c') \mid (c, \varphi, e(X), \psi, c') \in T \}$
\Statex
\Function {$\mathrm{sen}$}{$c, I, V, B$} \Comment{$c$: state, $I$: image to visit, $V$: states to visit, $B$: bound states}
  \If{$I \neq \emptyset$}
    \State $(\phi, e(X), \psi, c') \gets \Choose\ I$
    \If{$c' \in B$}
      \State \Return $\hyat{c}{\dlwp{e(X)}{\phi}{\psi}{(c' \land \mathrm{sen}(c, I \setminus \{ (\phi, e(X), \psi, c') \}, V, B))}}$
    \Else
      \State \Return $\hyat{c}{\dlwp{e(X)}{\phi}{\psi}{(\hybind{c'}{\mathrm{sen}(c, I \setminus \{ (\phi, e(X), \psi, c') \}, V, B \cup \{ c' \}))}}}$
    \EndIf
  \EndIf
  \State $V \gets V \setminus \{ c \}$
  \If{$V \neq \emptyset$}
    \State $c' \gets \Choose\ B \cap V$
    \State \Return $\mathrm{sen}(c', \mathit{Im}_T(c'), V, B)$
  \EndIf
  \State \Return $(\bigwedge_{c \in B} \mathrm{fin}(c)) \land \bigwedge_{c_1 \in B, c_2 \in B \setminus \{ c_1 \}} \neg\hyat{c_1}{c_2}$
\EndFunction
\Statex
\Function {$\mathrm{fin}$}{$c$}
\State \Return $(\hyatop c) \begin{array}[t]{@{}l@{}}
                                \bigwedge_{e(X) \in \Evt(\EDSign(U))} \bigwedge_{P \subseteq \mathit{Im}_T(c, e(X))}\\{}
                                [e(X)\nfatslash\begin{array}[t]{@{}l@{}}
                                              \big(\bigwedge_{(\phi, \psi, c') \in P} (\phi \land \psi)\big) \land{}\\
                                              \neg\big(\bigvee_{(\phi, \psi, c') \in \mathit{Im}_T(c, e(X)) \setminus P} (\phi \land \psi)\big)
                                ] \big(\bigvee_{(\phi, \psi, c') \in P} c'\big)
                                            \end{array}
                              \end{array}$
\EndFunction
\end{algorithmic}
\end{algorithm}
The hybrid modal logic $\EDHML$ is expressive enough to characterise the model
class of a simple UML state machine $U$ by a single sentence $\varrho_U$, i.e.,
an event/data structure $M$ is a model of $U$ if, and only if, $M
\models[\EDHML]_{\EDSign(U)} \varrho_U$.  Such a characterisation is achieved by
means of \cref{alg:sen-op-spec} that is a slight variation of the
char\-ac\-ter\-i\-sa\-tion algorithm for so-called operational specifications
within $\EDHDL$~\cite{hennicker-madeira-knapp:fase:2019} by including also
events with data arguments.  The algorithm constructs a sentence expressing that
semantic transitions according to explicit syntactic transition specifications
are indeed possible and that no other semantic transitions not adhering to any
of the syntactic transition specifications exist. For a set of transition
specifications $T$, a call $\mathrm{sen}(c,\allowbreak I,\allowbreak
V,\allowbreak B)$ performs a recursive breadth-first traversal starting from
$c$, where $I$ holds the unprocessed quadruples $(\phi,\allowbreak
e(X),\allowbreak \psi,\allowbreak c')$ of transitions in $T$ outgoing from $c$,
$V$ the remaining states to visit, and $B$ the set of already bound states.  The
function first requires the existence of each outgoing transition of $I$ in the
resulting formula, binding any newly reached state.  Having visited all states
in $V$, it requires that no other transitions from the states in $B$ exist using
calls to $\mathrm{fin}$, and adds the requirement that all states in $B$ are
pairwise different.  Formula $\mathrm{fin}(c)$ expresses that at $c$, for all
events $e(X)$ and for all subsets $P$ of the transitions in $T$ outgoing from
$c$, whenever an $e(X)$-transition can be done with the combined effect of $P$
but not adhering to any of the effects of the currently not selected
transitions, the $e(X)$-transition must have one of the states as its target
that are target states of $P$.

\begin{example}\label{ex:machine-axioms} 
Applying \cref{alg:sen-op-spec} to the set of explicitly mentioned, ``black''
transition specifications $T$ of the simple UML state machine $\mathit{Counter}$
in \cref{fig:uml-counter}, i.e., calling $\splitatcommas{\mathrm{sen}(\state{s1},
\mathit{Im}_{T}(\state{s1}), \{ \state{s1}, \state{s2} \}, \{ \state{s1} \})}$
yields $\varrho_{\state{s1}, \state{s1}}$ with
\begin{gather*}
  \varrho_{\state{s1},\state{s1}} = \hyat{\state{s1}}{\dlwp{\evt{inc}(x)}{\attr{cnt} + x \leq 4}{\attr{cnt}' = \attr{cnt} + x}{(\state{s1} \land \varrho_{\state{s1},\state{s2}})}}
\\
  \varrho_{\state{s1},\state{s2}} = \hyat{\state{s1}}{\dlwp{\evt{inc}(x)}{\attr{cnt} + x = 4}{\attr{cnt}' = \attr{cnt} + x}{\hybind{\state{s2}}{(\varrho_{\state{s2},\state{s1}})}}}
\\
  \varrho_{\state{s2},\state{s1}} = \hyat{\state{s2}}{\dlwp{\evt{reset}}{\truefrm}{\attr{cnt}' = 0}{(\state{s1} \land \varrho_{\mathrm{fin}})}}
\\
  \varrho_{\mathrm{fin}} = \varrho_{\mathrm{fin}(\state{s1})} \land \varrho_{\mathrm{fin}(\state{s2})} \land \neg\hyat{\state{s1}}{\state{s2}}
\\
  \varrho_{\mathrm{fin}(\state{s1})} = \hyat{\state{s1}}{\big(}%
  [\evt{inc}(x) \nfatslash \neg(\stacked{(\attr{cnt} + x \leq 4 \land \attr{cnt}' = \attr{cnt} + x) \lor{}\\ (\attr{cnt} + x = 4 \land \attr{cnt}' = 4))]\falsefrm \land{}}
\\[-4pt]\phantom{\varrho_{\mathrm{fin}(\state{s1})} = \hyat{\state{s1}}{\big(}}%
  [\evt{inc}(x) \nfatslash \stacked{(\attr{cnt} + x \leq 4 \land \attr{cnt}' = \attr{cnt} + x) \land{}\\ \neg(\attr{cnt} + x = 4 \land \attr{cnt}' = 4)]\state{s1} \land{}}
\\[-4pt]\phantom{\varrho_{\mathrm{fin}(\state{s1})} = \hyat{\state{s1}}{\big(}}%
  [\evt{inc}(x) \nfatslash \stacked{(\attr{cnt} + x = 4 \land \attr{cnt}' = 4) \land{}\\ \neg(\attr{cnt} + x \leq 4 \land \attr{cnt}' = \attr{cnt} + x)]\state{s2} \land{}}
\\[-4pt]\phantom{\varrho_{\mathrm{fin}(\state{s1})} = \hyat{\state{s1}}{\big(}}%
  [\evt{inc}(x) \nfatslash \stacked{(\attr{cnt} + x \leq 4 \land \attr{cnt}' = \attr{cnt} + x) \land{}\\ (\attr{cnt} + x = 4 \land \attr{cnt}' = 4)](\state{s1} \lor \state{s2}) \land{}}
\\[-4pt]\phantom{\varrho_{\mathrm{fin}(\state{s1})} = \hyat{\state{s1}}{\big(}}%
  \dlbox{\evt{reset}}{\truefrm}{\falsefrm}%
\big)
\\
\varrho_{\mathrm{fin}(\state{s2})} = \hyat{\state{s2}}{\big(%
\begin{array}[t]{@{}l@{}}
  \dlbox{\evt{inc}(x)}{\truefrm}{\falsefrm} \land{}\\
  \dlbox{\evt{reset}}{\neg(\attr{cnt}' = 0)}{\falsefrm} \land{}\\
  \dlbox{\evt{reset}}{\attr{cnt}' = 0}{\state{s1}}%
\big)
\end{array}}
\end{gather*}

In fact, there is no outgoing ``black'' transition for $\evt{reset}$ from
$\state{s1}$, thus $P = \emptyset$ is the only choice for this event in
$\mathrm{fin}(\state{s1})$ and the clause
$\dlbox{\evt{reset}}{\truefrm}{\falsefrm}$ is included.  For $\evt{inc}(x)$
there are two outgoing transitions resulting four different clauses checking
whether none, the one or the other, or both transitions are executable.
\end{example}

In order to apply the algorithm to simple UML state machines, the idling
self-loops for achieving input-enabledness first have to be made explicit.  For
a syntactically input-enabled simple UML state machine $U$ a characterising
sentence then reads
\begin{equation*}
  \varrho_U
=
  \hybind{c_0}{\varphi_0 \land \mathrm{sen}(c_0, \mathit{Im}_{\Trans(U)}(c_0), \CtrlSt(U), \{ c_0 \})}
\ \text{,}
\end{equation*}
where $c_0 = \ctrlSt_0(U)$ and $\varphi_0 = \datapred_0(U)$.  Due to syntactic
reachability, the bound states $B$ of \cref{alg:sen-op-spec} become $\CtrlSt(U)$
when $\mathrm{sen}$ is called for $B = \{ \ctrlSt_0(U) \}$ and $V$ reaches
$\emptyset$.

\section{A Theoroidal Comorphism from $\EDHML$ to $\CASL$}\label{sec:comorphism}

We define a theoroidal comorphism from $\EDHML$ to $\CASL$.  The construction
mainly follows the standard translation of modal logics to first-order
logic~\cite{blackburn-de-rijke-venema:2001} which has been considered for hybrid
logics also on an institutional level~\cite{madeira:phd:2013,diaconescu-madeira:mscs:2016}.

\begin{figure}[!t]
\begin{hetcasl}
\KW{from} $\textit{Basic/StructuredDatatypes}$ \KW{get} $\textsc{Set}$ \% $\text{import finite sets}$\\
\KW{spec} $\textsc{Trans}_{\Sigma} = \Data$\\
\KW{then} \=\KW{free} \KW{type} $\mathrm{Evt} \cln\cln= \tau_{\mathrm{e}}(\Evt(\Sigma))$\\
\hspace*{3cm}\% $\tau_{\mathrm{e}}(\{ e(X) \}) = e(\data^{|X|})$, $\tau_{\mathrm{e}}(\{ e(X) \} \cup E) = e(\data^{|X|}) \mid \tau_{\mathrm{e}}(E)$\\
\>\KW{free} \KW{type} $\mathrm{EvtNm} \cln\cln= \tau_{\mathrm{n}}(\Evt(\Sigma))$
\% $\tau_{\mathrm{n}}(\{ e(X) \}) = e$, $\tau_{\mathrm{n}}(\{ e(X) \} \cup E) = e \mid \tau_{\mathrm{n}}(E)$\\
\>\KW{op} $\mathrm{nm} \cln \mathrm{Evt} \to \mathrm{EvtNm}$\\
\>\KW{axiom} $\forall x_1, \ldots, x_n \cln \mathit{dt} \axdot \mathrm{nm}(e(x_1, \ldots, x_n)) = e$
\% for each $e(x_1, \ldots, x_n) \in \Evt(\Sigma)$\\
\KW{then} $\textsc{Set}[$sort $\mathrm{EvtNm}]$\\
\KW{then} \=\KW{sort} $\mathrm{Ctrl}$\\
\>\KW{free type} $\mathrm{Conf} \cln\cln= \mathrm{conf}(\mathrm{c} \cln \mathrm{Ctrl}; \tau_{\mathrm{a}}(\Attr(\Sigma)))$\\
\hspace*{3cm}\% $\tau_{\mathrm{a}}(\{ a \}) = a \cln \data$, $\tau_{\mathrm{a}}(\{ a \} \cup A) = a \cln \data; \tau_{\mathrm{a}}(A)$\\
\>\KW{preds} \=$\mathrm{init} \cln \mathrm{Conf}$;\\
\>\>$\mathrm{trans} \cln \mathrm{Conf} \times \mathrm{Evt} \times \mathrm{Conf}$\\
\>$\axdot\,\exists g \cln \mathrm{Conf} \axdot \mathrm{init}(g)$ \% $\textrm{there is some initial configuration}$\\
\>$\axdot\, \forall g, g' \cln \mathrm{Conf} \axdot \mathrm{init}(g) \land \mathrm{init}(g') \Rightarrow \mathrm{c}(g) = \mathrm{c}(g')$ \% $\textrm{single initial control state}$\\
\>\KW{free} $\{$ \=\KW{pred} $\mathrm{reachable} \cln \mathit{Set}[\mathrm{EvtNm}] \times \mathrm{Conf} \times \mathrm{Conf}$\\
\>\>$\forall g, g', g'' \cln \mathrm{Conf}, E \cln \mathit{Set}[\mathrm{EvtNm}], e \cln \mathrm{Evt}$\\
\>\>$\axdot$ $\mathrm{reachable}(E, g, g)$\\
\>\>$\axdot$ $\mathrm{reachable}(E, g, g') \land \mathrm{nm}(e) \in E \land \mathrm{trans}(g', e, g'') \Rightarrow \mathrm{reachable}(E, g, g'')$ $\}$\\
\>\KW{then} \KW{preds} \=$\mathrm{reachable}(E \cln \mathit{Set}[\mathrm{EvtName}], g \cln \mathrm{Conf}) \Leftrightarrow{}$\\
\hspace*{5cm}$\exists g_0 \cln \mathrm{Conf} \axdot \mathrm{init}(g_0) \land \mathrm{reachable}(E, g_0, g)$;\\
\>\>$\mathrm{reachable}(g \cln \mathrm{Conf}) \Leftrightarrow \mathrm{reachable}(\Evt(\Sigma), g)$\\
\KW{end}
\end{hetcasl}
\vskip-18pt
\caption{Frame for translating $\EDHML$ into $\CASL$}\label{lst:edhml-trans}
\end{figure}
The basis is a representation of $\EDHML$-signatures and the frame given by
$\EDHML$-structures as a $\CASL$-specification as shown in
\cref{lst:edhml-trans}.  The signature translation
\begin{equation*}
  \nu^{\Sig} : \Sig[\EDHML] \to \Pres[\CASL]
\end{equation*}
maps a $\EDHML$-signature $\Sigma$ to the $\CASL$-theory presentation given by
$\textsc{Trans}_{\Sigma}$ and a $\EDHML$-signature morphism to the corresponding
theory presentation morphism.  $\textsc{Trans}_{\Sigma}$ first of all covers the
events and event names according to $\Evt(\Sigma)$ (types $\mathrm{Evt}$ and
$\mathrm{EvtNm}$ with several alternatives separated by ``$|$'') and the
configurations (type $\mathrm{Conf}$ with a single constructor
``$\mathrm{conf}$'') with their control states (sort $\mathrm{Ctrl}$) and data
states given by assignments to the attributes from $\Attr(\Sigma)$ (separated by
``;'').  The remainder of $\textsc{Trans}_{\Sigma}$ sets the frame for
describing reachable transition systems with a set of initial configurations
(predicate $\mathrm{init}$), a transition relation (predicate $\mathrm{trans}$)
and reachability predicates.  The specification of the predicate
$\mathrm{reachable}$ uses \CASL's ``structured free'' construct to ensure
reachability to be inductively defined.  The model translation
\begin{equation*}
  \nu^{\Mod}_{\Sigma} : \Mod[\CASL](\nu^{\Sig}(\Sigma)) \to \Str[\EDHML](\Sigma)
\end{equation*}
then can rely on this encoding.  In particular, for a model $M' \in
\Mod[\CASL](\nu^{\Sig}(\Sigma))$, there are, using the bijection $\iota_{M',
  \data} : \data^{M'} \isorel \DATA$, an injective map $\iota_{M',
  \mathrm{Conf}} : \mathrm{Conf}^{M'} \injto \mathrm{Ctrl}^{M'} \times
\DataSt(\Attr(\Sigma))$ and a bijective map $\iota_{M', \mathrm{Evt}} :
\mathrm{Evt}^{M'} \isorel \{ e(\beta) \mid e(X) \in \Evt(\Sigma),\ \beta : X \to
\DATA \}$.  The $\EDHML$-structure resulting from a $\CASL$-model of
$\textsc{Trans}_{\Sigma}$ can thus be defined by
\begin{itemize}[itemsep=2pt]
  \item $\Conf(\nu^{\Mod}_{\Sigma}(M')) = \iota_{M', \mathrm{Conf}}^{-1}(\{ g' \in M'_{\mathrm{Conf}} \mid \mathrm{reachable}^{M'}(g') \})$

  \item $R(\nu^{\Mod}_{\Sigma}(M'))_{e(\beta)} = \{
\begin{array}[t]{@{}l@{}}
  (\gamma, \gamma') \in \Conf(\nu^{\Mod}_{\Sigma}(M')) \times \Conf(\nu^{\Mod}_{\Sigma}(M')) \mid{}\\
  \quad \mathrm{trans}^{M'}(\iota_{M', \mathrm{Conf}}(\gamma), \iota_{M', \mathrm{Evt}}^{-1}(e(\beta)), \iota_{M', \mathrm{Conf}}(\gamma')) \})
\end{array}$

  \item $\Conf_0(\nu^{\Mod}_{\Sigma}(M')) = \{ \gamma \in \Conf(\nu^{\Mod}_{\Sigma}(M')) \mid \mathrm{init}^{M'}(\iota_{M', \mathrm{Conf}}(\gamma)) \})$

  \item $\dataSt(\nu^{\Mod}_{\Sigma}(M')) = \{ (c, \omega) \in \Conf(\nu^{\Mod}_{\Sigma}(M')) \mapsto \omega \}$
\end{itemize}

For $\EDHML$-sentences, we first define a formula translation
\begin{equation*}
  \nu^{\Fm}_{\Sigma, S, g} : \Frm[\EDHML]{\Sigma}{S} \to \Frm[\CASL]{\nu^{\Sign}(\Sigma)}{S \cup \{ g \}}
\end{equation*}
which, mimicking the standard translation, takes a variable $g \cln
\mathrm{Conf}$ as a parameter that records the ``current configuration'' and
also uses a set $S$ of state names for the control states.  The translation
embeds the data state and 2-data state formulæ using the substitution
$\Attr(\Sigma)(g) = \{ a \mapsto a(g) \mid a \in \Attr(\Sigma) \}$ for replacing
the attributes $a \in \Attr(\Sigma)$ by the accessors $a(g)$.  The translation of $\EDHML$-formulæ then reads
\begin{itemize}[itemsep=2pt]
  \item $\nu^{\Fm}_{\Sigma, S, g}(\varphi) = \Frm[\CASL]{\nu^{\Sign}(\Sigma)}{\Attr(\Sigma)(g)}(\varphi)$

  \item $\nu^{\Fm}_{\Sigma, S, g}(s) = (s = \mathrm{c}(g))$

  \item $\nu^{\Fm}_{\Sigma, S, g}(\hybind{s}{\varrho}) = \exists s \cln \mathrm{Ctrl}
\,.\, s = \mathrm{c}(g) \land \nu^{\Fm}_{\Sigma, S \uplus \{ s \}, g}(\varrho)$

  \item $\nu^{\Fm}_{\Sigma, S, g}(\hyat[F]{s}{\varrho}) = \forall g' \cln
\mathrm{Conf} \,.\,
  (\mathrm{c}(g') = s \land \mathrm{reachable}(F, g')) \Rightarrow
  \nu^{\Fm}_{\Sigma, S, g'}(\varrho)$

  \item $\nu^{\Fm}_{\Sigma, S, g}(\mlbox[F]{\varrho}) = \forall g' \cln
\mathrm{Conf} \,.\,
  \mathrm{reachable}(F, g, g') \Rightarrow
  \nu^{\Fm}_{\Sigma, S, g'}(\varrho)$

  \item $\nu^{\Fm}_{\Sigma, S, g}(\dldia{e(X)}{\psi}{\varrho}) =
\stacked{\exists X \cln \data \,.\, \exists g' \cln \mathrm{Conf} \,.\, \mathrm{trans}(g, e(X), g') \land{}\\
         \quad\Frm[\CASL]{\nu^{\Sign}(\Sigma)}{\Attr(\Sigma)(g) \cup \Attr(\Sigma)(g') \cup \id{X}}(\psi) \land \nu^{\Fm}_{\Sigma, S, g'}(\varrho)}$

  \item $\nu^{\Fm}_{\Sigma, S, g}(\dlwp{e(X)}{\phi}{\psi}{\varrho}) =
\stacked{\forall X \cln \data \,.\, \Frm[\CASL]{\nu^{\Sign}(\Sigma)}{\Attr(\Sigma)(g) \cup \id{X}}(\phi) \Rightarrow{}\\
         \quad\exists g' \cln \mathrm{Conf} \,.\, \mathrm{trans}(g, e(X), g') \land{}\\
         \quad\quad\Frm[\CASL]{\nu^{\Sign}(\Sigma)}{\Attr(\Sigma)(g) \cup \Attr(\Sigma)(g') \cup \id{X}}(\psi) \land \nu^{\Fm}_{\Sigma, S, g'}(\varrho)}$

  \item $\nu^{\Fm}_{\Sigma, S, g}(\neg \varrho) = \neg \nu^{\Fm}_{\Sigma, S, g}(\varrho)$

  \item $\nu^{\Fm}_{\Sigma, S, g}(\varrho_1 \lor \varrho_2) = \nu^{\Fm}_{\Sigma, S, g}(\varrho_1) \lor \nu^{\Fm}_{\Sigma, S, g}(\varrho_2)$
\end{itemize}

\begin{example}
The translation of $\hyat{\state{s1}}{\dlwp{\evt{inc}(x)}{\attr{cnt} + x
    \leq 4}{\attr{cnt}' = \attr{cnt} + x}{\state{s1}}}$ over the state set
$\{ \state{s1} \}$ and the configuration variable $g$ is
\begin{align*}\quad&
  \nu^{\Fm}_{\Sigma, \{ \state{s1} \}, g}(\hyat{\state{s1}}{\dlwp{\evt{inc}(x)}{\attr{cnt} + x \leq 4}{\attr{cnt}' = \attr{cnt} + x}{\state{s1}}})
\\&\llap{$=$\ \ }{}
\begin{array}[t]{@{}l@{}}
  \forall g' \cln \mathrm{Conf} \,.\, (\mathrm{c}(g') = \state{s1} \land \mathrm{reachable}(g')) \Rightarrow{}\\
  \qquad\nu^{\Fm}_{\Sigma, \{ \state{s1} \}, g'}(\dlwp{\evt{inc}(x)}{\attr{cnt} + x \leq 4}{\attr{cnt}' = \attr{cnt} + x}{\state{s1}})
\end{array}
\\&\llap{$=$\ \ }{}
\begin{array}[t]{@{}l@{}}
  \forall g' \cln \mathrm{Conf} \,.\, (\mathrm{c}(g') = \state{s1} \land \mathrm{reachable}(g')) \Rightarrow{}\\
  \qquad\forall x \cln \data \,.\, \mathrm{cnt}(g') + x \leq 4 \Rightarrow{}\\
  \qquad\qquad\exists g'' \cln \mathrm{Conf} \,.\,
\begin{array}[t]{@{}l@{}}
  \mathrm{trans}(g', \mathrm{inc}(x), g'') \land{}\\
  \mathrm{cnt}(g'') = \mathrm{cnt}(g') + x \land
  \nu^{\Fm}_{\Sigma, \{ \state{s1} \}, g''}(\state{s1})
\end{array}
\end{array}
\\&\llap{$=$\ \ }{}
\begin{array}[t]{@{}l@{}}
  \forall g' \cln \mathrm{Conf} \,.\, (\mathrm{c}(g') = \state{s1} \land \mathrm{reachable}(g')) \Rightarrow{}\\
  \qquad\forall x \cln \data \,.\, \mathrm{cnt}(g') + x \leq 4 \Rightarrow{}\\
  \qquad\qquad\exists g'' \cln \mathrm{Conf} \,.\,
\begin{array}[t]{@{}l@{}}
  \mathrm{trans}(g', \mathrm{inc}(x), g'') \land{}\\
  \mathrm{cnt}(g'') = \mathrm{cnt}(g') + x \land
  \state{s1} = \mathrm{c}(g'')
\end{array}
\end{array}
\end{align*}
\end{example}



Building on the translation of formulæ, the sentence translation
\begin{equation*}
  \nu^{\Sen}_{\Sigma} : \Sen[\EDHML](\Sigma) \to \Sen[\CASL](\nu^{\Sign}(\Sigma))
\end{equation*}
only has to require additionally that evaluation starts in an initial state:
\begin{itemize}
  \item $\nu^{\Sen}_{\Sigma}(\rho) = \forall g \cln \mathrm{Conf} \,.\, \mathrm{init}(g) \Rightarrow \nu^{\Fm}_{\Sigma, \emptyset, g}(\rho)$
\end{itemize}

The translation of $\CASL$-models of $\textsc{Trans}_{\Sigma}$ into
$\EDHML$-structures and the translation of $\EDHML$-formulæ into $\CASL$-formulæ
over $\textsc{Trans}_{\Sigma}$ fulfil the requirements of the ``open''
satisfaction condition of theoroidal comorphisms:

\begin{lemma}\label{lem:co-morph}
For a $\varrho \in \Frm[\EDHML]{\Sigma}{S}$, an $M' \in
\Mod[\CASL](\nu^{\Sig}(\Sigma))$, a $v : S \to
\CtrlSt(\nu^{\Mod}_{\Sigma}(M'))$, and a $\gamma \in
\Conf(\nu^{\Mod}_{\Sigma}(M'))$ it holds with $\beta'_{M', g}(v, \gamma) =
\iota_{M', \mathrm{Ctrl}}^{-1} \compfun v \cup \{ g \mapsto \iota_{M',
  \mathrm{Conf}}(\gamma) \}$ that
\begin{equation*}
  \nu^{\Mod}_{\Sigma}(M'), v, \gamma \models[\EDHML]_{\Sigma, S} \varrho
\iff
  M', \beta'_{M', g}(v, \gamma) \models[\CASL]_{\nu^{\Sign}(\Sigma), S \cup \{ g \}} \nu^{\Fm}_{\Sigma, S, g}(\varrho)
\ \text{.}
\end{equation*}
\end{lemma}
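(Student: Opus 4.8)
The plan is to prove the biconditional by structural induction on the event/data formula $\varrho$, with the configuration $\gamma$, the assignment $v$, and the tracking variable $g$ all kept general so that the induction hypothesis remains available at the sub-configurations and extended assignments reached inside the modal and hybrid operators. The argument mirrors the case analysis of \cref{lem:sat-cond}, but replaces the appeals to the reduct satisfaction condition by unfoldings of the definitions of $\nu^{\Mod}_{\Sigma}$ and $\nu^{\Fm}_{\Sigma, S, g}$ together with the general substitution lemma for \CASL. First I would dispatch the two atomic cases. For a data state sentence $\varphi$, the key observation is that $\dataSt(\nu^{\Mod}_{\Sigma}(M'))(\gamma)$ is, by the definition of the model translation, exactly the data-state component of $\iota_{M', \mathrm{Conf}}(\gamma)$; unfolding $\models[\DATA]$ through the bijection $\iota_{M', \data}$ and applying the substitution lemma to the attribute-to-accessor substitution $\Attr(\Sigma)(g)$ yields the equivalence. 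For a state variable $s$, both sides reduce, via the definition of $\beta'_{M', g}(v, \gamma)$ and the bijection $\iota_{M', \mathrm{Ctrl}}$, to the identity $v(s) = \ctrlSt(\nu^{\Mod}_{\Sigma}(M'))(\gamma)$. The boolean cases $\neg\varrho$ and $\varrho_1 \lor \varrho_2$ follow immediately from the induction hypothesis, and $\hybind{s}{\varrho}$ is handled by noting that the guarded existential $\exists s \cln \mathrm{Ctrl} \,.\, s = \mathrm{c}(g) \land \dots$ forces $s$ to denote $\mathrm{c}^{M'}(\iota_{M', \mathrm{Conf}}(\gamma))$, which matches the update $v\{ s \mapsto \ctrlSt(\nu^{\Mod}_{\Sigma}(M'))(\gamma) \}$ before invoking the induction hypothesis.

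For the modal cases $\dldia{e(X)}{\psi}{\varrho}$ and $\dlwp{e(X)}{\phi}{\psi}{\varrho}$, I would transport the quantifier witnesses across the two bijections supplied by the model translation: $\iota_{M', \mathrm{Evt}}$ identifies a \CASL witness of sort $\mathrm{Evt}$ with its arguments $X \cln \data$ with a pair $(e, \beta)$ for $\beta : X \to \DATA$, and $\iota_{M', \mathrm{Conf}}$ identifies a successor $g'$ with a configuration $\gamma'$, with $\mathrm{trans}^{M'}$ corresponding exactly to membership in $\Rel(\nu^{\Mod}_{\Sigma}(M'))_{e(\beta)}$. The guard $\phi$ and effect $\psi$ are then discharged by the substitution lemma applied to $\Attr(\Sigma)(g)$, $\Attr(\Sigma)(g')$, and $\id{X}$, after which the induction hypothesis applies at $\gamma'$; the only difference between the two operators is the existential-over-$X$ versus the universal-over-$X$-guarded-by-$\phi$ structure, which matches the $\models[\EDHML]$-clauses verbatim.

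The main obstacle lies in the two reachability-sensitive operators $\hyat[F]{s}{\varrho}$ and $\mlbox[F]{\varrho}$, and it is really an auxiliary fact rather than a case of the induction proper. What is needed is that the \CASL predicate $\mathrm{reachable}$ on $\textsc{Trans}_{\Sigma}$-models coincides with semantic $F$-reachability in $\nu^{\Mod}_{\Sigma}(M')$: that $\mathrm{reachable}^{M'}(F, g, g')$ holds exactly when $\iota_{M', \mathrm{Conf}}^{-1}(g')$ is $F$-reachable from $\iota_{M', \mathrm{Conf}}^{-1}(g)$, and likewise that the unary $\mathrm{reachable}^{M'}(F, g')$ captures $F$-reachability from an initial configuration. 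This is precisely where the ``structured free'' construct is essential: it guarantees that $\mathrm{reachable}^{M'}$ is the \emph{least} relation closed under the reflexivity and transition-extension clauses of \cref{lst:edhml-trans}, which is exactly the inductive definition underlying $\Conf^F$. I would establish the correspondence by induction on the length of reachability witnesses in both directions, checking in particular that the name-filtering condition $\mathrm{nm}(e) \in E$ matches restricting attention to transitions whose underlying event lies in $F$, and that the configuration domain $\{ g' \mid \mathrm{reachable}^{M'}(g') \}$ chosen in $\nu^{\Mod}_{\Sigma}(M')$ keeps the two notions of reachability compatible. Granting this correspondence, the \CASL quantifier over $g'$ with $\mathrm{c}(g') = s$ and $\mathrm{reachable}(F, g')$ ranges exactly over those $\gamma' \in \Conf^F(\nu^{\Mod}_{\Sigma}(M'))$ with $\ctrlSt(\nu^{\Mod}_{\Sigma}(M'))(\gamma') = v(s)$ (and analogously for $\mlbox[F]{\varrho}$ relative to $\gamma$), so a final appeal to the induction hypothesis closes both cases and, with them, the whole induction.
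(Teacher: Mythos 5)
Your proposal is correct and follows essentially the same route as the paper's proof: a structural induction on $\varrho$ with exactly this case split, transporting witnesses across the bijections $\iota_{M',\mathrm{Conf}}$, $\iota_{M',\mathrm{Evt}}$, $\iota_{M',\mathrm{Ctrl}}$ and discharging the guard and effect formulæ via the \CASL substitution lemma. The only difference is one of explicitness: the correspondence between $\mathrm{reachable}^{M'}$ and $F$-reachability in $\nu^{\Mod}_{\Sigma}(M')$, which you rightly isolate as an auxiliary induction resting on the structured-free semantics and the name-filtering condition, is compressed in the paper into a single ``def.\ $\models[\CASL]$'' step in the $\hyat[F]{s}{\varrho}$ and $\mlbox[F]{\varrho}$ cases.
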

\begin{techreport}
\begin{proof}
We apply induction on the structure of $\Sigma$-event/data formulæ.  We only
consider the cases $\varphi$, $s$, $\hybind{s}{\varrho}$,
$\hyat[F]{s}{\varrho}$, $\dldia{e(X)}{\psi}{\varrho}$, and
$\dlwp{e(X)}{\phi}{\psi}{\varrho}$; negation and disjunction are
straightforward.

\abovedisplayskip4pt
\smallskip\noindent\textit{Case $\varphi$}:
\begin{align*} &
  \nu^{\Mod}_{\Sigma}(M'), v, \gamma \models[\EDHML]_{\Sigma, S} \varphi
\just{def.\ $\models[\EDHML]$}
  \dataSt(\nu^{\Mod}_{\Sigma}(M'))(\gamma) \models[\DATA]_{\Attr(\Sigma)} \varphi
\just{def.\ $\models[\DATA]$, def.\ $\models[\CASL]$}
  M', \stacked{\iota^{-1}_{M', \data} \compfun \dataSt(\nu^{\Mod}_{\Sigma}(M'))(\gamma) \cup{}\\[.5ex]
               \{ g \mapsto \iota_{M', \mathrm{Conf}}(\gamma) \} \models[\CASL]_{\nu^{\Sign}(\Sigma), \{ g \}} \Frm[\CASL]{\nu^{\Sign}(\Sigma)}{\Attr(\Sigma)(g)}(\varphi)}
\just{def.\ $\models[\CASL]$}
  M', \beta'_{M', g}(v, \gamma) \models[\CASL]_{\nu^{\Sign}(\Sigma), S \cup \{ g \}}  \Frm[\CASL]{\nu^{\Sign}(\Sigma)}{\Attr(\Sigma)(g)}(\varphi)
\just{def.\ $\nu^{\Fm}$}
  M', \beta'_{M', g}(v, \gamma) \models[\CASL]_{\nu^{\Sign}(\Sigma), S \cup \{ g \}} \nu^{\Fm}_{\Sigma, S, g}(\varphi)
\end{align*}

\smallskip\noindent\textit{Case $s$}:
\begin{align*} &
  \nu^{\Mod}_{\Sigma}(M'), v, \gamma \models[\EDHML]_{\Sigma, S} s
\just{def.\ $\models[\EDHML]$}
  v(s) = \ctrlSt(\nu^{\Mod}_{\Sigma}(M'))(\gamma)
\just{def.\ $\nu^{\Mod}$, $\models[\CASL]$}
  M', \beta'_{M', g}(v, \gamma) \models[\CASL]_{\nu^{\Sign}(\Sigma), S \cup \{ g \}} s = \mathrm{c}(g)
\just{def.\ $\nu^{\Fm}$}
  M', \beta'_{M', g}(v, \gamma) \models[\CASL]_{\nu^{\Sign}(\Sigma), S \cup \{ g \}} \nu^{\Fm}_{\Sigma, S, g}(s)
\end{align*}

\smallskip\noindent\textit{Case $\hybind{s}{\varrho}$}:
\begin{align*} &
  \nu^{\Mod}_{\Sigma}(M'), v, \gamma \models[\EDHML]_{\Sigma, S} \hybind{s}{\varrho}
\just{def.\ $\models[\EDHML]$}
  \nu^{\Mod}_{\Sigma}(M'), v\{ s \mapsto \ctrlSt(\nu^{\Mod}_{\Sigma}(M'))(\gamma) \}, \gamma \models[\EDHML]_{\Sigma, S \uplus \{ s \}} \varrho
\just{I.\,H.}
  M', \beta'_{M', g}(v\{ s \mapsto \ctrlSt(\nu^{\Mod}_{\Sigma}(M'))(\gamma) \}, \gamma) \models[\CASL]_{\nu^{\Sign}(\Sigma), (S \uplus \{ s \}) \cup \{ g \}} \nu^{\Fm}_{\Sigma, S \uplus \{ s \}, g}(\varrho)
\just{def.\ $\models[\CASL]$}
  M', \beta'_{M', g}(v, \gamma) \models[\CASL]_{\nu^{\Sign}(\Sigma), S \cup \{ g \}} \exists s \cln \mathrm{Ctrl} \,.\, s = \mathrm{c}(g) \land \nu^{\Fm}_{\Sigma, S \uplus \{ s \}, g}(\varrho)
\just{def.\ $\nu^{\Fm}$}
  M', \beta'_{M', g}(v, \gamma) \models[\CASL]_{\nu^{\Sign}(\Sigma), S \cup \{ g \}} \nu^{\Fm}_{\Sigma, S, g}(\hybind{s}{\varrho})
\end{align*}

\smallskip\noindent\textit{Case $\hyat[F]{s}{\varrho}$}:
\begin{align*} &
  \nu^{\Mod}_{\Sigma}(M'), v, \gamma \models[\EDHML]_{\Sigma, S} \hyat[F]{s}{\varrho}
\just{def.\ $\models[\EDHML]$}
  \nu^{\Mod}_{\Sigma}(M'), v, \gamma' \models[\EDHML]_{\Sigma, S} \varrho
\\[-.0ex]&\qquad\text{for all $\gamma' \in \Conf^F(\nu^{\Mod}_{\Sigma}(M'))$ with $\ctrlSt(\nu^{\Mod}_{\Sigma}(M'))(\gamma') = v(s)$}
\just{I.\,H.}
  M', \beta'_{M', g'}(v, \gamma') \models[\CASL]_{\nu^{\Sign}(\Sigma), S \cup \{ g' \}} \nu^{\Fm}_{\Sigma, S, g'}(\varrho)
\\[-.0ex]&\qquad\text{for all $\gamma' \in \Conf^F(\nu^{\Mod}_{\Sigma}(M'))$ with $\ctrlSt(\nu^{\Mod}_{\Sigma}(M'))(\gamma') = v(s)$}
\just{def.\ $\models[\CASL]$}
  M', \beta'_{M', g}(v, \gamma) \models[\CASL]_{\nu^{\Sign}(\Sigma), S \cup \{ g \}}{}\\&\qquad\qquad \forall g' \cln \mathrm{Conf} \,.\, (\mathrm{c}(g') = s \land \mathrm{reachable}(F, g')) \Rightarrow \nu^{\Fm}_{\Sigma, S, g'}(\varrho)
\just{def.\ $\nu^{\Fm}$}
  M', \beta'_{M', g}(v, \gamma) \models[\CASL]_{\nu^{\Sign}(\Sigma), S \cup \{ g \}} \nu^{\Fm}_{\Sigma, S, g}(\hyat[F]{s}{\varrho})
\end{align*}

\smallskip\noindent\textit{Case $\mlbox[F]{\varrho}$}:
\begin{align*} &
  \nu^{\Mod}_{\Sigma}(M'), v, \gamma \models[\EDHML]_{\Sigma, S} \mlbox[F]{\varrho}
\just{def.\ $\models[\EDHML]$}
   \nu^{\Mod}_{\Sigma}(M'), v, \gamma' \models[\EDHML]_{\Sigma, S} \varrho
\quad\text{for all $\gamma' \in \Conf^F(\nu^{\Mod}_{\Sigma}(M'), \gamma)$}
\just{I.\,H.}
  M', \beta'_{M', g'}(v, \gamma') \models[\CASL]_{\nu^{\Sign}(\Sigma), S \cup \{ g' \}} \nu^{\Fm}_{\Sigma, S, g'}(\varrho)
\quad\text{for all $\gamma' \in \Conf^F(\nu^{\Mod}_{\Sigma}(M'), \gamma)$}
\just{def.\ $\models[\CASL]$}
  M', \beta'_{M', g}(v, \gamma) \models[\CASL]_{\nu^{\Sign}(\Sigma), S \cup \{ g \}} \forall g' \cln \mathrm{Conf} \,.\, \mathrm{reachable}(F, g, g') \Rightarrow \nu^{\Fm}_{\Sigma, S, g'}(\varrho)
\just{def.\ $\nu^{\Fm}$}
  M', \beta'_{M', g}(v, \gamma) \models[\CASL]_{\nu^{\Sign}(\Sigma), S \cup \{ g \}} \nu^{\Fm}_{\Sigma, S, g}(\mlbox[F]{\varrho})
\end{align*}

\smallskip\noindent\textit{Case $\dldia{e(X)}{\psi}{\varrho}$}:
\begin{align*} &
  \nu^{\Mod}_{\Sigma}(M'), v, \gamma \models[\EDHML]_{\Sigma, S} \dldia{e(X)}{\psi}{\varrho}
\just{def.\ $\models[\EDHML]$}
  \nu^{\Mod}_{\Sigma}(M'), v, \gamma \models[\EDHML]_{\Sigma, S} \varrho
\\[-.5ex]&\qquad\stacked{%
  \text{for some $\beta : X \to \DATA$, $\gamma' \in \Conf(\nu^{\Mod}_{\Sigma}(M'))$ with}\\[.5ex]
  \text{$(\gamma, \gamma') \in \Rel(\nu^{\Mod}_{\Sigma}(M'))_{e(\beta)}$ and}\\[.5ex]
  \text{$(\dataSt(\nu^{\Mod}_{\Sigma}(M'))(\gamma), \dataSt(\nu^{\Mod}_{\Sigma}(M'))(\gamma')), \beta \models[2\DATA]_{\Attr(\Sigma), X} \psi$}
}
\just{I.\,H., def.\ $\models[2\DATA]$, def.\ $\models[\CASL]$}
  M', \beta'_{M', g}(v, \gamma) \models[\CASL]_{\nu^{\Sign}(\Sigma), S \cup \{ g \}} \nu^{\Fm}_{\Sigma, S, g}(\varrho)
\\[-.5ex]&\qquad\stacked{
  \text{for some $\beta : X \to \DATA$, $\gamma' \in \Conf(\nu^{\Mod}_{\Sigma}(M'))$ with}\\[.5ex]
  \text{$(\gamma, \gamma') \in \Rel(\nu^{\Mod}_{\Sigma}(M'))_{e(\beta)}$ and}\\[.5ex]
  M', \stacked{
     \iota^{-1}_{M', \data} \compfun ((\dataSt(\nu^{\Mod}_{\Sigma}(M'))(\gamma) + \dataSt(\nu^{\Mod}_{\Sigma}(M'))(\gamma')) \cup \beta) \cup{}\\[.5ex]
     \{ g \mapsto \iota_{M', \mathrm{Conf}}(\gamma), g' \mapsto \iota_{M', \mathrm{Conf}}(\gamma') \} \models[\CASL]_{\nu^{\Sign}(\Sigma), X \cup \{ g, g' \}}{}\\[.5ex]
     \qquad\Frm[\CASL]{\nu^{\Sign}(\Sigma)}{\Attr(\Sigma)(g) \cup \Attr(\Sigma)(g') \cup \id{X}}(\psi)
  }
}
\just{def.\ $\models[\CASL]$}
  M', \beta'_{M', g}(v, \gamma) \models[\CASL]_{\nu^{\Sign}(\Sigma), S \cup \{ g \}}{}
\\[-.5ex]&\qquad
\stacked{\exists X \cln \data \,.\, \exists g' \cln \mathrm{Conf} \,.\, \mathrm{trans}(g, e(X), g') \land{}\\
    \quad\Frm[\CASL]{\nu^{\Sign}(\Sigma)}{\Attr(\Sigma)(g) \cup \Attr(\Sigma)(g') \cup \id{X}}(\psi) \land \nu^{\Fm}_{\Sigma, S, g}(\varrho)}
\just{def.\ $\nu^{\Fm}$}
  M', \beta'_{M', g}(v, \gamma) \models[\CASL]_{\nu^{\Sign}(\Sigma), S \cup \{ g \}} \nu^{\Fm}_{\Sigma, S, g}(\dldia{e(X)}{\psi}{\varrho})
\end{align*}

\smallskip\noindent\textit{Case $\dlwp{e(X)}{\phi}{\psi}{\varrho}$}:
\begin{align*} &
  \nu^{\Mod}_{\Sigma}(M'), v, \gamma \models[\EDHML]_{\Sigma, S} \dlwp{e(X)}{\phi}{\psi}{\varrho}
\just{def.\ $\models[\EDHML]$}
  \nu^{\Mod}_{\Sigma}(M'), v, \gamma \models[\EDHML]_{\Sigma, S} \varrho
\\[-.5ex]&\qquad\stacked{%
  \text{for all $\beta : X \to \DATA$ such that $\dataSt(\nu^{\Mod}_{\Sigma}(M'))(\gamma), \beta \models[\DATA]_{\Attr(\Sigma), X} \phi$}\\[.5ex]
  \text{and some $\gamma' \in \Conf(\nu^{\Mod}_{\Sigma}(M'))$ with $(\gamma, \gamma') \in \Rel(\nu^{\Mod}_{\Sigma}(M'))_{e(\beta)}$}\\[.5ex]
  \text{and $(\dataSt(\nu^{\Mod}_{\Sigma}(M'))(\gamma), \dataSt(\nu^{\Mod}_{\Sigma}(M'))(\gamma')), \beta \models[2\DATA]_{\Attr(\Sigma), X} \psi$}
}
\just{I.\,H., def.\ $\models[\DATA]$, def.\ $\models[2\DATA]$, def.\ $\models[\CASL]$}
  M', \beta'_{M', g}(v, \gamma) \models[\CASL]_{\nu^{\Sign}(\Sigma), S \cup \{ g \}} \nu^{\Fm}_{\Sigma, S, g}(\varrho)
\\[-.5ex]&\qquad\stacked{%
  \text{for all $\beta : X \to \DATA$ such that}\\[.5ex]
  \qquad M', \stacked{\iota^{-1}_{M', \data} \compfun (\dataSt(\nu^{\Mod}_{\Sigma}(M'))(\gamma) \cup \beta) \cup{}\\[.5ex]
                      \{ g \mapsto \iota_{M', \mathrm{Conf}}(\gamma) \} \models[\CASL]_{\nu^{\Sign}(\Sigma), X \cup \{ g \}} \Frm[\CASL]{\nu^{\Sign}(\Sigma)}{\Attr(\Sigma)(g) \cup \id{X}}(\phi)}\\[4.5ex]
  \text{and some $\gamma' \in \Conf(\nu^{\Mod}_{\Sigma}(M'))$ with $(\gamma, \gamma') \in \Rel(\nu^{\Mod}_{\Sigma}(M'))_{e(\beta)}$}\\[.5ex]
  \text{and }M', \stacked{
     \iota^{-1}_{M', \data} \compfun ((\dataSt(\nu^{\Mod}_{\Sigma}(M'))(\gamma) + \dataSt(\nu^{\Mod}_{\Sigma}(M'))(\gamma')) \cup \beta) \cup{}\\[.5ex]
     \{ g \mapsto \iota_{M', \mathrm{Conf}}(\gamma), g' \mapsto \iota_{M', \mathrm{Conf}}(\gamma') \} \models[\CASL]_{\nu^{\Sign}(\Sigma), X \cup \{ g, g' \}}{}\\[.5ex]
     \qquad\Frm[\CASL]{\nu^{\Sign}(\Sigma)}{\Attr(\Sigma)(g) \cup \Attr(\Sigma)(g') \cup \id{X}}(\psi)
  }
}
\just{def.\ $\models[\CASL]$}
  M', \beta'_{M', g}(v, \gamma) \models[\CASL]_{\nu^{\Sign}(\Sigma), S \cup \{ g \}}{}
\\[-.5ex]&\qquad
\stacked{\forall X \cln \data \,.\,\Frm[\CASL]{\nu^{\Sign}(\Sigma)}{\Attr(\Sigma)(g) \cup \id{X}}(\phi) \Rightarrow{}\\
    \quad\exists g' \cln \mathrm{Conf} \,.\, \mathrm{trans}(g, e(X), g') \land{}\\
   \qquad\Frm[\CASL]{\nu^{\Sign}(\Sigma)}{\Attr(\Sigma)(g) \cup \Attr(\Sigma)(g') \cup \id{X}}(\psi) \land \nu^{\Fm}_{\Sigma, S, g}(\varrho)}
\just{def.\ $\nu^{\Fm}$}
  M', \beta'_{M', g}(v, \gamma) \models[\CASL]_{\nu^{\Sign}(\Sigma), S \cup \{ g \}} \nu^{\Fm}_{\Sigma, S, g}(\dlwp{e(X)}{\phi}{\psi}{\varrho})
\end{align*}
\end{proof}
\end{techreport}

\begin{theorem}\label{thm:co-morph}
$(\nu^{\Sig}, \nu^{\Mod}, \nu^{\Sen})$ is a theoroidal comorphism from $\EDHML$ to $\CASL$.
\end{theorem}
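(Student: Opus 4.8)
The plan is to verify the four ingredients in the definition of a theoroidal comorphism: that $\nu^{\Sig}$ is a functor into $\Pres[\CASL]$, that $\nu^{\Mod}$ and $\nu^{\Sen}$ are natural transformations of the prescribed types, and that the (open) satisfaction condition of theoroidal comorphisms holds. The genuinely technical content, namely the compatibility of the formula translation $\nu^{\Fm}$ with satisfaction under the model translation $\nu^{\Mod}$, is already supplied by \cref{lem:co-morph}; so the remaining work is to lift that statement from open formulæ to sentences and to discharge the largely routine structural obligations.

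First I would derive the satisfaction condition. Fix $\Sigma \in |\Sig[\EDHML]|$, $M' \in |\Mod[\CASL](\nu^{\Sig}(\Sigma))|$, and a sentence $\rho \in \Sen[\EDHML](\Sigma) = \Frm[\EDHML]{\Sigma}{\emptyset}$. Unfolding $\nu^{\Sen}_{\Sigma}(\rho) = \forall g \cln \mathrm{Conf} \,.\, \mathrm{init}(g) \Rightarrow \nu^{\Fm}_{\Sigma, \emptyset, g}(\rho)$ together with the \CASL-semantics of $\forall$ and $\Rightarrow$, the relation $M' \models[\CASL]_{\nu^{\Sign}(\Sigma)} \nu^{\Sen}_{\Sigma}(\rho)$ holds if, and only if, $M', \{ g \mapsto g' \} \models[\CASL]_{\nu^{\Sign}(\Sigma), \{ g \}} \nu^{\Fm}_{\Sigma, \emptyset, g}(\rho)$ for every $g' \in \mathrm{Conf}^{M'}$ with $\mathrm{init}^{M'}(g')$. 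The one observation that has to be made is that the reachability axioms of $\textsc{Trans}_{\Sigma}$ force every such initial $g'$ to be reachable (instantiate the reflexivity clause and the $\mathrm{init}$-witness with $g_0 = g'$), so that $\iota_{M', \mathrm{Conf}}$ restricts to a bijection between $\{ g' \mid \mathrm{init}^{M'}(g') \}$ and $\Conf_0(\nu^{\Mod}_{\Sigma}(M'))$. Writing $\gamma_0 = \iota_{M', \mathrm{Conf}}^{-1}(g')$, the valuation $\{ g \mapsto g' \}$ is precisely $\beta'_{M', g}(\emptyset, \gamma_0)$, since the empty state assignment contributes $\iota_{M', \mathrm{Ctrl}}^{-1} \compfun \emptyset = \emptyset$. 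Hence \cref{lem:co-morph}, applied with $S = \emptyset$, $v = \emptyset$, and $\gamma = \gamma_0$, turns the displayed condition into $\nu^{\Mod}_{\Sigma}(M'), \emptyset, \gamma_0 \models[\EDHML]_{\Sigma, \emptyset} \rho$ for all $\gamma_0 \in \Conf_0(\nu^{\Mod}_{\Sigma}(M'))$, which is exactly the definition of $\nu^{\Mod}_{\Sigma}(M') \models[\EDHML]_{\Sigma} \rho$.

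It then remains to settle the structural obligations. For $\nu^{\Sig}$ I would check that each $\EDHML$-signature morphism $\sigma : \Sigma \to \Sigma'$ induces a \CASL-signature morphism $\nu^{\Sign}(\sigma)$ --- renaming the event and event-name constructors along $\Evt(\sigma)$, the attribute accessors along $\Attr(\sigma)$, and leaving $\mathrm{Ctrl}$, $\mathrm{conf}$, $\mathrm{init}$, $\mathrm{trans}$, $\mathrm{reachable}$ fixed --- that sends every axiom of $\textsc{Trans}_{\Sigma}$ to a consequence of $\textsc{Trans}_{\Sigma'}$, making $\nu^{\Sig}(\sigma)$ a theory presentation morphism; preservation of identities and composition is immediate from this component-wise description. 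Naturality of $\nu^{\Sen}$, i.e.\ $\nu^{\Sen}_{\Sigma'} \compfun \Sen[\EDHML](\sigma) = \Sen[\CASL](\nu^{\Sign}(\sigma)) \compfun \nu^{\Sen}_{\Sigma}$, reduces to a purely syntactic commutation between $\nu^{\Fm}$ and the two formula translations, proved by a structural induction over $\Frm[\EDHML]{\Sigma}{S}$ of exactly the same shape as \cref{lem:sat-cond} and \cref{lem:co-morph}. Naturality of $\nu^{\Mod}$, i.e.\ $\nu^{\Mod}_{\Sigma} \compfun \Mod[\CASL](\nu^{\Sig}(\sigma)) = \Str[\EDHML](\sigma) \compfun \nu^{\Mod}_{\Sigma'}$, follows by comparing the two event/data structures component-wise; the only non-trivial point is that both constructions restrict to the same set of reachable configurations, which is precisely what \cref{lem:relative} guarantees for the reduct along $\sigma$.

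I expect the conceptual heart to be the satisfaction condition, but with \cref{lem:co-morph} available the difficulty collapses to the single bookkeeping subtlety above: matching the \CASL-level quantification over $\mathrm{init}$-configurations with the $\EDHML$-level quantification over $\Conf_0$, which hinges on every initial configuration being reachable and thus lying in the domain of $\nu^{\Mod}_{\Sigma}(M')$. Among the structural checks, the naturality of $\nu^{\Mod}$ is the most delicate, as it is the only place where the restriction to reachable transition systems --- and hence \cref{lem:relative} --- is genuinely needed.
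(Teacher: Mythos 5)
Your derivation of the satisfaction condition is exactly the paper's proof: both reduce it to \cref{lem:co-morph} applied with $S = \emptyset$, $v = \emptyset$ at each initial configuration, and your explicit observation that the $\mathrm{reachable}$ axioms force every $\mathrm{init}$-configuration into $\Conf_0(\nu^{\Mod}_{\Sigma}(M'))$ correctly fills in what the paper compresses into a single ``def.\ $\models[\CASL]$'' step. The structural obligations (functoriality of $\nu^{\Sig}$, naturality of $\nu^{\Mod}$ and $\nu^{\Sen}$) that you sketch in your final paragraph are simply omitted from the paper's proof, so your proposal is, if anything, more complete while following the same route.
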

\begin{techreport}
\begin{proof}
Let $\Sigma \in
\Sig[\EDHML]$, $M' \in |\Mod[\CASL](\nu^{\Sig}(\Sigma))|$, and $\rho \in
\Sen[\EDHML](\Sigma)$.  The satisfaction condition follows from
\begin{align*}&
  \nu^{\Mod}_{\Sigma}(M') \models[\EDHML]_{\Sigma} \rho
\just{def.\ $\models[\EDHML]$}
  \nu^{\Mod}_{\Sigma}(M'), \emptyset, \gamma_0 \models[\EDHML]_{\Sigma, \emptyset} \rho
\quad\text{for all $\gamma_0 \in \Conf_0(\nu^{\Mod}_{\Sigma}(M'))$}
\just{\cref{lem:co-morph}}
  M', \beta'_{M', g}(\emptyset, \gamma_0) \models[\CASL]_{\nu^{\Sign}(\Sigma), \{ g \}} \nu^{\Fm}_{\Sigma, \emptyset, g}(\rho)
\quad\text{for all $\gamma_0 \in \Conf_0(\nu^{\Mod}_{\Sigma}(M'))$}
\just{def.\ $\models[\CASL]$}
  M' \models[\CASL]_{\nu^{\Sign}(\Sigma)} \forall g \cln \mathrm{Conf} \,.\, \mathrm{init}(g) \Rightarrow \nu^{\Fm}_{\Sigma, \emptyset, g}(\rho)
\just{def.\ $\nu^{\Sen}$}
  M' \models[\CASL]_{\nu^{\Sign}(\Sigma)} \nu^{\Sen}_{\Sigma}(\rho)
\end{align*}
\end{proof}
\end{techreport}

\section{Proving Properties of UML State Machines with \HeTS and \SPASS}\label{sec:proving}

We implemented the translation of simple UML state machines into \CASL
specifications within the heterogeneous toolset
\HeTS~\cite{mossakowski-maeder-luettich:tacas:2007}.  Based on this translation
we explain how to prove properties symbolically in the automated theorem prover
\SPASS~\cite{weidenbach-et-al:cade:2009} for our running example of a counter.

\subsection{Implementation in \HeTS}

\begin{lstlisting}[language=UMLState, float=t, caption={Representation of the simple UML state machine $\mathit{Counter}$ in \UMLState}, label={lst:uml-counter-in-UMLState}]
logic $\normalfont\textsc{UMLState}$
spec $\mathit{Counter}$ =
  var $\attr{cnt}$;
  event $\evt{inc}(x)$;
  event $\evt{reset}$;
  states $\state{s1}$, $\state{s2}$;
  init $\state{s1}$ : $\attr{cnt} = 0$;
  trans $\state{s1}$ --> $\state{s1}$ : $\evt{inc}(x)$ [$\attr{cnt} + x < 4$] / { $\attr{cnt}$ := $\attr{cnt} + x$ };
  trans $\state{s1}$ --> $\state{s2}$ : $\evt{inc}(x)$ [$\attr{cnt} + x = 4$] / { $\attr{cnt}$ := $\attr{cnt} + x$ };
  trans $\state{s2}$ --> $\state{s1}$ : $\evt{reset}$ [$\attr{cnt} = 4$] / { $\attr{cnt}$ := $0$ };
end   
\end{lstlisting}
For a \HeTS chain from simple UML state machine to \CASL and \SPASS, we first
defined the input language \UMLState and extended \HeTS with a parser for this
language.  The syntax of \UMLState closely follows the ideas of
PlantUML~\cite{plantuml}, such that, in particular, its textual specifications
can potentially be rendered graphically as UML state machines.
\Cref{lst:uml-counter-in-UMLState} gives a representation of our running example
$\mathit{Counter}$, cf.\ \cref{sec:running_example}, in \UMLState.  Note that
\UMLState uses more conventional UML syntax for effects on transitions, e.g.,
``\texttt{$\attr{cnt}$ := $\attr{cnt} + x$}''.  Next, we extended \HeTS with a
syntax representation of our logic $\EDHML$, cf.~\cref{sec:edhml}, and
implemented \cref{alg:sen-op-spec} in \HeTS to automatically translate \UMLState
specifications into $\EDHML$ specifications, where we arrive at the
institutional level.  In this step, effects on transitions are turned into
logical formulæ, like ``$\attr{cnt}' = \attr{cnt} + x$''.  Finally, we extended
\HeTS with an implementation of the comorphism from $\EDHML$ into \CASL, cf.\
\cref{sec:comorphism}.  The implementation has been bundled in a fork of \HeTS
(\url{https://github.com/spechub/hets}) and provides a translation chain from
\UMLState via \CASL to the input languages of various proof tools, such as the
automated theorem prover \SPASS.



\subsection{Proving in \SPASS}

\Cref{fig:proof_conditions} shows the \CASL specification representing
the state machine from \cref{fig:uml-counter}, extended by a proof
obligation {\small{}\KW{\%}(Safe)\KW{\%}} and proof infrastructure for
it.  We want to prove the safety property that $\attr{cnt}$ never
exceeds $4$ using the automated theorem prover \SPASS.

The \CASL specification \SIdIndex{Counter} imports a specification
\SId{Trans} which instantiates the generic frame translating $\EDHML$
into $\CASL$, cf.\ \cref{lst:edhml-trans}.  However, the first-order
theorem prover \SPASS does not support \CASL's structured free that we
use for expressing reachability.  For invariance properties this
deficiency can be circumvented by loosely specifying \Id{reachable}
(i.e., omitting the keyword \texttt{free}), introducing a predicate
\Id{invar}, and adding a first-order induction axiom. This means that
we have to establish the safety property for a larger model class than
we would have with freeness. When carrying out symbolic reasoning for
invariant referring to a single configuration, the presented induction
axiom suffices. Other properties would require more involved induction
axioms, e.g., referring to several configurations.
\begin{figure}[!t]
\begin{hetcasl}
\SPEC \=\SIdIndex{Counter} \Ax{=} \SId{Trans}\\
\THEN 
\=\PRED \Id{invar}(\=\Id{g} \Ax{\cln} \Id{Conf})
\Ax{\Leftrightarrow} \=(\=\Id{c}(\Id{g}) \Ax{=} \Id{s1} \Ax{\wedge} \=\Id{cnt}(\Id{g}) \Ax{\leq} \Ax{4}) \Ax{\vee} (\=\Id{c}(\Id{g}) \Ax{=} \Id{s2} \Ax{\wedge} \=\Id{cnt}(\Id{g}) \Ax{\leq} \Ax{4})\\
\> {\small{}\KW{\%\%} induction scheme for ``reachable'' predicate\Ax{,} instantiated for ``invar'':}\\
\> \Ax{\axdot} \=(\=(\=\Ax{\forall} \Id{g} \Ax{\cln} \Id{Conf} \Ax{\axdot} \=\Id{init}(\Id{g}) \Ax{\Rightarrow} \Id{invar}(\Id{g})) \\
\>\>\> \Ax{\wedge} \=\Ax{\forall} \Id{g}, \Id{g'} \Ax{\cln} \Id{Conf}; \Id{e} \Ax{\cln} \Id{Evt} \\
\>\>\>\> \Ax{\axdot} \=(\=\Id{reachable}(\Id{g}) \Ax{\Rightarrow} \Id{invar}(\Id{g})) \Ax{\wedge} \Id{reachable}(\Id{g}) \Ax{\wedge} \Id{trans}(\=\Id{g}, \Id{e}, \Id{g'})  \Ax{\Rightarrow} \Id{invar}(\Id{g'})) \\
\>\> \Ax{\Rightarrow} \=\Ax{\forall} \Id{g} \Ax{\cln} \Id{Conf} \Ax{\axdot} \=\Id{reachable}(\Id{g}) \Ax{\Rightarrow} \Id{invar}(\Id{g}) \\
\THEN \ldots\ \Id{machine} \Id{axioms} \ldots\\
\THEN \={\small{}\KW{\%}\KW{implies}}\\
\> {\small{}\KW{\%\%} the safety assertion for our counter:}\\
\> \Ax{\forall} \=\Id{g} \Ax{\cln} \Id{Conf} \=\Ax{\axdot} \=\Id{reachable}(\Id{g}) \Ax{\Rightarrow} \=\Id{cnt}(\Id{g}) \Ax{\leq} \Ax{4} \`{\small{}\KW{\%}(Safe)\KW{\%}}\\
\> {\small{}\KW{\%\%} steering \SPASS with case distinction lemmas\Ax{,} could be generated algorithmically:}\\
\> \Ax{\forall} \Id{g}, \Id{g'} \Ax{\cln} \Id{Conf}; \Id{e} \Ax{\cln} \Id{Evt}; \=\Id{k} \Ax{\cln} \Id{Nat} \\
\> \Ax{\axdot} \=\Id{init}(\Id{g}) \Ax{\Rightarrow} \Id{invar}(\Id{g}) \`{\small{}\KW{\%}(InvarInit)\KW{\%}}\\
\> \Ax{\axdot} \=(\=\Id{reachable}(\Id{g}) \Ax{\Rightarrow} \Id{invar}(\Id{g})) \Ax{\wedge} \Id{reachable}(\Id{g}) \Ax{\wedge} \Id{trans}(\=\Id{g}, \Id{e}, \Id{g'}) \Ax{\wedge} \=\Id{e} \Ax{=} \Id{reset} \\
\>\> \Ax{\Rightarrow} \Id{invar}(\Id{g'}) \`{\small{}\KW{\%}(InvarReset)\KW{\%}}\\
\> \Ax{\axdot} \=(\=\Id{reachable}(\Id{g}) \Ax{\Rightarrow} \Id{invar}(\Id{g})) \Ax{\wedge} \Id{reachable}(\Id{g}) \Ax{\wedge} \Id{trans}(\=\Id{g}, \Id{e}, \Id{g'}) \Ax{\wedge} \=\Id{e} \Ax{=} \Id{inc}(\Id{k})  \\
\>\> \Ax{\Rightarrow} \Id{invar}(\Id{g'}) \`{\small{}\KW{\%}(InvarInc)\KW{\%}}\\
\> \Ax{\axdot} \=(\=\Id{reachable}(\Id{g}) \Ax{\Rightarrow} \Id{invar}(\Id{g})) \Ax{\wedge} \Id{reachable}(\Id{g}) \Ax{\wedge} \Id{trans}(\=\Id{g}, \Id{e}, \Id{g'}) \\
\>\> \Ax{\Rightarrow} \Id{invar}(\Id{g'}) \`{\small{}\KW{\%}(InvarStep)\KW{\%}}\\
\> \Ax{\axdot} \=\Id{invar}(\Id{g}) \Ax{\Rightarrow} \=\Id{cnt}(\Id{g}) \Ax{\leq} \Ax{4} \`{\small{}\KW{\%}(InvarImpliesSafe)\KW{\%}}\\
\end{hetcasl}
\vskip-20pt
\caption{\CASL specification of our running example}\label{fig:proof_conditions}
\end{figure}

Then the specification provides the machine axioms as stated
(partially) in \cref{ex:machine-axioms}.  The axioms following the
\textit{\%implies} directive are treated as proof obligations.  We
first state the safety property that we wish to establish: in all
reachable configurations, the counter value is less or equal $4$ --
\textit{\%}(Safe)\textit{\%}.  The remainder steers the proving
process in \SPASS by providing suitable case distinctions. For
invariants referring to a single configuration, these could also be
generated automatically based on the transition structure of the state
machine.

As proof of concept, we automatically verified this safety property in
\SPASS. In this experiment, we performed some optimising,
semantics-preserving logical transformations on the result of applying
the comorphism, to make the specification more digestible to the
theorem prover. These transformations include the removal of double
negations, splitting a conjunction into separate axioms, and turning
existentially quantified control states into constants by
Skolemisation.

\section{Conclusions and Future Work}\label{sec:conclusions}

We have described a new, institution-based logical framework $\EDHML$
that captures simple UML state machines. This is in contrast to
previous approaches that modelled UML parts directly as an institution
and ran into difficulties in establishing the satisfaction
condition~\cite{rosenberger:master:2017}. By (1) defining an
institution-based translation from $\EDHML$ into the \CASL institution
and (2) implementing and thus automatising our translation within
\HeTS, we made it possible to analyse UML state machines with the
broad range of provers accessible via \HeTS.

The resulting tool chain allows us to apply an automatic prover (as
demonstrated here using the theorem prover \SPASS), or several
automatic provers, where they work and switch to interactive tools
like Isabelle where necessary (not needed in the analysis of our
example $\mathit{Counter}$). Not only does this switch require no
manual reformulation into the interactive tool's input language,
rather, it can be done even within one development: We could possibly
show some lemmas via automatic first-order provers, some lemmas via
domain-specific tools, then use those to prove a difficult lemma in an
interactive prover, then apply all those lemmas to automatically prove
the final theorem. \HeTS allows us to use the best language and the
best tool for each job, and takes care of linking the results together
under the hood.


It is future work to extend $\EDHML$ to cover more elements of UML state
machines, such as hierarchical states and communication networks. The main
challenge here will be to enrich $\EDHML$ in such a way that it offers suitable
logical representations for the additional structural elements (hierarchical
states or communication networks) rather than to flatten these: We anticipate
symbolic reasoning on UML state machines to be ``easier'' if their structural
elements are still ``visible'' in their \CASL representations.

In the long term, we work towards heterogeneous verification
of different UML diagrams. One possible setting would be to utilise
interactions as a specification mechanism, where communicating state
machines model implementations.
         

\bibliographystyle{splncs04}
\bibliography{bibliography}

\providecommand{\noopsort}[1]{}
\begin{thebibliography}{10}
\providecommand{\url}[1]{\texttt{#1}}
\providecommand{\urlprefix}{URL }
\providecommand{\doi}[1]{https://doi.org/#1}

\bibitem{balser-et-al:icfem:2004}
Balser, M., Bäumler, S., Knapp, A., Reif, W., Thums, A.: {Interactive
  Verification of UML State Machines}. In: Davies, J., Schulte, W., Barnett, M.
  (eds.) Proc. 6\textsuperscript{th} Intl. Conf. Formal Engineering Methods
  (ICFEM 2004). Lect. Notes Comp. Sci., vol.~3308 (2004)

\bibitem{blackburn-de-rijke-venema:2001}
Blackburn, P., de~Rijke, M., Venema, Y.: Modal Logic, Cambridge Tracts in
  Theoretical Computer Science, vol.~53. Cambridge University Press (2001)

\bibitem{diaconescu-madeira:mscs:2016}
Diaconescu, R., Madeira, A.: {Encoding Hybridized Institutions into First-order
  Logic}. Math. Struct. Comp. Sci.  \textbf{26}(5),  745--788 (2016)

\bibitem{goguen-burstall:acm:1992}
Goguen, J.A., Burstall, R.M.: {Institutions: Abstract Model Theory for
  Specification and Programming}. J. ACM  \textbf{39},  95--146 (1992)

\bibitem{groenniger:phd:2010}
Grönniger, H.: \foreignlanguage{german}{Systemmodell-basierte Definition
  objektbasierter Modellierungssprachen mit semantischen Variationspunkten}.
  Ph.D. thesis, RWTH Aachen (2010)

\bibitem{hennicker-madeira-knapp:fase:2019}
Hennicker, R., Madeira, A., Knapp, A.: {A Hybrid Dynamic Logic for
  Event/Data-Based Systems}. In: Hähnle, R., van~der Aalst, W.M.P. (eds.)
  Proc. 22\textsuperscript{nd} Intl. Conf. Fundamental Approaches to Software
  Engineering. Lect. Notes Comp. Sci., vol. 11424, pp. 79--97. Springer (2019)

\bibitem{james-et-al:wadt:2012}
James, P., Knapp, A., Mossakowski, T., Roggenbach, M.: {Designing Domain
  Specific Languages --- A Craftsman's Approach for the Railway Domain Using
  Casl}. In: Martí-Oliet, N., Palomino, M. (eds.) Rev. Sel. Papers
  21\textsuperscript{st} Intl. Ws. Recent Trends in Algebraic Development
  Techniques (WADT 2012). Lect. Notes Comp. Sci., vol.~7841, pp. 178--194.
  Springer (2012)

\bibitem{knapp-mossakowski:calco:2017}
Knapp, A., Mossakowski, T.: {UML Interactions Meet State Machines --- An
  Institutional Approach}. In: Bonchi, F., König, B. (eds.) Proc.
  7\textsuperscript{th} Intl. Conf. Algebra and Coalgebra in Computer Science
  (CALCO 2017). LIPIcs, vol.~72, pp. 15:1--15:15 (2017)

\bibitem{knapp-mossakowski-roggenbach:wirsing-festschrift:2015}
Knapp, A., Mossakowski, T., Roggenbach, M.: {Towards an Institutional Framework
  for Heterogeneous Formal Development in UML --- A Position Paper}. In: {De
  Nicola}, R., Hennicker, R. (eds.) Software, Services, and Systems --- Essays
  Dedicated to Martin Wirsing on the Occasion of His Retirement from the Chair
  of Programming and Software Engineering, Lect. Notes Comp. Sci., vol.~8950,
  pp. 215--230. Springer (2015)

\bibitem{knapp-et-al:fase:2015}
Knapp, A., Mossakowski, T., Roggenbach, M., Glauer, M.: {An Institution for
  Simple UML State Machines}. In: Egyed, A., Schaefer, I. (eds.) Proc.
  18\textsuperscript{th} Intl. Conf. Fundamental Approaches to Software
  Engineering (FASE 2015). Lect. Notes Comp. Sci., vol.~9033, pp. 3--18.
  Springer (2015)

\bibitem{kyas-et-al:sfedl:2004}
Kyas, M., Fecher, H., de~Boer, F.S., Jacob, J., Hooman, J., van~der Zwaag, M.,
  Arons, T., Kugler, H.: {Formalizing UML Models and OCL Constraints in PVS}.
  In: Lüttgen, G., Mendler, M. (eds.) Proc. Ws. Semantic Foundations of
  Engineering Design Languages (SFEDL 2004). Electr. Notes Theo. Comp. Sci.,
  vol.~115 (2005)

\bibitem{madeira:phd:2013}
Madeira, A.: Foundations and Techniques for Software Reconfigurability. Ph.D.
  thesis, Universidade do Minho (2013)

\bibitem{madeira-et-al:ictac:2016}
Madeira, A., Barbosa, L.S., Hennicker, R., Martins, M.A.: {Dynamic Logic with
  Binders and Its Application to the Development of Reactive Systems}. In:
  Proc. 13\textsuperscript{th} Intl. Coll. Theoretical Aspects of Computing.
  Lect. Notes Comp. Sci., vol.~9965, pp. 422--440. Springer (2016)

\bibitem{mossakowski:tcs:2002}
Mossakowski, T.: {Relating CASL with Other Specification Languages: The
  Institution Level}. Theo. Comp. Sci.  \textbf{286}(2),  367--475 (2002)

\bibitem{mossakowski-maeder-luettich:tacas:2007}
Mossakowski, T., Maeder, C., Lüttich, K.: {The Heterogeneous Tool Set}. In:
  Grumberg, O., Huth, M. (eds.) Proc. 13\textsuperscript{th} Intl. Conf. Tools
  and Algorithms for the Construction and Analysis of Systems (TACAS 2007).
  Lect. Notes Comp. Sci., vol.~4424, pp. 519--522. Springer (2007)

\bibitem{mosses:2004}
Mosses, P.D.: {CASL Reference Manual --- The Complete Documentation of the
  Common Algebraic Specification Language}, Lect. Notes Comp. Sci., vol.~2960.
  Springer (2004)

\bibitem{uml-2.5.1}
{Object Management Group}: {Unified Modeling Language}. Standard
  formal/17-12-05, OMG (2017), \url{http://www.omg.org/spec/UML/2.5.1}

\bibitem{plantuml}
Roques, A.: {PlantUML}. \url{https://plantuml.com/} (Accessed 2020-02-11)

\bibitem{rosenberger:master:2017}
Rosenberger, T.: Relating UML State Machines and Interactions in an
  Institutional Framework. Master's thesis, Elite Graduate Program Software
  Engineering (Universität Augsburg, Ludwig-Maximilians-Universität München,
  Technische Universität München) (2017)

\bibitem{weidenbach-et-al:cade:2009}
Weidenbach, C., Dimova, D., Fietzke, A., Kumar, R., Suda, M., Wischnewski, P.:
  {SPASS Version 3.5}. In: Schmidt, R.A. (ed.) Proc. 22\textsuperscript{nd}
  Intl. Conf. Automated Deduction. Lect. Notes Comp. Sci., vol.~5663, pp.
  140--145. Springer (2009)

\end{thebibliography}

\begin{techreport}
\clearpage
\begin{appendix}

\section{Full \CASL specifications}
\begin{hetcasl}
\KW{library} \Id{counter\Ax{\_}generated}\\
\\
{\small{}\KW{\%}\KW{display}(}\={\small{}\Ax{\_\_}}{\small{}\Ax{<=}}{\small{}\Ax{\_\_}}{\small{} }{\small{}\KW{\%}LATEX }\={\small{}\Ax{\_\_}}\Ax{\leq}{\small{}\Ax{\_\_}}{\small{})\KW{\%}}\\
{\small{}\KW{\%}\KW{display}(}\={\small{}\Ax{\_\_}}{\small{}\Ax{>=}}{\small{}\Ax{\_\_}}{\small{} }{\small{}\KW{\%}LATEX }\={\small{}\Ax{\_\_}}\Ax{\geq}{\small{}\Ax{\_\_}}{\small{})\KW{\%}}\\
{\small{}\KW{\%}\KW{prec}(}\={\small{}\{}\={\small{}\Ax{\_\_}}{\small{}\Ax{+}}{\small{}\Ax{\_\_}}{\small{}\}}{\small{} }{\small{}\Ax{<} }{\small{}\{}\={\small{}\Ax{\_\_}}{\small{}\Ax{*}}{\small{}\Ax{\_\_}}{\small{},}{\small{} }{\small{}\Ax{\_\_}}{\small{}\Id{div}}{\small{}\Ax{\_\_}}{\small{},}{\small{} }\={\small{}\Ax{\_\_}}{\small{}\Id{mod}}{\small{}\Ax{\_\_}}{\small{}\}}{\small{})\KW{\%}}\\
{\small{}\KW{\%}\KW{prec}(}\={\small{}\{}\={\small{}\Ax{\_\_}}{\small{}\Ax{*}}{\small{}\Ax{\_\_}}{\small{},}{\small{} }{\small{}\Ax{\_\_}}{\small{}\Id{div}}{\small{}\Ax{\_\_}}{\small{},}{\small{} }\={\small{}\Ax{\_\_}}{\small{}\Id{mod}}{\small{}\Ax{\_\_}}{\small{}\}}{\small{} }{\small{}\Ax{<} }{\small{}\{}\={\small{}\Ax{\_\_}}{\small{}\Ax{\Ax{\hat{\ }}}}{\small{}\Ax{\_\_}}{\small{}\}}{\small{})\KW{\%}}\\
{\small{}\KW{\%}\KW{right\_assoc}(}\={\small{}\Ax{\_\_}}{\small{}\Ax{+}}{\small{}\Ax{\_\_}}{\small{},}{\small{} }{\small{}\Ax{\_\_}}{\small{}\Ax{*}}{\small{}\Ax{\_\_}}{\small{},}{\small{} }\={\small{}\Ax{\_\_}}{\small{}\Ax{-}}{\small{}\Ax{\_\_}}{\small{})\KW{\%}}\\
{\small{}\KW{\%}\KW{number} }\={\small{}\Ax{\_\_}}{\small{}\Ax{@@}}{\small{}\Ax{\_\_}}\\
\\
\SPEC \=\SIdIndex{Nat} \Ax{=}\\
\> \TYPE \=\Id{Nat} \Ax{:}\Ax{:}\=\Ax{=} \Ax{0} \AltBar{} \Id{suc}(\Id{Nat})\\
\> \OPS \=\Ax{\_\_}\Ax{+}\Ax{\_\_} \Ax{:} \=\Id{Nat} \Ax{\times} \Id{Nat} \Ax{\rightarrow} \Id{Nat};\\
\>\> \Ax{\_\_}\Ax{*}\Ax{\_\_} \Ax{:} \=\Id{Nat} \Ax{\times} \Id{Nat} \Ax{\rightarrow} \Id{Nat}\\
\> \PREDS \=\Ax{\_\_}\Ax{\leq}\Ax{\_\_}, \Ax{\_\_}\Ax{<}\Ax{\_\_}, \Ax{\_\_}\Ax{>}\Ax{\_\_}, \Ax{\_\_}\Ax{\geq}\Ax{\_\_}, \Ax{\_\_}\Ax{==}\Ax{\_\_} \\
\>\> \Ax{:} \=\Id{Nat} \Ax{\times} \Id{Nat}\\
\> \Ax{\forall} \=\Id{n}, \Id{m} \Ax{:} \Id{Nat} \\
\> \Ax{\bullet} \=\Ax{0} \Ax{+} \Id{n} \Ax{=} \Id{n}\\
\> \Ax{\bullet} \=\Id{suc}(\Id{n}) \Ax{+} \Id{m} \Ax{=} \Id{suc}(\=\Id{n} \Ax{+} \Id{m})\\
\> \Ax{\bullet} \=\Ax{0} \Ax{*} \Id{n} \Ax{=} \Ax{0}\\
\> \Ax{\bullet} \=\Id{suc}(\Id{n}) \Ax{*} \Id{m} \Ax{=} \=\Id{m} \Ax{+} \=\Id{n} \Ax{*} \Id{m}\\
\> \Ax{\bullet} \=\Ax{0} \Ax{\leq} \Id{n}\\
\> \Ax{\bullet} \Ax{\neg} \=\Id{suc}(\Id{n}) \Ax{\leq} \Ax{0}\\
\> \Ax{\bullet} \=\Id{suc}(\Id{m}) \Ax{\leq} \Id{suc}(\Id{n}) \Ax{\Leftrightarrow} \=\Id{m} \Ax{\leq} \Id{n}\\
\> \Ax{\bullet} \=\Id{m} \Ax{\geq} \Id{n} \Ax{\Leftrightarrow} \=\Id{n} \Ax{\leq} \Id{m}\\
\> \Ax{\bullet} \=\Id{m} \Ax{<} \Id{n} \Ax{\Leftrightarrow} \=\Id{m} \Ax{\leq} \Id{n} \Ax{\wedge} \Ax{\neg} \=\Id{m} \Ax{=} \Id{n}\\
\> \Ax{\bullet} \=\Id{m} \Ax{>} \Id{n} \Ax{\Leftrightarrow} \=\Id{n} \Ax{<} \Id{m}\\
\> \Ax{\bullet} \=\Id{m} \Ax{==} \Id{n} \Ax{\Leftrightarrow} \=\Id{m} \Ax{=} \Id{n}\\
\THEN \={\small{}\KW{\%}\KW{def}}\\
\> {\small{}\KW{\%\%} Operations to represent natural numbers with digits\Ax{:}}\\
\> \OPS \=\Ax{1} \Ax{:} \Id{Nat} \Ax{=} \Id{suc}(\Ax{0});\\
\>\> \Ax{2} \Ax{:} \Id{Nat} \Ax{=} \Id{suc}(\Ax{1});\\
\>\> \Ax{3} \Ax{:} \Id{Nat} \Ax{=} \Id{suc}(\Ax{2});\\
\>\> \Ax{4} \Ax{:} \Id{Nat} \Ax{=} \Id{suc}(\Ax{3});\\
\>\> \Ax{5} \Ax{:} \Id{Nat} \Ax{=} \Id{suc}(\Ax{4});\\
\>\> \Ax{6} \Ax{:} \Id{Nat} \Ax{=} \Id{suc}(\Ax{5});\\
\>\> \Ax{7} \Ax{:} \Id{Nat} \Ax{=} \Id{suc}(\Ax{6});\\
\>\> \Ax{8} \Ax{:} \Id{Nat} \Ax{=} \Id{suc}(\Ax{7});\\
\>\> \Ax{9} \Ax{:} \Id{Nat} \Ax{=} \Id{suc}(\Ax{8});\\
\>\> \Ax{\_\_}\Ax{@@}\Ax{\_\_}\=(\=\Id{m} \Ax{:} \Id{Nat}; \=\Id{n} \Ax{:} \Id{Nat}) \Ax{:} \Id{Nat} \Ax{=} \=\Id{m} \Ax{*} \Id{suc}(\Ax{9}) \Ax{+} \Id{n} \\
\>\> \`{\small{}\KW{\%}(decimal\Ax{\_}def)\KW{\%}}\\
\THEN \=\OPS \=\Ax{\_\_}\Ax{+}\Ax{\_\_} \Ax{:} \=\Id{Nat} \Ax{\times} \Id{Nat} \Ax{\rightarrow} \Id{Nat}, \Id{assoc}, \Id{comm}, \Id{unit} \Ax{0};\\
\>\> \Ax{\_\_}\Ax{*}\Ax{\_\_} \Ax{:} \=\Id{Nat} \Ax{\times} \Id{Nat} \Ax{\rightarrow} \Id{Nat}, \Id{assoc}, \Id{comm}, \Id{unit} \Ax{1}\\
\> \Ax{\forall} \=\Id{r}, \Id{s}, \Id{t} \Ax{:} \Id{Nat} \\
\> \Ax{\bullet} \=(\=\Id{r} \Ax{+} \Id{s}) \Ax{*} \Id{t} \Ax{=} \=\Id{r} \Ax{*} \Id{t} \Ax{+} \=\Id{s} \Ax{*} \Id{t}\\
\> \Ax{\bullet} \=\Id{t} \Ax{*} (\=\Id{r} \Ax{+} \Id{s}) \Ax{=} \=\Id{t} \Ax{*} \Id{r} \Ax{+} \=\Id{t} \Ax{*} \Id{s}\\
\KW{end}\\
\\
\SPEC \=\SIdIndex{GenerateSet}[\KW{sort} \Id{Elem}] \Ax{=} {\small{}\KW{\%}\KW{mono}}\\
\> \KW{generated} \KW{type} \=\Id{Set}[\Id{Elem}] \Ax{:}\Ax{:}\=\Ax{=} \=\Ax{\{}\Ax{\}} \AltBar{} \Ax{\_\_}\Ax{+}\Ax{\_\_}(\=\Id{Set}[\Id{Elem}]; \Id{Elem})\\
\> \PRED \=\Ax{\_\_}\Id{eps}\Ax{\_\_} \Ax{:} \=\Id{Elem} \Ax{\times} \=\Id{Set}[\Id{Elem}]\\
\> \Ax{\forall} \Id{x}, \Id{y} \Ax{:} \Id{Elem}; \=\Id{M}, \Id{N} \Ax{:} \=\Id{Set}[\Id{Elem}] \\
\> \Ax{\bullet} \Ax{\neg} \=\Id{x} \Id{eps} \=\Ax{\{}\Ax{\}} \`{\small{}\KW{\%}(elemOf\Ax{\_}empty\Ax{\_}Set)\KW{\%}}\\
\> \Ax{\bullet} \=\Id{x} \Id{eps} \=\Id{M} \Ax{+} \Id{y} \Ax{\Leftrightarrow} \=\Id{x} \Ax{=} \Id{y} \Ax{\vee} \=\Id{x} \Id{eps} \Id{M} \\
\> \`{\small{}\KW{\%}(elemOf\Ax{\_}NonEmpty\Ax{\_}Set)\KW{\%}}\\
\> \Ax{\bullet} \=\Id{M} \Ax{=} \Id{N} \Ax{\Leftrightarrow} \=\Ax{\forall} \Id{x} \Ax{:} \Id{Elem} \Ax{\bullet} \=\Id{x} \Id{eps} \Id{M} \Ax{\Leftrightarrow} \=\Id{x} \Id{eps} \Id{N} \\
\> \`{\small{}\KW{\%}(equality\Ax{\_}Set)\KW{\%}}\\
\KW{end}\\
\\
\SPEC \=\SIdIndex{Set}[\KW{sort} \Id{Elem}] \KW{given} \SId{Nat} \Ax{=} {\small{}\KW{\%}\KW{mono}}\\
\> \SId{GenerateSet}[\KW{sort} \Id{Elem}]\\
\THEN \={\small{}\KW{\%}\KW{def}}\\
\> \PREDS \=\Id{isNonEmpty} \Ax{:} \=\Id{Set}[\Id{Elem}];\\
\>\> \Ax{\_\_}\Id{isSubsetOf}\Ax{\_\_} \Ax{:} \=\Id{Set}[\Id{Elem}] \Ax{\times} \=\Id{Set}[\Id{Elem}]\\
\> \OPS \=\Ax{\{}\Ax{\_\_}\Ax{\}} \Ax{:} \=\Id{Elem} \Ax{\rightarrow} \=\Id{Set}[\Id{Elem}];\\
\>\> \Ax{\#}\Ax{\_\_} \Ax{:} \=\Id{Set}[\Id{Elem}] \Ax{\rightarrow} \Id{Nat};\\
\>\> \Ax{\_\_}\Ax{+}\Ax{\_\_} \Ax{:} \=\Id{Elem} \Ax{\times} \Id{Set}[\Id{Elem}] \Ax{\rightarrow} \=\Id{Set}[\Id{Elem}];\\
\>\> \Ax{\_\_}\Ax{-}\Ax{\_\_} \Ax{:} \=\Id{Set}[\Id{Elem}] \Ax{\times} \Id{Elem} \Ax{\rightarrow} \=\Id{Set}[\Id{Elem}];\\
\>\> \Ax{\_\_}\Id{intersection}\Ax{\_\_}, \Ax{\_\_}\Id{union}\Ax{\_\_}, \Ax{\_\_}\Ax{-}\Ax{\_\_}, \\
\>\> \Ax{\_\_}\Id{symDiff}\Ax{\_\_} \Ax{:} \=\Id{Set}[\Id{Elem}] \Ax{\times} \Id{Set}[\Id{Elem}] \Ax{\rightarrow} \=\Id{Set}[\Id{Elem}]\\
\KW{end}\\
\\
\SPEC \=\SIdIndex{EvtNameSet} \Ax{=}\\
\> \SId{Set}[\KW{sort} \Id{EvtName}] \KW{with} \=\Id{Set}[\Id{EvtName}] \Ax{\mapsto} \Id{EvtNameSet}\\
\KW{end}\\
\\
\SPEC \=\SIdIndex{Trans} \Ax{=}\\
\> \SId{Nat}\\
\THEN \SId{EvtNameSet}\\
\THEN \=\SORTS \=\Id{Conf}, \Id{Ctrl}, \Id{Evt}, \Id{EvtName}, \Id{EvtNameSet}, \Id{Nat}\\
\> \OP \=\Id{cnt} \Ax{:} \=\Id{Conf} \Ax{\rightarrow} \Id{Nat}\\
\> \OP \=\Id{conf} \Ax{:} \=\Id{Ctrl} \Ax{\times} \Id{Nat} \Ax{\rightarrow} \Id{Conf}\\
\> \OP \=\Id{ctrl} \Ax{:} \=\Id{Conf} \Ax{\rightarrow} \Id{Ctrl}\\
\> \OP \=\Id{evtName} \Ax{:} \=\Id{Evt} \Ax{\rightarrow} \Id{EvtName}\\
\> \OP \=\Id{evtName\Ax{\_}inc} \Ax{:} \Id{EvtName}\\
\> \OP \=\Id{evtName\Ax{\_}reset} \Ax{:} \Id{EvtName}\\
\> \OP \=\Id{evt\Ax{\_}inc} \Ax{:} \=\Id{Nat} \Ax{\rightarrow} \Id{Evt}\\
\> \OP \=\Id{evt\Ax{\_}reset} \Ax{:} \Id{Evt}\\
\> \OP \=\Id{s1} \Ax{:} \Id{Ctrl}\\
\> \OP \=\Id{s2} \Ax{:} \Id{Ctrl}\\
\> \PRED \=\Id{init} \Ax{:} \Id{Conf}\\
\> \PRED \=\Id{reachable2} \Ax{:} \=\Id{EvtNameSet} \Ax{\times} \Id{Conf}\\
\> \PRED \=\Id{trans} \Ax{:} \=\Id{Conf} \Ax{\times} \Id{Evt} \Ax{\times} \Id{Conf}\\
\> \Ax{\forall} \=\Id{g} \Ax{:} \Id{Conf} \\
\> \Ax{\bullet} \=\Id{init}(\Id{g}) \Ax{\Leftrightarrow} \=\Id{ctrl}(\Id{g}) \Ax{=} \Id{s1} \Ax{\wedge} \=\Id{cnt}(\Id{g}) \Ax{==} (\=\Id{op} \Ax{0} \Ax{:} \Id{Nat}) \`{\small{}\KW{\%}(init)\KW{\%}}\\
\> \\
\> {\small{}\KW{\%\%} free}\\
\> \KW{generated} \KW{types} \\
\> \Id{Evt} \Ax{:}\Ax{:}\=\Ax{=} \Id{evt\Ax{\_}inc}(\Id{Nat}) \AltBar{} \Id{evt\Ax{\_}reset};\\
\> \Id{EvtName} \Ax{:}\Ax{:}\=\Ax{=} \Id{evtName\Ax{\_}inc} \AltBar{} \Id{evtName\Ax{\_}reset};\\
\> \Id{Conf} \Ax{:}\Ax{:}\Ax{=} \=\Id{conf}(\=\Id{Ctrl}; \Id{Nat}) \`{\small{}\KW{\%}(free\Ax{\_}types)\KW{\%}}\\
\> \Ax{\forall} \=\Id{g'} \Ax{:} \Id{Conf} \\
\> \Ax{\bullet} \=\Id{ctrl}(\Id{g'}) \Ax{=} \Id{s1} \\
\>\> \Ax{\wedge} \Id{reachable2}(\=(\=\Id{evtName\Ax{\_}inc} \\
\>\>\>\> \Ax{+} \=(\=\Id{evtName\Ax{\_}reset} \Ax{+} (\=\Id{op} \Ax{\{}\Ax{\}} \Ax{:} \Id{EvtNameSet})) \Ax{:} \\
\>\>\>\>\> \Id{EvtNameSet}) \\
\>\>\> \Ax{:} \\
\>\>\> \Id{EvtNameSet}, \\
\>\>\> \Id{g'} \Ax{:} \Id{Conf}) \\
\>\> \Ax{\Rightarrow} \=\Ax{\forall} \Id{k} \Ax{:} \Id{Nat} \\
\>\>\> \Ax{\bullet} \=(\=\Id{cnt}(\Id{g'}) \Ax{+} \=\Id{k} \Ax{:} \Id{Nat}) \Ax{:} \Id{Nat} \Ax{<} (\=\Id{op} \Ax{4} \Ax{:} \Id{Nat}) \\
\>\>\>\> \Ax{\Rightarrow} \=\Ax{\exists} \Id{g''} \Ax{:} \Id{Conf} \\
\>\>\>\>\> \Ax{\bullet} \=(\=\Id{trans}(\=\Id{g'}, \Id{evt\Ax{\_}inc}(\Id{k}), \Id{g''}) \\
\>\>\>\>\>\>\> \Ax{\wedge} \=\Id{cnt}(\Id{g''}) \Ax{==} \=(\=\Id{cnt}(\Id{g'}) \Ax{+} \=\Id{k} \Ax{:} \Id{Nat}) \Ax{:} \Id{Nat}) \\
\>\>\>\>\>\> \Ax{\wedge} \=\Id{s1} \Ax{=} \Id{ctrl}(\Id{g''}) \\
\> \`{\small{}\KW{\%}(machine)\KW{\%}}\\
\> \Ax{\forall} \=\Id{g'} \Ax{:} \Id{Conf} \\
\> \Ax{\bullet} \=\Id{ctrl}(\Id{g'}) \Ax{=} \Id{s1} \\
\>\> \Ax{\wedge} \Id{reachable2}(\=(\=\Id{evtName\Ax{\_}inc} \\
\>\>\>\> \Ax{+} \=(\=\Id{evtName\Ax{\_}reset} \Ax{+} (\=\Id{op} \Ax{\{}\Ax{\}} \Ax{:} \Id{EvtNameSet})) \Ax{:} \\
\>\>\>\>\> \Id{EvtNameSet}) \\
\>\>\> \Ax{:} \\
\>\>\> \Id{EvtNameSet}, \\
\>\>\> \Id{g'} \Ax{:} \Id{Conf}) \\
\>\> \Ax{\Rightarrow} \=\Ax{\forall} \Id{k} \Ax{:} \Id{Nat} \\
\>\>\> \Ax{\bullet} \=(\=\Id{cnt}(\Id{g'}) \Ax{+} \=\Id{k} \Ax{:} \Id{Nat}) \Ax{:} \Id{Nat} \Ax{==} (\=\Id{op} \Ax{4} \Ax{:} \Id{Nat}) \\
\>\>\>\> \Ax{\Rightarrow} \=\Ax{\exists} \Id{g''} \Ax{:} \Id{Conf} \\
\>\>\>\>\> \Ax{\bullet} \=(\=\Id{trans}(\=\Id{g'}, \Id{evt\Ax{\_}inc}(\Id{k}), \Id{g''}) \\
\>\>\>\>\>\>\> \Ax{\wedge} \=\Id{cnt}(\Id{g''}) \Ax{==} \=(\=\Id{cnt}(\Id{g'}) \Ax{+} \=\Id{k} \Ax{:} \Id{Nat}) \Ax{:} \Id{Nat}) \\
\>\>\>\>\>\> \Ax{\wedge} \=\Id{s2} \Ax{=} \Id{ctrl}(\Id{g''}) \\
\> \`{\small{}\KW{\%}(machine\Ax{\_}1)\KW{\%}}\\
\> \Ax{\forall} \=\Id{g'} \Ax{:} \Id{Conf} \\
\> \Ax{\bullet} \=\Id{ctrl}(\Id{g'}) \Ax{=} \Id{s2} \\
\>\> \Ax{\wedge} \Id{reachable2}(\=(\=\Id{evtName\Ax{\_}inc} \\
\>\>\>\> \Ax{+} \=(\=\Id{evtName\Ax{\_}reset} \Ax{+} (\=\Id{op} \Ax{\{}\Ax{\}} \Ax{:} \Id{EvtNameSet})) \Ax{:} \\
\>\>\>\>\> \Id{EvtNameSet}) \\
\>\>\> \Ax{:} \\
\>\>\> \Id{EvtNameSet}, \\
\>\>\> \Id{g'} \Ax{:} \Id{Conf}) \\
\>\> \Ax{\Rightarrow} \=\Id{cnt}(\Id{g'}) \Ax{==} (\=\Id{op} \Ax{4} \Ax{:} \Id{Nat}) \\
\>\>\> \Ax{\Rightarrow} \=\Ax{\exists} \Id{g''} \Ax{:} \Id{Conf} \\
\>\>\>\> \Ax{\bullet} \=(\=\Id{trans}(\=\Id{g'}, \Id{evt\Ax{\_}reset}, \Id{g''}) \Ax{\wedge} \=\Id{cnt}(\Id{g''}) \Ax{==} (\=\Id{op} \Ax{0} \Ax{:} \Id{Nat})) \\
\>\>\>\>\> \Ax{\wedge} \=\Id{s1} \Ax{=} \Id{ctrl}(\Id{g''}) \\
\> \`{\small{}\KW{\%}(machine\Ax{\_}2)\KW{\%}}\\
\> \Ax{\forall} \=\Id{g'} \Ax{:} \Id{Conf} \\
\> \Ax{\bullet} \=\Id{ctrl}(\Id{g'}) \Ax{=} \Id{s1} \\
\>\> \Ax{\wedge} \Id{reachable2}(\=(\=\Id{evtName\Ax{\_}inc} \\
\>\>\>\> \Ax{+} \=(\=\Id{evtName\Ax{\_}reset} \Ax{+} (\=\Id{op} \Ax{\{}\Ax{\}} \Ax{:} \Id{EvtNameSet})) \Ax{:} \\
\>\>\>\>\> \Id{EvtNameSet}) \\
\>\>\> \Ax{:} \\
\>\>\> \Id{EvtNameSet}, \\
\>\>\> \Id{g'} \Ax{:} \Id{Conf}) \\
\>\> \Ax{\Rightarrow} \=(\=(\=(\=(\Ax{\neg} \=\Ax{\exists} \Id{k} \Ax{:} \Id{Nat} \\
\>\>\>\>\>\>\> \Ax{\bullet} \=\Ax{\exists} \Id{g''} \Ax{:} \Id{Conf} \\
\>\>\>\>\>\>\>\> \Ax{\bullet} \=(\=\Id{trans}(\=\Id{g'}, \Id{evt\Ax{\_}inc}(\Id{k}), \Id{g''}) \\
\>\>\>\>\>\>\>\>\>\> \Ax{\wedge} (\=\Id{true} \\
\>\>\>\>\>\>\>\>\>\>\> \Ax{\wedge} \Ax{\neg} (\=(\=(\Id{cnt}(\Id{g'}) \Ax{+} \Id{k} \Ax{:} \Id{Nat}) \Ax{:} \Id{Nat} \Ax{<} (\Id{op} \Ax{4} \Ax{:} \Id{Nat}) \\
\>\>\>\>\>\>\>\>\>\>\>\>\> \Ax{\wedge} \Id{cnt}(\Id{g''}) \Ax{==} (\Id{cnt}(\Id{g'}) \Ax{+} \Id{k} \Ax{:} \Id{Nat}) \Ax{:} \Id{Nat}) \\
\>\>\>\>\>\>\>\>\>\>\>\> \Ax{\vee} (\=(\Id{cnt}(\Id{g'}) \Ax{+} \Id{k} \Ax{:} \Id{Nat}) \Ax{:} \Id{Nat} \\
\>\>\>\>\>\>\>\>\>\>\>\>\> \Ax{==} (\Id{op} \Ax{4} \Ax{:} \Id{Nat}) \\
\>\>\>\>\>\>\>\>\>\>\>\>\> \Ax{\wedge} \Id{cnt}(\Id{g''}) \\
\>\>\>\>\>\>\>\>\>\>\>\>\> \Ax{==} (\Id{cnt}(\Id{g'}) \Ax{+} \Id{k} \Ax{:} \Id{Nat}) \Ax{:} \Id{Nat})))) \\
\>\>\>\>\>\>\>\>\> \Ax{\wedge} \Id{true}) \\
\>\>\>\>\>\> \Ax{\wedge} \Ax{\neg} \=\Ax{\exists} \Id{k} \Ax{:} \Id{Nat} \\
\>\>\>\>\>\>\> \Ax{\bullet} \=\Ax{\exists} \Id{g''} \Ax{:} \Id{Conf} \\
\>\>\>\>\>\>\>\> \Ax{\bullet} \=(\=\Id{trans}(\=\Id{g'}, \Id{evt\Ax{\_}inc}(\Id{k}), \Id{g''}) \\
\>\>\>\>\>\>\>\>\>\> \Ax{\wedge} (\=(\=(\=\Id{cnt}(\Id{g'}) \Ax{+} \Id{k} \Ax{:} \Id{Nat}) \Ax{:} \Id{Nat} \Ax{<} (\Id{op} \Ax{4} \Ax{:} \Id{Nat}) \\
\>\>\>\>\>\>\>\>\>\>\>\> \Ax{\wedge} \=\Id{cnt}(\Id{g''}) \Ax{==} (\Id{cnt}(\Id{g'}) \Ax{+} \Id{k} \Ax{:} \Id{Nat}) \Ax{:} \Id{Nat}) \\
\>\>\>\>\>\>\>\>\>\>\> \Ax{\wedge} \Ax{\neg} (\=(\=\Id{cnt}(\Id{g'}) \Ax{+} \Id{k} \Ax{:} \Id{Nat}) \Ax{:} \Id{Nat} \Ax{==} (\Id{op} \Ax{4} \Ax{:} \Id{Nat}) \\
\>\>\>\>\>\>\>\>\>\>\>\> \Ax{\wedge} \=\Id{cnt}(\Id{g''}) \Ax{==} (\Id{cnt}(\Id{g'}) \Ax{+} \Id{k} \Ax{:} \Id{Nat}) \Ax{:} \Id{Nat}))) \\
\>\>\>\>\>\>\>\>\> \Ax{\wedge} \Ax{\neg} \=\Id{s1} \Ax{=} \Id{ctrl}(\Id{g''})) \\
\>\>\>\>\> \Ax{\wedge} \Ax{\neg} \=\Ax{\exists} \Id{k} \Ax{:} \Id{Nat} \\
\>\>\>\>\>\> \Ax{\bullet} \=\Ax{\exists} \Id{g''} \Ax{:} \Id{Conf} \\
\>\>\>\>\>\>\> \Ax{\bullet} \=(\=\Id{trans}(\=\Id{g'}, \Id{evt\Ax{\_}inc}(\Id{k}), \Id{g''}) \\
\>\>\>\>\>\>\>\>\> \Ax{\wedge} (\=(\=(\=\Id{cnt}(\Id{g'}) \Ax{+} \=\Id{k} \Ax{:} \Id{Nat}) \Ax{:} \Id{Nat} \Ax{==} (\Id{op} \Ax{4} \Ax{:} \Id{Nat}) \\
\>\>\>\>\>\>\>\>\>\>\> \Ax{\wedge} \=\Id{cnt}(\Id{g''}) \Ax{==} \=(\Id{cnt}(\Id{g'}) \Ax{+} \Id{k} \Ax{:} \Id{Nat}) \Ax{:} \Id{Nat}) \\
\>\>\>\>\>\>\>\>\>\> \Ax{\wedge} \Ax{\neg} (\=(\=\Id{cnt}(\Id{g'}) \Ax{+} \=\Id{k} \Ax{:} \Id{Nat}) \Ax{:} \Id{Nat} \Ax{<} (\Id{op} \Ax{4} \Ax{:} \Id{Nat}) \\
\>\>\>\>\>\>\>\>\>\>\> \Ax{\wedge} \=\Id{cnt}(\Id{g''}) \Ax{==} \=(\Id{cnt}(\Id{g'}) \Ax{+} \Id{k} \Ax{:} \Id{Nat}) \Ax{:} \Id{Nat}))) \\
\>\>\>\>\>\>\>\> \Ax{\wedge} \Ax{\neg} \=\Id{s2} \Ax{=} \Id{ctrl}(\Id{g''})) \\
\>\>\>\> \Ax{\wedge} \Ax{\neg} \=\Ax{\exists} \Id{k} \Ax{:} \Id{Nat} \\
\>\>\>\>\> \Ax{\bullet} \=\Ax{\exists} \Id{g''} \Ax{:} \Id{Conf} \\
\>\>\>\>\>\> \Ax{\bullet} \=(\=\Id{trans}(\=\Id{g'}, \Id{evt\Ax{\_}inc}(\Id{k}), \Id{g''}) \\
\>\>\>\>\>\>\>\> \Ax{\wedge} (\=(\=(\=(\=\Id{cnt}(\Id{g'}) \Ax{+} \=\Id{k} \Ax{:} \Id{Nat}) \Ax{:} \Id{Nat} \Ax{<} (\Id{op} \Ax{4} \Ax{:} \Id{Nat}) \\
\>\>\>\>\>\>\>\>\>\>\> \Ax{\wedge} \=\Id{cnt}(\Id{g''}) \Ax{==} \=(\Id{cnt}(\Id{g'}) \Ax{+} \Id{k} \Ax{:} \Id{Nat}) \Ax{:} \Id{Nat}) \\
\>\>\>\>\>\>\>\>\>\> \Ax{\wedge} (\=(\=\Id{cnt}(\Id{g'}) \Ax{+} \=\Id{k} \Ax{:} \Id{Nat}) \Ax{:} \Id{Nat} \Ax{==} (\Id{op} \Ax{4} \Ax{:} \Id{Nat}) \\
\>\>\>\>\>\>\>\>\>\>\> \Ax{\wedge} \=\Id{cnt}(\Id{g''}) \Ax{==} \=(\Id{cnt}(\Id{g'}) \Ax{+} \Id{k} \Ax{:} \Id{Nat}) \Ax{:} \Id{Nat})) \\
\>\>\>\>\>\>\>\>\> \Ax{\wedge} \Ax{\neg} \Id{false})) \\
\>\>\>\>\>\>\> \Ax{\wedge} (\=\Ax{\neg} \=\Id{s1} \Ax{=} \Id{ctrl}(\Id{g''}) \Ax{\wedge} \Ax{\neg} \=\Id{s2} \Ax{=} \Id{ctrl}(\Id{g''}))) \\
\>\>\> \Ax{\wedge} \Ax{\neg} \=\Ax{\exists} \Id{g''} \Ax{:} \Id{Conf} \\
\>\>\>\> \Ax{\bullet} \=(\=\Id{trans}(\=\Id{g'}, \Id{evt\Ax{\_}reset}, \Id{g''}) \Ax{\wedge} (\=\Id{true} \Ax{\wedge} \Ax{\neg} \Id{false})) \Ax{\wedge} \Id{true} \\
\> \`{\small{}\KW{\%}(machine\Ax{\_}3)\KW{\%}}\\
\> \Ax{\forall} \=\Id{g'} \Ax{:} \Id{Conf} \\
\> \Ax{\bullet} \=\Id{ctrl}(\Id{g'}) \Ax{=} \Id{s2} \\
\>\> \Ax{\wedge} \Id{reachable2}(\=(\=\Id{evtName\Ax{\_}inc} \\
\>\>\>\> \Ax{+} \=(\=\Id{evtName\Ax{\_}reset} \Ax{+} (\=\Id{op} \Ax{\{}\Ax{\}} \Ax{:} \Id{EvtNameSet})) \Ax{:} \\
\>\>\>\>\> \Id{EvtNameSet}) \\
\>\>\> \Ax{:} \\
\>\>\> \Id{EvtNameSet}, \\
\>\>\> \Id{g'} \Ax{:} \Id{Conf}) \\
\>\> \Ax{\Rightarrow} \=(\=(\Ax{\neg} \=\Ax{\exists} \Id{k} \Ax{:} \Id{Nat} \\
\>\>\>\>\> \Ax{\bullet} \=\Ax{\exists} \Id{g''} \Ax{:} \Id{Conf} \\
\>\>\>\>\>\> \Ax{\bullet} \=(\=\Id{trans}(\=\Id{g'}, \Id{evt\Ax{\_}inc}(\Id{k}), \Id{g''}) \Ax{\wedge} (\=\Id{true} \Ax{\wedge} \Ax{\neg} \Id{false})) \Ax{\wedge} \Id{true}) \\
\>\>\>\> \Ax{\wedge} \Ax{\neg} \=\Ax{\exists} \Id{g''} \Ax{:} \Id{Conf} \\
\>\>\>\>\> \Ax{\bullet} \=(\=\Id{trans}(\=\Id{g'}, \Id{evt\Ax{\_}reset}, \Id{g''}) \\
\>\>\>\>\>\>\> \Ax{\wedge} (\=\Id{true} \\
\>\>\>\>\>\>\>\> \Ax{\wedge} \Ax{\neg} (\=\Id{cnt}(\Id{g'}) \Ax{==} (\=\Id{op} \Ax{4} \Ax{:} \Id{Nat}) \\
\>\>\>\>\>\>\>\>\> \Ax{\wedge} \=\Id{cnt}(\Id{g''}) \Ax{==} (\=\Id{op} \Ax{0} \Ax{:} \Id{Nat})))) \\
\>\>\>\>\>\> \Ax{\wedge} \Id{true}) \\
\>\>\> \Ax{\wedge} \Ax{\neg} \=\Ax{\exists} \Id{g''} \Ax{:} \Id{Conf} \\
\>\>\>\> \Ax{\bullet} \=(\=\Id{trans}(\=\Id{g'}, \Id{evt\Ax{\_}reset}, \Id{g''}) \\
\>\>\>\>\>\> \Ax{\wedge} (\=(\=\Id{cnt}(\Id{g'}) \Ax{==} (\=\Id{op} \Ax{4} \Ax{:} \Id{Nat}) \\
\>\>\>\>\>\>\>\> \Ax{\wedge} \=\Id{cnt}(\Id{g''}) \Ax{==} (\=\Id{op} \Ax{0} \Ax{:} \Id{Nat})) \\
\>\>\>\>\>\>\> \Ax{\wedge} \Ax{\neg} \Id{false})) \\
\>\>\>\>\> \Ax{\wedge} \Ax{\neg} \=\Id{s1} \Ax{=} \Id{ctrl}(\Id{g''}) \\
\> \`{\small{}\KW{\%}(machine\Ax{\_}4)\KW{\%}}\\
\> \Ax{\bullet} \Ax{\neg} \=\Ax{\forall} \Id{g'} \Ax{:} \Id{Conf} \\
\>\> \Ax{\bullet} \=\Id{ctrl}(\Id{g'}) \Ax{=} \Id{s1} \\
\>\>\> \Ax{\wedge} \Id{reachable2}(\=(\=\Id{evtName\Ax{\_}inc} \\
\>\>\>\>\> \Ax{+} \=(\=\Id{evtName\Ax{\_}reset} \\
\>\>\>\>\>\>\> \Ax{+} (\=\Id{op} \Ax{\{}\Ax{\}} \Ax{:} \Id{EvtNameSet})) \\
\>\>\>\>\>\> \Ax{:} \\
\>\>\>\>\>\> \Id{EvtNameSet}) \\
\>\>\>\> \Ax{:} \\
\>\>\>\> \Id{EvtNameSet}, \\
\>\>\>\> \Id{g'} \Ax{:} \Id{Conf}) \\
\>\>\> \Ax{\Rightarrow} \=\Id{s2} \Ax{=} \Id{ctrl}(\Id{g'}) \\
\> \`{\small{}\KW{\%}(machine\Ax{\_}5)\KW{\%}}\\
\> \Ax{\bullet} \Ax{\neg} \=\Ax{\forall} \Id{g'} \Ax{:} \Id{Conf} \\
\>\> \Ax{\bullet} \=\Id{ctrl}(\Id{g'}) \Ax{=} \Id{s2} \\
\>\>\> \Ax{\wedge} \Id{reachable2}(\=(\=\Id{evtName\Ax{\_}inc} \\
\>\>\>\>\> \Ax{+} \=(\=\Id{evtName\Ax{\_}reset} \\
\>\>\>\>\>\>\> \Ax{+} (\=\Id{op} \Ax{\{}\Ax{\}} \Ax{:} \Id{EvtNameSet})) \\
\>\>\>\>\>\> \Ax{:} \\
\>\>\>\>\>\> \Id{EvtNameSet}) \\
\>\>\>\> \Ax{:} \\
\>\>\>\> \Id{EvtNameSet}, \\
\>\>\>\> \Id{g'} \Ax{:} \Id{Conf}) \\
\>\>\> \Ax{\Rightarrow} \=\Id{s1} \Ax{=} \Id{ctrl}(\Id{g'}) \\
\> \`{\small{}\KW{\%}(machine\Ax{\_}6)\KW{\%}}\\
\> \Ax{\forall} \=\Id{k} \Ax{:} \Id{Nat} \\
\> \Ax{\bullet} \=\Id{evtName}(\Id{evt\Ax{\_}inc}(\Id{k})) \Ax{=} \Id{evtName\Ax{\_}inc} \`{\small{}\KW{\%}(evtEqs)\KW{\%}}\\
\> \Ax{\bullet} \=\Id{evtName}(\Id{evt\Ax{\_}reset}) \Ax{=} \Id{evtName\Ax{\_}reset} \`{\small{}\KW{\%}(evtEqs\Ax{\_}7)\KW{\%}}\\
\> \Ax{\forall} \Id{e} \Ax{:} \Id{Evt}; \=\Id{k}, \Id{l} \Ax{:} \Id{Nat} \\
\> \Ax{\bullet} \=\Id{e} \Ax{=} \Id{evt\Ax{\_}reset} \Ax{\vee} \=\Ax{\exists} \Id{k} \Ax{:} \Id{Nat} \Ax{\bullet} \=\Id{e} \Ax{=} \Id{evt\Ax{\_}inc}(\Id{k})\\
\> \Ax{\bullet} \Ax{\neg} \=\Id{evt\Ax{\_}reset} \Ax{=} \Id{evt\Ax{\_}inc}(\Id{k})\\
\> \Ax{\bullet} \=\Id{evt\Ax{\_}inc}(\Id{k}) \Ax{=} \Id{evt\Ax{\_}inc}(\Id{l}) \Ax{\Leftrightarrow} \=\Id{k} \Ax{=} \Id{l}\\
\> \Ax{\forall} \=\Id{g} \Ax{:} \Id{Conf} \Ax{\bullet} \=\Ax{\exists} \Id{c} \Ax{:} \Id{Ctrl}; \Id{k} \Ax{:} \Id{Nat} \Ax{\bullet} \=\Id{g} \Ax{=} \Id{conf}(\=\Id{c}, \Id{k})\\
\> \Ax{\forall} \Id{c}, \Id{d} \Ax{:} \Id{Ctrl}; \=\Id{k}, \Id{l} \Ax{:} \Id{Nat} \Ax{\bullet} \=\Id{conf}(\=\Id{c}, \Id{k}) \Ax{=} \Id{conf}(\=\Id{d}, \Id{l}) \Ax{\Leftrightarrow} \=\Id{c} \Ax{=} \Id{d} \Ax{\wedge} \=\Id{k} \Ax{=} \Id{l}\\
\KW{end}\\
\\
\SPEC \=\SIdIndex{Prove} \Ax{=}\\
\> \SId{Trans}\\
\THEN \=\OPS \=\Id{s1}, \Id{s2} \Ax{:} \Id{Ctrl}\\
\> \Ax{\bullet} \Ax{\neg} \=\Id{s1} \Ax{=} \Id{s2}\\
\THEN \=\OP \=\Id{allEvts} \Ax{:} \Id{EvtNameSet} \\
\>\> \Ax{=} \=\Id{evtName\Ax{\_}inc} \Ax{+} \=\Id{evtName\Ax{\_}reset} \Ax{+} \=\Ax{\{}\Ax{\}}\\
\> \Ax{\bullet} \=\Ax{\exists} \Id{g} \Ax{:} \Id{Conf} \Ax{\bullet} \Id{init}(\Id{g})\\
\> \Ax{\forall} \=\Id{g1}, \Id{g2} \Ax{:} \Id{Conf} \Ax{\bullet} \=\Id{init}(\Id{g1}) \Ax{\wedge} \Id{init}(\Id{g2}) \Ax{\Rightarrow} \=\Id{ctrl}(\Id{g1}) \Ax{=} \Id{ctrl}(\Id{g2})\\
\> \PRED \Id{invar}(\=\Id{g} \Ax{:} \Id{Conf}) \\
\> \Ax{\Leftrightarrow} \=(\=\Id{ctrl}(\Id{g}) \Ax{=} \Id{s1} \Ax{\wedge} \=\Id{cnt}(\Id{g}) \Ax{\leq} \Ax{4}) \\
\>\> \Ax{\vee} (\=\Id{ctrl}(\Id{g}) \Ax{=} \Id{s2} \Ax{\wedge} \=\Id{cnt}(\Id{g}) \Ax{\leq} \Ax{4});\\
\> \\
\> {\small{}\KW{\%\%} induction scheme for ''reachable'' predicate\Ax{,} instantiated for invar}\\
\> \Ax{\forall} \=\Id{es} \Ax{:} \Id{EvtNameSet} \\
\> \Ax{\bullet} \=(\=(\=\Ax{\forall} \Id{g} \Ax{:} \Id{Conf} \Ax{\bullet} \=\Id{init}(\Id{g}) \Ax{\Rightarrow} \Id{invar}(\Id{g})) \\
\>\>\> \Ax{\wedge} \=\Ax{\forall} \Id{g}, \Id{g'} \Ax{:} \Id{Conf}; \Id{e} \Ax{:} \Id{Evt} \\
\>\>\>\> \Ax{\bullet} \=(\=\Id{reachable2}(\=\Id{es}, \Id{g}) \Ax{\Rightarrow} \Id{invar}(\Id{g})) \Ax{\wedge} \Id{reachable2}(\=\Id{es}, \Id{g}) \\
\>\>\>\>\> \Ax{\wedge} \Id{trans}(\=\Id{g}, \Id{e}, \Id{g'}) \\
\>\>\>\>\> \Ax{\Rightarrow} \Id{invar}(\Id{g'})) \\
\>\> \Ax{\Rightarrow} \=\Ax{\forall} \Id{g} \Ax{:} \Id{Conf} \Ax{\bullet} \=\Id{reachable2}(\=\Id{es}, \Id{g}) \Ax{\Rightarrow} \Id{invar}(\Id{g}) \\
\> \`{\small{}\KW{\%}(InvarIsReachableInd)\KW{\%}}\\
\THEN \={\small{}\KW{\%}\KW{implies}}\\
\> \Ax{\forall} \Id{m}, \Id{n} \Ax{:} \Id{Nat}; \=\Id{g} \Ax{:} \Id{Conf} \\
\> \Ax{\bullet} \=\Id{m} \Ax{=} \Id{n} \Ax{\Leftrightarrow} \=\Id{m} \Ax{==} \Id{n} \`{\small{}\KW{\%}(thm\Ax{\_}eq)\KW{\%}}\\
\> \Ax{\bullet} \=\Id{init}(\Id{g}) \Ax{\Rightarrow} \=\Id{ctrl}(\Id{g}) \Ax{=} \Id{s1} \`{\small{}\KW{\%}(thm\Ax{\_}init\Ax{\_}ctrl)\KW{\%}}\\
\> \Ax{\bullet} \=\Id{init}(\Id{g}) \Ax{\Rightarrow} \=\Id{cnt}(\Id{g}) \Ax{=} \Ax{0} \`{\small{}\KW{\%}(thm\Ax{\_}init\Ax{\_}cnt)\KW{\%}}\\
\> \Ax{\bullet} \=\Ax{0} \Ax{*} \Id{m} \Ax{+} \Id{n} \Ax{=} \Id{n} \`{\small{}\KW{\%}(thm\Ax{\_}0at0)\KW{\%}}\\
\> \Ax{\bullet} \=\Ax{0} \Ax{*} \Ax{9} \Ax{+} \Id{n} \Ax{=} \Id{n} \`{\small{}\KW{\%}(thm\Ax{\_}0at0)\KW{\%}}\\
\> \Ax{\bullet} \=\Ax{0} \Ax{@@} \Id{n} \Ax{=} \Id{n} \`{\small{}\KW{\%}(thm\Ax{\_}0at0)\KW{\%}}\\
\> \\
\> {\small{}\KW{\%\%} case distinction lemmas\Ax{,} can be generated algorithmically}\\
\> \Ax{\forall} \Id{g}, \Id{g'} \Ax{:} \Id{Conf}; \Id{e} \Ax{:} \Id{Evt}; \=\Id{k} \Ax{:} \Id{Nat} \\
\> \Ax{\bullet} \=\Id{init}(\Id{g}) \Ax{\Rightarrow} \Id{invar}(\Id{g}) \`{\small{}\KW{\%}(InvarInit)\KW{\%}}\\
\> \Ax{\bullet} \=(\=\Id{reachable2}(\=\Id{allEvts}, \Id{g}) \Ax{\Rightarrow} \Id{invar}(\Id{g})) \Ax{\wedge} \Id{reachable2}(\=\Id{allEvts}, \Id{g}) \\
\>\> \Ax{\wedge} \Id{trans}(\=\Id{g}, \Id{e}, \Id{g'}) \Ax{\wedge} \=\Id{e} \Ax{=} \Id{evt\Ax{\_}reset} \\
\>\> \Ax{\Rightarrow} \Id{invar}(\Id{g'}) \\
\> \`{\small{}\KW{\%}(InvarReset)\KW{\%}}\\
\> \Ax{\bullet} \=(\=\Id{reachable2}(\=\Id{allEvts}, \Id{g}) \Ax{\Rightarrow} \Id{invar}(\Id{g})) \Ax{\wedge} \Id{reachable2}(\=\Id{allEvts}, \Id{g}) \\
\>\> \Ax{\wedge} \Id{trans}(\=\Id{g}, \Id{e}, \Id{g'}) \Ax{\wedge} \=\Id{e} \Ax{=} \Id{evt\Ax{\_}inc}(\Id{k}) \\
\>\> \Ax{\Rightarrow} \Id{invar}(\Id{g'}) \\
\> \`{\small{}\KW{\%}(InvarInc)\KW{\%}}\\
\> \Ax{\bullet} \=(\=\Id{reachable2}(\=\Id{allEvts}, \Id{g}) \Ax{\Rightarrow} \Id{invar}(\Id{g})) \Ax{\wedge} \Id{reachable2}(\=\Id{allEvts}, \Id{g}) \\
\>\> \Ax{\wedge} \Id{trans}(\=\Id{g}, \Id{e}, \Id{g'}) \Ax{\wedge} \=\Id{e} \Ax{=} \Id{evt\Ax{\_}inc}(\Id{k}) \Ax{\wedge} \=\Id{ctrl}(\Id{g}) \Ax{=} \Id{s1} \\
\>\> \Ax{\wedge} \=\Id{cnt}(\Id{g}) \Ax{+} \Id{k} \Ax{<} \Ax{4} \\
\>\> \Ax{\Rightarrow} \Id{invar}(\Id{g'}) \\
\> \`{\small{}\KW{\%}(InvarInc)\KW{\%}}\\
\> \Ax{\bullet} \=(\=\Id{reachable2}(\=\Id{allEvts}, \Id{g}) \Ax{\Rightarrow} \Id{invar}(\Id{g})) \Ax{\wedge} \Id{reachable2}(\=\Id{allEvts}, \Id{g}) \\
\>\> \Ax{\wedge} \Id{trans}(\=\Id{g}, \Id{e}, \Id{g'}) \Ax{\wedge} \=\Id{e} \Ax{=} \Id{evt\Ax{\_}inc}(\Id{k}) \Ax{\wedge} \=\Id{ctrl}(\Id{g}) \Ax{=} \Id{s1} \\
\>\> \Ax{\wedge} \=\Id{cnt}(\Id{g}) \Ax{+} \Id{k} \Ax{=} \Ax{4} \\
\>\> \Ax{\Rightarrow} \=\Id{ctrl}(\Id{g'}) \Ax{=} \Id{s2} \\
\> \`{\small{}\KW{\%}(InvarInc)\KW{\%}}\\
\> \Ax{\bullet} \=(\=\Id{reachable2}(\=\Id{allEvts}, \Id{g}) \Ax{\Rightarrow} \Id{invar}(\Id{g})) \Ax{\wedge} \Id{reachable2}(\=\Id{allEvts}, \Id{g}) \\
\>\> \Ax{\wedge} \Id{trans}(\=\Id{g}, \Id{e}, \Id{g'}) \Ax{\wedge} \=\Id{e} \Ax{=} \Id{evt\Ax{\_}inc}(\Id{k}) \Ax{\wedge} \=\Id{ctrl}(\Id{g}) \Ax{=} \Id{s1} \\
\>\> \Ax{\wedge} \=\Id{cnt}(\Id{g}) \Ax{+} \Id{k} \Ax{=} \Ax{4} \\
\>\> \Ax{\Rightarrow} \=\Id{cnt}(\Id{g'}) \Ax{=} \Ax{4} \\
\> \`{\small{}\KW{\%}(InvarInc)\KW{\%}}\\
\> \Ax{\bullet} \=(\=\Id{reachable2}(\=\Id{allEvts}, \Id{g}) \Ax{\Rightarrow} \Id{invar}(\Id{g})) \Ax{\wedge} \Id{reachable2}(\=\Id{allEvts}, \Id{g}) \\
\>\> \Ax{\wedge} \Id{trans}(\=\Id{g}, \Id{e}, \Id{g'}) \Ax{\wedge} \=\Id{e} \Ax{=} \Id{evt\Ax{\_}inc}(\Id{k}) \Ax{\wedge} \=\Id{ctrl}(\Id{g}) \Ax{=} \Id{s2} \\
\>\> \Ax{\wedge} \=\Id{cnt}(\Id{g}) \Ax{+} \Id{k} \Ax{=} \Ax{4} \\
\>\> \Ax{\Rightarrow} \Id{false} \\
\> \`{\small{}\KW{\%}(InvarInc)\KW{\%}}\\
\> \Ax{\bullet} \=(\=\Id{reachable2}(\=\Id{allEvts}, \Id{g}) \Ax{\Rightarrow} \Id{invar}(\Id{g})) \Ax{\wedge} \Id{reachable2}(\=\Id{allEvts}, \Id{g}) \\
\>\> \Ax{\wedge} \Id{trans}(\=\Id{g}, \Id{e}, \Id{g'}) \\
\>\> \Ax{\Rightarrow} \Id{invar}(\Id{g'}) \\
\> \`{\small{}\KW{\%}(InvarStep)\KW{\%}}\\
\> \Ax{\bullet} \=\Id{invar}(\Id{g}) \Ax{\Rightarrow} \=\Id{cnt}(\Id{g}) \Ax{\leq} \Ax{4} \`{\small{}\KW{\%}(InvarImpliesSafe)\KW{\%}}\\
\> \\
\> {\small{}\KW{\%\%} the safety theorem for our counter}\\
\> \Ax{\forall} \=\Id{g} \Ax{:} \Id{Conf} \=\Ax{\bullet} \=\Id{reachable2}(\=\Id{allEvts}, \Id{g}) \Ax{\Rightarrow} \=\Id{cnt}(\Id{g}) \Ax{\leq} \Ax{4} \`{\small{}\KW{\%}(Safe)\KW{\%}}\\
\KW{end}
\end{hetcasl}

\end{appendix}
\end{techreport}

\end{document}

